\providecommand{\algorithmname}{Algorithm}
  \theoremstyle{plain}
  \newtheorem{lem}{\protect\lemmaname}
  \theoremstyle{plain}
  \newtheorem{prop}{\protect\propositionname}
\theoremstyle{plain}
\newtheorem{thm}{\protect\theoremname}
\newcounter{MYtempeqncnt}
  \providecommand{\lemmaname}{Lemma}
  \providecommand{\propositionname}{Proposition}
\providecommand{\theoremname}{Theorem}
\begin{document}

\title{Distributed Optimization of Hierarchical Small Cell Networks: A GNEP
Framework}

\author{Jiaheng~Wang,~Wei~Guan,~Yongming~Huang, Robert~Schober, Xiaohu~You\thanks{J. Wang, W. Guan, Y. Huang, and X. You are with the National Mobile
Communications Research Laboratory, Southeast University, Nanjing,
China (email: \{jhwang, weiguan, huangym, xhyu\}@seu.edu.cn).

R. Schober is with the Institute for Digital Communications, Friedrich-Alexander-University
Erlangen-Nürnberg (FAU), Erlangen, Germany (email: robert.schober@fau.de). }}
\maketitle
\begin{abstract}
Deployment of small cell base stations (SBSs) overlaying the coverage
area of a macrocell BS (MBS) results in a two-tier hierarchical small
cell network. Cross-tier and inter-tier interference not only jeopardize
primary macrocell communication but also limit the spectral efficiency
of small cell communication. This paper focuses on distributed interference
management for downlink small cell networks. We address the optimization
of transmit strategies from both the game theoretical and the network
utility maximization (NUM) perspectives and show that they can be
unified in a generalized Nash equilibrium problem (GNEP) framework.
Specifically, the small cell network design is first formulated as
a GNEP, where the SBSs and MBS compete for the spectral resources
by maximizing their own rates while satisfying global quality of service
(QoS) constraints. We analyze the GNEP via variational inequality
theory and propose distributed algorithms, which only require the
broadcasting of some pricing information, to achieve a generalized
Nash equilibrium (GNE). Then, we also consider a nonconvex NUM problem
that aims to maximize the sum rate of all BSs subject to global QoS
constraints. We establish the connection between the NUM problem and
a penalized GNEP and show that its stationary solution can be obtained
via a fixed point iteration of the GNE. We propose GNEP-based distributed
algorithms that achieve a stationary solution of the NUM problem at
the expense of additional signaling overhead and complexity. The convergence
of the proposed algorithms is proved and guaranteed for properly chosen
algorithm parameters. The proposed GNEP framework can scale from a
QoS constrained game to a NUM design for small cell networks by trading
off signaling overhead and complexity.
\end{abstract}

\begin{IEEEkeywords}
Distributed optimization, game theory, generalized Nash equilibrium
problem, network utility maximization, small cell network, variational
inequality.
\end{IEEEkeywords}

\section{Introduction}

The proliferation of smart phones and mobile Internet applications
has caused an explosive growth of wireless services that will saturate
the current cellular networks in the near future \cite{Ericsson13}.
Small cells, including femtocells, picocells, and microcells, have
been widely viewed as a key enabling technology for next generation
(5G) mobile networks \cite{Andrews14}. By densely deploying low-power
low-cost small cell base stations (SBSs) in addition to traditional
macrocell base stations (MBSs), small cells can offload data traffic
from macrocells, improve local coverage, and achieve higher spectral
efficiency \cite{Andrews12}.

The coexistence of SBSs and MBSs results in a two-tier hierarchical
heterogeneous network architecture \cite{Hossain14}. To fully exploit
the potential of small cells, full frequency reuse among small cells
and macrocells is needed \cite{Quek13}, which presents several difficulties
in network design. First, there exist both cross-tier interference
between small cells and macrocells and inter-cell interference among
small cells (or macrocells). Second, the two tiers have different
service requirements. As fundamental infrastructure, MBSs provide
basic signal coverage for both existing and emerging services so that
macrocell communication generally has high priority and strict quality
of service (QoS) requirements. SBSs are mainly deployed as supplements
of MBSs to offload data traffic from MBSs and provide wireless access
for local users at as high rates as possible \cite{Andrews12,Hossain14}.

In small cell networks, QoS satisfaction of macrocell users (MUEs)
is jeopardized by cross-tier interference from SBSs, especially when
the network is operated in a closed access manner, where each cell
only serves registered users. In this case, an MUE near a small cell
may experience strong interference from the SBS \cite{Andrews12}.
Meanwhile, in the absence of regulation, small cell users (SUEs) also
suffer from inter-tier interference from other SBSs, which are often
densely deployed, as well as cross-tier interference from MBSs, which
usually transmit with high power. Therefore, interference management
is a critical issue for small cell networks and calls for the joint
optimization of the transmit strategies of the SBSs and MBSs \cite{Hossain14,Andrews14,Andrews12}.

However, the coordination of macrocell and small cell communication
is restricted by the capacity-limited backhaul links between the SBSs
and MBSs \cite{Chen15}. For example, femtocell base stations are
expected to be connected to the core network via Digital Subscriber
Line (DSL) links and there are generally no backhaul links between
them \cite{Andrews12}. Considering the denseness and randomness of
SBS deployment, wireless backhaul methods have also been proposed
for small cell networks \cite{Siddique15}, which, however, have limited
capacities and are vulnerable to dynamic changes in the environment.
Consequently, interference management of small cell networks must
take into account that the information exchange between BSs is limited.
The goal of this paper is to devise distributed optimization methods
for hierarchical small cell networks which can afford only limited
signaling overhead.

Interference management for small cell networks has received much
attention. In \cite{Chandrasekhar09} and \cite{NgoLe12}, the traditional
power control problem was investigated for two-tier code division
multiple access (CDMA) femtocell networks with the aim to meet a signal-to-interference-plus-noise
ratio (SINR) target for every user. Resource allocation for orthogonal
frequency division multiple access (OFDMA) small cell networks was
investigated in a number of works such as \cite{HaLe14,LopezChu14,Abdelnasser15},
which, upon satisfying the QoS constraints protecting the macrocell
communication, tried to maximize the throughput, minimize the transmit
power, or maximize the number of admitted users. In \cite{Ramamonjison15}
and \cite{Elsherif15}, the authors studied energy efficiency maximization
and revenue optimization for two-tier small cell networks, respectively.
Note that the distinguished power control algorithms in \cite{Chandrasekhar09}
and \cite{NgoLe12} were based on the standard function introduced
in \cite{Yates95} that is only applicable for single-variable utilities.
Yet, most resource allocation designs for small cell networks are
based on convex optimization problem formulations or relaxations and
are implemented in a centralized manner. They require the collection
of the channel state information of the entire network and are not
applicable to nonconvex objectives such as sum rate maximization.

An important methodology for distributed interference management is
game theory \cite{YuGinisCioffi02,ScutariFacchinei14,WangScutariPalomar11,WangWangDing15,XuWangWu12,Guruacharya13,ZhangJiang15}.
By formulating the resource competition over interference channels
as a Nash equilibrium problem (NEP), also known as a noncooperative
game, with the aim to achieve an NE, one can obtain completely distributed
transmit strategies \cite{YuGinisCioffi02,ScutariFacchinei14,WangScutariPalomar11,WangWangDing15,XuWangWu12}.
Nevertheless, it is also known that an NE is often socially inefficient
in the sense that either global constraints are violated or the performance
of the whole network is poor. Thus, other game models, such as Stackelberg
game and Nash bargaining, were employed for small cell network design
\cite{Guruacharya13,ZhangJiang15}. However, using these game models
will lead to centralized algorithms which weaken the merit of using
game-based optimization.

In this paper, we study distributed interference management for hierarchical
small cell networks from both the game theoretical and the network
utility maximization (NUM) perspectives. Specifically, we consider
downlink transmission in a two-tier small cell network over multiple
channels and formulate the corresponding power control as two problems,
a noncooperative game and a NUM problem, both with global QoS constraints.
Then, we develop a generalized NEP (GNEP) framework along with various
distributed algorithms and show that the two considered network design
philosophies can be unified under the GNEP framework. The main contributions
of this work include: 
\begin{itemize}
\item We formulate the small cell network design as a GNEP, where the players
are the SBSs and the MBS who compete for the spectral resources by
maximizing their own data rates subject to global QoS constraints
to protect the macrocell communication. 
\item We also formulate the small cell network design as a nonconvex NUM
problem that tries to maximize the sum rate of all BSs subject to
the same global QoS constraints. 
\item To find a generalized Nash equilibrium (GNE) of the formulated GNEP
while satisfying the global QoS constraints, we invoke variational
inequality (VI) theory to analyze the GNEP and characterize the achievable
GNE, referred to as variational equilibrium (VE), based on its existence
and uniqueness properties. 
\item Two alternative distributed algorithms are proposed for finding the
GNE and their convergence properties are analyzed. Both algorithms
only require the macrocell users (MUEs) to broadcast price information. 
\item We further show that the nonconvex NUM problem, although apparently
different from the GNEP, is connected to the GNEP. More precisely,
it is shown that a stationary point of the NUM problem corresponds
to a fixed point of the GNE iteration of a penalized GNEP. 
\item We then propose GNEP-based distributed algorithms to achieve a stationary
solution of the NUM problem at the expense of additional signaling
overhead and complexity. The convergence of the proposed algorithms
is guaranteed by properly chosen algorithm parameters.
\item The developed GNEP framework unifies the game and NUM network designs
as a whole, and is able to scale between them via various GNEP-based
distributed algorithms that offer a tradeoff between performance and
signaling overhead as well as complexity.
\end{itemize}
\textcolor{blue}{\vspace{-0.3cm}}

VI theory \cite{FacchineiPang03,ScutariPalomar10b} is a powerful
tool to analyze and solve noncooperative games and thus has been used
in a number of game-based network designs, such as \cite{ScutariFacchinei14,WangScutariPalomar11,PangScutari08,PangScutariPalomar10,ScutariPalomarFacchineiPang11,WangPeng14,WangWangDing15,StupiaSanguinetti15,BacciBelmega15,ZapponeSanguinetti16}.
In this paper, we utilize VI theory to analyze the GNEP and to find
its GNE. In particular, we show that the considered GNEP can be represented
by a generalized VI (GVI) \cite{FacchineiPang03,FacchineiPang09},
which leads to a distributed pricing mechanism. On the other hand,
GNEP theory \cite{Facchinei07} has not been widely applied to wireless
network optimization, mainly because GNEPs are more complicated than
NEPs (i.e., conventional noncooperative games). Introduced in \cite{PangScutari08}
for studying Gaussian parallel interference channels, GNEPs have been
used to design cognitive radio (CR) networks \cite{PangScutariPalomar10,ScutariPalomarFacchineiPang11,WangPeng14}.
However, small cell networks are different from CR networks as a primary
user in CR is generally passive and not involved in the optimization,
while the MBS in a small cell network is an active resource competitor
and shall be jointly optimized with the SBSs. Hence, the resulting
GNEP for small cell networks is more complicated. 

GNEP-based methods were also proposed in \cite{StupiaSanguinetti15,BacciBelmega15,ZapponeSanguinetti16}
for energy-efficient distributed optimization of heterogeneous small
cell networks. However, global QoS constraints were not considered
in \cite{StupiaSanguinetti15}, while \cite{BacciBelmega15} and \cite{ZapponeSanguinetti16}
relied on a specific analytical form of the best response and the
resulting uniqueness and convergence conditions are difficult to verify.
In this paper, we provide easy-to-check uniqueness and convergence
conditions and our framework can be generalized to other performance
metrics (e.g., mean square errors). Last but not least, to the best
of our knowledge, in the literature there is no work revealing the
connection between GNEPs and common (nonconvex) optimization problems.
In particular, we are the first to connect QoS constrained NUM problems
to GNEPs, further unifying them into a single framework.

This paper is organized as follows. Section II introduces the system
model as well as the game and NUM problem formulations for small cell
networks. In Section III, we exploit VI theory to analyze the formulated
GNEP and identify the achievable GNE. In Section IV, two distributed
algorithms are proposed to achieve the GNE. In Section V, we study
the connection between the NUM problem and the GNEP, and develop GNEP-based
distributed algorithms to solve the NUM problem. Section VI provides
numerical results. Conclusions and extensions are provided in Section
VII.

\textit{Notation:} Upper-case and lower-case boldface letters denote
matrices and vectors, respectively. $\mathbf{I}$ represent the identity
matrix, and $\mathbf{0}$ and $\mathbf{1}$ represent vectors of zeros
and ones, respectively. $\left[\mathbf{A}\right]_{ij}$ denotes the
element in the $i$th row and the $j$th column of matrix $\mathbf{A}$.
$\mathbf{A}\succeq\mathbf{0}$ and $\mathbf{A}\succ\mathbf{0}$ indicate
that $\mathbf{A}$ is a positive semidefinite and a positive definite
matrix, respectively. The operators $\geq$ and $\leq$ are defined
componentwise for vectors and matrices. $\rho\left(\cdot\right)$
and $\sigma_{\max}\left(\cdot\right)$ denote the spectral radius
and the maximum singular value of a matrix, respectively. $\lambda_{\min}\left(\cdot\right)$
denotes the minimum eigenvalue of a Hermitian matrix. $\left\Vert \cdot\right\Vert $
denotes the Euclidean norm of a vector, and $\left\Vert \cdot\right\Vert _{2}$
denotes the spectral norm of a matrix. We define the projection operators
$[x]_{+}\triangleq\max(x,0)$ and $\left[x\right]_{a}^{b}=\max\{a,\min\{b,x\}\}$
for $a\leq b$.

\section{Problem Statement\label{sec:ps}}

\subsection{System Model\label{sec:sm}}

Consider a two-tier hierarchical small cell network consisting of
$M$ SBSs operating in the coverage of an MBS.\footnote{Note that the proposed framework can be readily extended to multiple
MBSs, see Section \ref{sec:cd}.} The SBSs and the MBS share the same downlink resources that are divided
into $N$ channels, which could be time slots in TDMA (Time Division
Multiple Access), frequency bands in FDMA (Frequency Division Multiple
Access), spreading codes in CDMA, or resource blocks in OFDMA. In
the small cells and the macrocell, each downlink channel is assigned
to only one small cell user (SUE) and one macrocell user (MUE), respectively,
such that intra-cell interference does not exist.\footnote{In this paper, we assume that the user association to the BSs is predetermined
and refer the interested reader to \cite{Bethanabhotla16} for user
association optimization.} Hence, the cross-tier interference between the small cells and the
macrocell and the inter-tier interference between the small cells
become the main performance limiting factors \cite{Andrews12,Hossain14}.

For convenience, we index the SBSs as BS $i=1,\ldots,M$ and the MBS
as BS $0$. Denote the channel gain from BS $i$ to the user served
by BS $j$ over channel $n$ by $h_{ij}(n)$. Denote the power allocated
by BS $i$ to channel $n$ by $p_{i}(n)$. Then, the achievable rate
of the user served by BS $i$ over channel $n$ is given by
\begin{equation}
R_{i,n}\triangleq\log\left(1+\frac{h_{ii}(n)p_{i}(n)}{\sigma_{i}(n)+\sum_{j\neq i}h_{ji}(n)p_{j}(n)}\right),\label{sm:rate}
\end{equation}
where $\sigma_{i}(n)$ is the power of the additive white Gaussian
noise at the user served by BS $i$ on channel $n$, and the log function
is the natural logarithm for convenience (i.e., the unit is nats/s/Hz).
The total achievable rate of BS $i$ is 
\begin{equation}
R_{i}(\mathbf{p})=R_{i}(\mathbf{p}_{i},\mathbf{p}_{-i})=\sum_{n=1}^{N}R_{i,n}(\mathbf{p}_{i},\mathbf{p}_{-i}),\label{sm:sum}
\end{equation}
which depends not only on BS $i$'s transmit strategy $\mathbf{p}_{i}\triangleq(p_{i}(n))_{n=1}^{N}$
but also on the other BSs' strategies $\mathbf{p}_{-i}\triangleq(\mathbf{p}_{j})_{j\neq i}$.
The power profile of all BSs' strategies is denoted by $\mathbf{p}\triangleq(\mathbf{p}_{i})_{i=0}^{M}$.
The strategy of each BS shall satisfy the power constraints: 
\begin{align}
\mathbf{p}_{i}\in\mathcal{S}_{i}^{\mathrm{pow}} & \triangleq\left\{ \mathbf{p}_{i}:\sum_{n=1}^{N}p_{i}(n)\leq p_{i}^{\mathrm{sum}},0\leq p_{i}(n)\leq p_{i,n}^{\mathrm{peak}},\forall n\right\} \nonumber \\
 & =\left\{ \mathbf{p}_{i}:\mathbf{1}^{T}\mathbf{p}_{i}\leq p_{i}^{\mathrm{sum}},\mathbf{0}\leq\mathbf{p}_{i}\leq\mathbf{p}_{i}^{\mathrm{peak}}\right\} \label{sm:pow}
\end{align}
where $p_{i}^{\mathrm{sum}}$ is the sum or total power budget of
BS $i$, and $\mathbf{p}_{i}^{\mathrm{peak}}\triangleq(p_{i,n}^{\mathrm{peak}})_{n=1}^{N}$
with $p_{i,n}^{\mathrm{peak}}$ being the peak power budget of BS
$i$ on channel $n$.

Small cells can be considered an enhancement of a macrocell, which
enable the offloading of data traffic from the macrocell and improving
its coverage \cite{Hossain14,Andrews14}. Therefore, macrocell communication
generally has a higher priority and shall be protected from interference
\cite{Hossain14,Andrews14,Andrews12,Chandrasekhar09,NgoLe12,HaLe14,LopezChu14,Abdelnasser15,Ramamonjison15,Elsherif15,Guruacharya13,ZhangJiang15}.
Hence, we impose the global QoS constraints: $R_{0,n}(\mathbf{p})\geq\gamma_{n}$,
$n=1,\ldots,N$, which limit the aggregate interference of all SBSs
on each channel and thus provide a QoS guarantee for each MUE, where
the thresholds $\gamma_{n}$ are chosen such that the QoS constraints
are feasible.\footnote{The QoS constraints are feasible if and only if they are fulfilled
in the absence of interference from the SBSs, i.e., if and only if
$\log(1+\sigma_{0}^{-1}(n)h_{00}(n)p_{0}(n))\geq\gamma_{n}$, $n=1,\ldots,N$,
for some $\mathbf{p}_{0}\in\mathcal{S}_{0}^{\mathrm{pow}}$.} Note that the global QoS constraints depend on both the MBS's and
SBSs' powers, implying that they shall be jointly optimized to meet
the QoS target.

\subsection{Problem Formulation\label{sec:gf}}

In this paper, considering both the game theoretical and the NUM perspectives
for network design, two problem formulations for interference management
in small cell networks are considered. We first formulate the network
design as a noncooperative game, which reflects the competitive nature
of small cell networks, and leads to a completely decentralized optimization.
Specifically, each BS is viewed as a player, i.e., there are $M+1$
players (one MBS and $M$ SBSs). The utility function of each player
$i$ is its rate $R_{i}$, and each player has to meet the QoS and
power constraints. Therefore, the game is formulated as 
\begin{equation}
\mathcal{G}:\;\begin{array}{ll}
\underset{\mathbf{p}_{i}\in\mathcal{S}_{i}^{\mathrm{pow}}}{\mathrm{maximize}} & R_{i}(\mathbf{p}_{i},\mathbf{p}_{-i})\\
\mathrm{subject\,to} & R_{0,n}(\mathbf{p})\geq\gamma_{n},\;\forall n
\end{array}\;i=0,\ldots,M.\label{gf:game}
\end{equation}
We note that $\mathcal{G}$ is different from conventional noncooperative
games or NEPs with decoupled strategy sets \cite{YuGinisCioffi02,ScutariFacchinei14,WangScutariPalomar11}.
Here, the strategy set of BS $i$ is given by 
\begin{equation}
\mathcal{S}_{i}(\mathbf{p}_{-i})=\left\{ \mathbf{p}_{i}\in\mathcal{S}_{i}^{\mathrm{pow}}:R_{0,n}(\mathbf{p}_{i},\mathbf{p}_{-i})\geq\gamma_{n},n=1,\ldots,N\right\} \label{gf:set}
\end{equation}
which clearly depends on other BSs' strategies. Therefore, $\mathcal{G}$
is indeed a generalized Nash equilibrium problem (GNEP) \cite{Facchinei07},
in which the players' strategy sets, in addition to their utility
functions, are coupled. The solution to the GNEP, i.e., the GNE, is
a point $\mathbf{p}^{\star}\triangleq(\mathbf{p}_{i}^{\star})_{i=0}^{M}$
satisfying $R_{i}(\mathbf{p}_{i}^{\star},\mathbf{p}_{-i}^{\star})\geq R_{i}(\mathbf{p}_{i},\mathbf{p}_{-i}^{\star})$,
$\forall\mathbf{p}_{i}\in\mathcal{S}_{i}(\mathbf{p}_{-i}^{\star})$
for $i=0,\ldots,M$. Due to the coupling of the strategy sets, GNEPs
are much more difficult to analyze than NEPs.

Furthermore, we also consider a NUM problem aiming to optimize the
overall system performance under the adopted global QoS constraints.
The most commonly used network utility is the sum rate of the entire
network (i.e., all BSs). Therefore, the QoS constrained NUM problem
is formulated as 
\begin{equation}
\mathcal{P}:\;\begin{array}{ll}
\underset{\mathbf{p}}{\mathrm{maximize}} & \sum_{i=0}^{M}R_{i}(\mathbf{p})\\
\mathrm{subject\,to} & \mathbf{p}_{i}\in\mathcal{S}_{i}^{\mathrm{pow}},\quad i=0,\ldots,M\\
 & R_{0,n}(\mathbf{p})\geq\gamma_{n},\quad n=1,\ldots,N.
\end{array}\label{gf:sm}
\end{equation}
In the literature, this problem is regarded as a difficult problem
due to several unfavorable properties: 1) Problem $\mathcal{P}$ is
NP-hard even in the absence of the QoS constraints \cite{LuoZhang08},
i.e., finding its globally optimal solution requires prohibitive computational
complexity; 2) directly solving problem $\mathcal{P}$ leads to centralized
algorithms that incur significant signaling overheads. Small cell
networks, unlike core networks, usually employ low-cost capacity-limited
backhaul links, which impose strict limitations on the signaling exchange
between BSs (especially SBSs). Therefore, in practice, distributed
methods are often preferred even if they may achieve a suboptimal,
e.g., locally optimal, solution to $\mathcal{P}$.

In the following, we will show that the above two apparently different
network design philosophies can be unified under the GNEP framework.
Specifically, we will first analyze the GNEP $\mathcal{G}$ and provide
different distributed algorithms for finding the GNE of $\mathcal{G}$.
Then, we will investigate the connection between GNEP $\mathcal{G}$
and NUM problem $\mathcal{P}$ and develop GNEP-based distributed
algorithms to achieve a stationary solution of $\mathcal{P}$.

\section{VI Reformulation of the GNEP\label{sec:vr}}

In this section, we analyze the GNEP design of small cell networks.
Due to the coupling of the players' strategy sets, a GNEP is much
more complicated than an NEP and, in its fully general form, is still
deemed intractable \cite{Facchinei07}. Fortunately, the formulated
GNEP for small cell networks enjoys some favorable properties, making
it possible to analyze it and even to find its solution via variational
inequalities (VIs) \cite{FacchineiPang03}. In Appendix \ref{subsec:vi},
we provide a brief introduction to VI theory, while we refer to \cite{FacchineiPang03}
for a more detailed treatment.

Observe that the strategy sets of the BSs, although dependent on each
other, are coupled in a common manner, i.e., by the same QoS constraints
$R_{0,n}(\mathbf{p})\geq\gamma_{n}$ for $n=1,\ldots,N$. Thus, GNEP
$\mathcal{G}$ falls into a class of so-called GNEPs with shared constraints.
Notice further that, for each player $i$, the utility function $R_{i}(\mathbf{p}_{i},\mathbf{p}_{-i})$
is concave in $\mathbf{p}_{i}$, the power constraint set $\mathcal{S}_{i}^{\mathrm{pow}}$
is a convex compact set, and more importantly, the shared QoS constraints
can be rewritten as $\mathbf{g}(\mathbf{p})\leq\mathbf{0}$, where
$\mathbf{g}(\mathbf{p})\triangleq\left(g_{n}(\mathbf{p})\right)_{n=1}^{N}$
and 
\begin{equation}
g_{n}(\mathbf{p})\triangleq\sum_{j=1}^{M}h_{j0}(n)p_{j}(n)+\sigma_{0}(n)-\tilde{h}_{00}(n)p_{0}(n)\label{vr:gn}
\end{equation}
with $\tilde{h}_{00}(n)\triangleq h_{00}(n)/(e^{\gamma_{n}}-1)$.
It is easily seen that $\mathbf{g}(\mathbf{p})$ is jointly convex
(actually linear) in $\mathbf{p}$ and the shared QoS constraints
are convex. Consequently, GNEP $\mathcal{G}$ can be further classified
as a GNEP with jointly convex shared constraints or shortly a jointly
convex GNEP \cite{Facchinei07}.

A jointly convex GNEP, though simpler than its general form, is still
a difficult problem and finding all its solutions or GNEs is still
intractable. However, we are able to characterize a class of GNEs,
called variational equilibria (VEs) \cite{Facchinei07} (also known
as normalized equilibria \cite{Rosen65}), through VI theory. For
this purpose, we introduce $\mathcal{S}^{\mathrm{pow}}\triangleq\prod_{i=0}^{M}\mathcal{S}_{i}^{\mathrm{pow}}$
and 
\begin{align}
\mathcal{S} & \triangleq\left\{ \mathbf{p}:\mathbf{g}(\mathbf{p})\leq\mathbf{0},\;\mathbf{p}_{i}\in\mathcal{S}_{i}^{\mathrm{pow}},\;\forall i\right\} \nonumber \\
 & =\left\{ \mathbf{p}:\mathbf{g}(\mathbf{p})\leq\mathbf{0},\;\mathbf{p}\in\mathcal{S}^{\mathrm{pow}}\right\} .\label{vr:sp}
\end{align}
It is easily seen that $\mathcal{S}_{i}(\mathbf{p}_{-i})$ in (\ref{gf:set})
is a slice of $\mathcal{S}$, i.e., $\mathcal{S}_{i}(\mathbf{p}_{-i})=\left\{ \mathbf{p}_{i}:(\mathbf{p}_{i},\mathbf{p}_{-i})\in\mathcal{S}\right\} $.
Let $\mathbf{f}(\mathbf{p})\triangleq\left(\mathbf{f}_{i}(\mathbf{p})\right)_{i=0}^{M}$
and 
\begin{align}
\mathbf{f}_{i}(\mathbf{p}) & \triangleq-\nabla_{\mathbf{p}_{i}}R_{i}(\mathbf{p})=\left(-\partial R_{i,n}/\partial p_{i}(n)\right)_{n=1}^{N}\nonumber \\
 & =\left(-h_{ii}(n)I_{i,n}^{-1}(\mathbf{p})\right)_{n=1}^{N}\label{vr:fi}
\end{align}
where $I_{i,n}(\mathbf{p})\triangleq\sigma_{i}(n)+\sum_{j=0}^{M}h_{ji}(n)p_{j}(n)$.
Then, GNEP $\mathcal{G}$ is linked to the following VI.
\begin{lem}
\label{lem:vg}A solution of $\mathrm{VI}(\mathcal{S},\mathbf{f})$,
i.e., a vector $\mathbf{p}^{\star}$ such that $(\mathbf{p}-\mathbf{p}^{\star})^{T}\mathbf{f}(\mathbf{p}^{\star})\geq0$,
$\forall\mathbf{p}\in\mathcal{S}$, is also a GNE of $\mathcal{G}$.
\end{lem}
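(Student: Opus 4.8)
The plan is to start from a solution $\mathbf{p}^\star$ of $\mathrm{VI}(\mathcal{S},\mathbf{f})$ and verify directly that each block $\mathbf{p}_i^\star$ solves player $i$'s maximization in $\mathcal{G}$ against the fixed rivals $\mathbf{p}_{-i}^\star$. The whole argument rests on two structural facts already established above: that $\mathcal{S}_i(\mathbf{p}_{-i})$ is the slice $\{\mathbf{p}_i:(\mathbf{p}_i,\mathbf{p}_{-i})\in\mathcal{S}\}$, and that each $R_i$ is concave in $\mathbf{p}_i$ over the convex set $\mathcal{S}_i(\mathbf{p}_{-i}^\star)$.

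First I would decouple the joint VI into per-player inequalities. Fix a player $i$ and take any $\mathbf{p}_i\in\mathcal{S}_i(\mathbf{p}_{-i}^\star)$. By the slice property the profile $\mathbf{q}\triangleq(\mathbf{p}_i,\mathbf{p}_{-i}^\star)$ lies in $\mathcal{S}$, so it is an admissible test point in the VI. Since $\mathbf{q}$ and $\mathbf{p}^\star$ agree in every block except the $i$th, the inner product $(\mathbf{q}-\mathbf{p}^\star)^T\mathbf{f}(\mathbf{p}^\star)$ collapses to the single term $(\mathbf{p}_i-\mathbf{p}_i^\star)^T\mathbf{f}_i(\mathbf{p}^\star)$. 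The VI condition $(\mathbf{q}-\mathbf{p}^\star)^T\mathbf{f}(\mathbf{p}^\star)\geq0$ therefore yields $(\mathbf{p}_i-\mathbf{p}_i^\star)^T\mathbf{f}_i(\mathbf{p}^\star)\geq0$, i.e. $(\mathbf{p}_i-\mathbf{p}_i^\star)^T\nabla_{\mathbf{p}_i}R_i(\mathbf{p}^\star)\leq0$ for every $\mathbf{p}_i\in\mathcal{S}_i(\mathbf{p}_{-i}^\star)$, using $\mathbf{f}_i=-\nabla_{\mathbf{p}_i}R_i$ from (\ref{vr:fi}).

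Second I would upgrade this first-order (variational) inequality to a statement of global optimality. Because $R_i(\cdot,\mathbf{p}_{-i}^\star)$ is concave and $\mathcal{S}_i(\mathbf{p}_{-i}^\star)$ is convex, the inequality $(\mathbf{p}_i-\mathbf{p}_i^\star)^T\nabla_{\mathbf{p}_i}R_i(\mathbf{p}^\star)\leq0$ holding for all feasible $\mathbf{p}_i$ is exactly the necessary and sufficient optimality condition for $\mathbf{p}_i^\star$ to maximize $R_i(\cdot,\mathbf{p}_{-i}^\star)$ over $\mathcal{S}_i(\mathbf{p}_{-i}^\star)$; concretely, the concavity bound $R_i(\mathbf{p}_i,\mathbf{p}_{-i}^\star)\leq R_i(\mathbf{p}_i^\star,\mathbf{p}_{-i}^\star)+(\mathbf{p}_i-\mathbf{p}_i^\star)^T\nabla_{\mathbf{p}_i}R_i(\mathbf{p}^\star)$ combined with the derived inequality gives $R_i(\mathbf{p}_i,\mathbf{p}_{-i}^\star)\leq R_i(\mathbf{p}_i^\star,\mathbf{p}_{-i}^\star)$. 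Since $i$ was arbitrary, $\mathbf{p}^\star$ satisfies the defining best-response property of a GNE.

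I do not anticipate a genuine obstacle here; this is the standard ``VE is a GNE'' direction for jointly convex GNEPs. The one point requiring care is the passage from the aggregate VI to the per-player inequalities, which is legitimate only because the single-player deviations $(\mathbf{p}_i,\mathbf{p}_{-i}^\star)$ remain feasible in the shared set $\mathcal{S}$---precisely the jointly convex (shared-constraint) structure noted before Lemma \ref{lem:vg}. It is worth flagging that only this direction holds in general: a GNE need not solve the VI, so the lemma identifies the variational equilibria as a distinguished subset of the GNEs rather than all of them.
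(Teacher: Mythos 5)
Your proof is correct and takes essentially the same route as the paper: the paper's one-line proof (``comparing the optimality conditions of the GNE of $\mathcal{G}$ and the solution of $\mathrm{VI}(\mathcal{S},\mathbf{f})$'') is exactly the argument you spell out, namely restricting the VI to single-player deviations $(\mathbf{p}_i,\mathbf{p}_{-i}^{\star})$, which remain in $\mathcal{S}$ by the shared-constraint (slice) structure, and then upgrading the resulting per-player first-order condition to global optimality via concavity of $R_i$ in $\mathbf{p}_i$. Your closing caveat that only this direction holds in general matches the paper's own framing of variational equilibria as a distinguished subset of the GNEs.
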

\begin{proof}
Lemma \ref{lem:vg} is proved by comparing the optimality conditions
of the GNE of $\mathcal{G}$ and the solution of $\mathrm{VI}(\mathcal{S},\mathbf{f})$
\cite{Facchinei07}.
\end{proof}
Lemma \ref{lem:vg} indicates that a subset of the GNEs, i.e., the
VEs, of $\mathcal{G}$ can be characterized by $\mathrm{VI}(\mathcal{S},\mathbf{f})$.\footnote{Note that GNEP $\mathcal{G}$ can also be transformed into a quasi-variational
inequality (QVI) \cite{PangFukushima05,StupiaSanguinetti15}.} VEs are a class of solutions of jointly convex GNEPs that can be
found efficiently. Thus, it is reasonable to focus on the VE of $\mathcal{G}$.
Lemma \ref{lem:vg} also enables us to investigate the existence and
even uniqueness of a GNE by invoking VI theory. Particularly, a unique
VI solution is implied by the uniformly P property, which means that
if $\mathbf{f}$ is a uniformly P function (see Appendix \ref{subsec:vi}),
then $\mathrm{VI}(\mathcal{S},\mathbf{f})$ has a unique solution
(thus a unique VE). However, in practice, it is hard and inconvenient
to verify the uniformly P property of $\mathbf{f}$ by its definition.
Hence, in the following, we provide an easy-to-check condition for
whether $\mathcal{G}$ has a unique VE.
\begin{prop}
\label{pro:unq}GNEP $\mathcal{G}$ always admits a GNE and has a
unique VE if $\Psi$ is a P-matrix, where $\Psi\in\mathbb{R}^{(M+1)\times(M+1)}$
is defined as\textup{
\begin{equation}
\left[\mathbf{\Psi}\right]_{ij}\triangleq\begin{cases}
\min_{n}\frac{h_{ii}^{2}(n)}{\left(\sigma_{i}(n)+\sum_{l=0}^{M}h_{li}(n)p_{l,n}^{\mathrm{max}}\right)^{2}}, & \mbox{if }i=j\\
-\max_{n}\frac{h_{ii}(n)h_{ji}(n)}{\sigma_{i}^{2}(n)}, & \mbox{if }i\neq j
\end{cases}\label{vr:ms}
\end{equation}
with $p_{l,n}^{\mathrm{max}}\triangleq\min\{p_{l}^{\mathrm{sum}},p_{l,n}^{\mathrm{peak}}\}$
for $l=0,\ldots,M$ and $n=1,\ldots,N$.}
\end{prop}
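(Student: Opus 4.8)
The plan is to prove both assertions through the variational reformulation of Lemma~\ref{lem:vg}: since every solution of $\mathrm{VI}(\mathcal{S},\mathbf{f})$ is a VE and hence a GNE of $\mathcal{G}$, it suffices to analyze $\mathrm{VI}(\mathcal{S},\mathbf{f})$. For \emph{existence}, I would observe that the set $\mathcal{S}$ in~(\ref{vr:sp}) is nonempty (by the feasibility assumption on the thresholds $\gamma_n$), convex, and compact, while $\mathbf{f}$ is continuous on $\mathcal{S}$ because each denominator satisfies $I_{i,n}(\mathbf{p})\geq\sigma_i(n)>0$. The standard existence theorem for a variational inequality on a compact convex set then guarantees a solution of $\mathrm{VI}(\mathcal{S},\mathbf{f})$, so $\mathcal{G}$ always admits a GNE; this part does not use the P-matrix hypothesis.

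For \emph{uniqueness}, the strategy is to show that the P-matrix property of $\mathbf{\Psi}$ forces $\mathbf{f}$ to be a (block) uniformly P function, which yields at most one VI solution. The core is a per-player estimate. Fixing $\mathbf{p},\mathbf{q}\in\mathcal{S}$ and writing $e_i\triangleq\|\mathbf{p}_i-\mathbf{q}_i\|$, $\mathbf{e}\triangleq(e_i)_{i=0}^{M}$, I would first compute the componentwise difference
\[
[\mathbf{f}_i(\mathbf{p})]_n-[\mathbf{f}_i(\mathbf{q})]_n=\frac{h_{ii}(n)\sum_{j}h_{ji}(n)\bigl(p_j(n)-q_j(n)\bigr)}{I_{i,n}(\mathbf{p})\,I_{i,n}(\mathbf{q})},
\]
and then split the inner product $(\mathbf{p}_i-\mathbf{q}_i)^T(\mathbf{f}_i(\mathbf{p})-\mathbf{f}_i(\mathbf{q}))$ into its diagonal term ($j=i$) and off-diagonal terms ($j\neq i$).

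For the diagonal term I would upper-bound each denominator by its maximum over $\mathcal{S}^{\mathrm{pow}}$, i.e.\ $I_{i,n}(\mathbf{p})\leq\sigma_i(n)+\sum_{l}h_{li}(n)p_{l,n}^{\mathrm{max}}$, producing the factor $\min_n h_{ii}^2(n)/\bigl(\sigma_i(n)+\sum_{l}h_{li}(n)p_{l,n}^{\mathrm{max}}\bigr)^2=[\mathbf{\Psi}]_{ii}$ multiplying $\sum_n(p_i(n)-q_i(n))^2=e_i^2$. For each off-diagonal term I would instead use $I_{i,n}(\mathbf{p})I_{i,n}(\mathbf{q})\geq\sigma_i^2(n)$ to bound the kernel by $h_{ii}(n)h_{ji}(n)/\sigma_i^2(n)$, extract $\max_n$ of this quantity (which equals $-[\mathbf{\Psi}]_{ij}$), and apply the Cauchy--Schwarz inequality over the channel index to obtain $\sum_n|p_i(n)-q_i(n)|\,|p_j(n)-q_j(n)|\leq e_i e_j$. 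Collecting the two estimates yields precisely
\[
(\mathbf{p}_i-\mathbf{q}_i)^T\bigl(\mathbf{f}_i(\mathbf{p})-\mathbf{f}_i(\mathbf{q})\bigr)\geq e_i\,[\mathbf{\Psi}\mathbf{e}]_i,\quad i=0,\ldots,M.
\]

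To finish, I would invoke the standard characterization that $\mathbf{\Psi}$ is a P-matrix if and only if $\max_i x_i[\mathbf{\Psi}\mathbf{x}]_i>0$ for every $\mathbf{x}\neq\mathbf{0}$; by continuity and compactness of the unit sphere this supplies a constant $c>0$ with $\max_i x_i[\mathbf{\Psi}\mathbf{x}]_i\geq c\|\mathbf{x}\|^2$. Taking $\mathbf{x}=\mathbf{e}$ and combining with the per-player estimate gives $\max_i(\mathbf{p}_i-\mathbf{q}_i)^T(\mathbf{f}_i(\mathbf{p})-\mathbf{f}_i(\mathbf{q}))\geq c\|\mathbf{p}-\mathbf{q}\|^2$, so $\mathbf{f}$ is uniformly P and $\mathrm{VI}(\mathcal{S},\mathbf{f})$ has at most one solution; together with existence, the VE is unique. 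The step I expect to be the main obstacle is the off-diagonal bookkeeping: because the cross terms may be negative, they must be controlled in absolute value by the worst-case kernel $h_{ii}(n)h_{ji}(n)/\sigma_i^2(n)$, and the $\min_n$/$\max_n$ operations must be arranged so that the resulting coefficients match the entries of $\mathbf{\Psi}$ in~(\ref{vr:ms}) exactly rather than a looser surrogate.
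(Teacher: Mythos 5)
Your proposal is correct, and its skeleton matches the paper's: existence from the standard VI existence theorem on the nonempty, convex, compact set $\mathcal{S}$ (no P-matrix hypothesis needed), and uniqueness by showing $\mathbf{f}$ is uniformly P through the per-player estimate $(\mathbf{p}_i-\mathbf{q}_i)^T(\mathbf{f}_i(\mathbf{p})-\mathbf{f}_i(\mathbf{q}))\geq e_i[\mathbf{\Psi}\mathbf{e}]_i$. Where you genuinely differ is in how that estimate is derived and how the argument is closed. The paper applies the mean value theorem to $z_i(\theta)=(\mathbf{p}_i^1-\mathbf{p}_i^2)^T\mathbf{f}_i(\theta\mathbf{p}^1+(1-\theta)\mathbf{p}^2)$ and bounds the entries of the diagonal Jacobian blocks $\nabla_{\mathbf{p}_j}\mathbf{f}_i$: the own-block entries from below by $\alpha_i(n)$ (denominators evaluated at $p_{l,n}^{\mathrm{max}}$) and the cross-block entries from above by $\beta_{ij}(n)=h_{ii}(n)h_{ji}(n)\sigma_i^{-2}(n)$. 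You instead exploit the explicit rational form of $\mathbf{f}_i$ to write the difference $\mathbf{f}_i(\mathbf{p})-\mathbf{f}_i(\mathbf{q})$ exactly, and then apply the same two denominator bounds plus Cauchy--Schwarz over the channel index; both routes land on precisely the entries of $\mathbf{\Psi}$, so no looseness is introduced. Your version is more elementary (no differentiation along the segment), while the paper's Jacobian route generalizes more readily to utilities whose gradient differences admit no closed form, which matters for the extensions mentioned in Section VII. Your closing step is also more careful than the paper's: the paper cites its sign-non-reversal lemma to write $s_i[\mathbf{\Psi}\mathbf{s}]_i>0$ as if it held for every $i$ (it holds only for at least one index) and then simply asserts the existence of $c_{\mathrm{up}}$, whereas your homogeneity-plus-compactness argument on the unit sphere produces the uniform constant $c$ explicitly and makes the passage from the per-player inequality to $\max_i(\mathbf{p}_i-\mathbf{q}_i)^T(\mathbf{f}_i(\mathbf{p})-\mathbf{f}_i(\mathbf{q}))\geq c\|\mathbf{p}-\mathbf{q}\|^2$ airtight.
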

\begin{proof}
See Appendix \ref{subsec:unq}.
\end{proof}
Accompanied by Proposition \ref{pro:unq} is the following result.
\begin{lem}
\label{lem:sfe}$\Psi$ is a P-matrix if and only if $\rho(\Phi)<1$,
where $\Phi\in\mathbb{R}^{(M+1)\times(M+1)}$ is defined as
\begin{equation}
\left[\mathbf{\Phi}\right]_{ij}\triangleq\begin{cases}
0, & \mbox{if }i=j\\
-\frac{\left[\mathbf{\Psi}\right]_{ij}}{\left[\mathbf{\Psi}\right]_{ii}}, & \mbox{if }i\neq j.
\end{cases}\label{vr:mf}
\end{equation}
\end{lem}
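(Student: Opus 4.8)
The plan is to exploit the sign structure of $\mathbf{\Psi}$ and reduce the P-matrix property to a spectral radius condition through the theory of M-matrices. First I would note that, since all channel gains $h_{ij}(n)$ and noise powers $\sigma_{i}(n)$ are nonnegative, the diagonal entries of $\mathbf{\Psi}$ in (\ref{vr:ms}) are strictly positive whereas its off-diagonal entries are nonpositive; hence $\mathbf{\Psi}$ is a Z-matrix. For a Z-matrix, being a P-matrix is equivalent to being a nonsingular M-matrix, so the claim reduces to proving that $\mathbf{\Psi}$ is a nonsingular M-matrix if and only if $\rho(\mathbf{\Phi})<1$.

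Next I would introduce the splitting $\mathbf{\Psi}=\mathbf{D}-\mathbf{C}$, where $\mathbf{D}$ is the diagonal matrix collecting the positive diagonal entries $[\mathbf{\Psi}]_{ii}$ and $\mathbf{C}\triangleq\mathbf{D}-\mathbf{\Psi}$ has a zero diagonal and nonnegative off-diagonal entries $[\mathbf{C}]_{ij}=-[\mathbf{\Psi}]_{ij}\geq 0$. Comparing with the definition (\ref{vr:mf}), one sees that $\mathbf{\Phi}=\mathbf{D}^{-1}\mathbf{C}\geq\mathbf{0}$, so that $\mathbf{\Psi}=\mathbf{D}(\mathbf{I}-\mathbf{\Phi})$ with $\mathbf{D}^{-1}$ a positive diagonal matrix.

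Finally I would invoke the characterization that a Z-matrix is a nonsingular M-matrix if and only if its inverse is entrywise nonnegative. Since $\mathbf{\Psi}^{-1}=(\mathbf{I}-\mathbf{\Phi})^{-1}\mathbf{D}^{-1}$ and $\mathbf{D}^{-1}\geq\mathbf{0}$ is diagonal with positive entries, $\mathbf{\Psi}^{-1}\geq\mathbf{0}$ holds if and only if $(\mathbf{I}-\mathbf{\Phi})^{-1}\geq\mathbf{0}$. As $\mathbf{\Phi}\geq\mathbf{0}$, the Perron--Frobenius theorem --- equivalently, the convergence of the Neumann series $\sum_{k=0}^{\infty}\mathbf{\Phi}^{k}$ to a nonnegative limit --- guarantees that $(\mathbf{I}-\mathbf{\Phi})^{-1}$ exists and is nonnegative precisely when $\rho(\mathbf{\Phi})<1$. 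Chaining these equivalences gives that $\mathbf{\Psi}$ is a P-matrix if and only if $\mathbf{\Psi}$ is a nonsingular M-matrix, if and only if $(\mathbf{I}-\mathbf{\Phi})^{-1}\geq\mathbf{0}$, if and only if $\rho(\mathbf{\Phi})<1$, which is the assertion. There is no deep obstacle here; the step requiring the most care is the first reduction, which hinges on $\mathbf{\Psi}$ being a Z-matrix, so I would state the nonnegativity of the channel coefficients and the resulting sign pattern of (\ref{vr:ms}) explicitly before applying the M-matrix equivalences.
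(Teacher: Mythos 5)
Your proof is correct, and it reaches the result by a more self-contained route than the paper. The paper disposes of this lemma in one line by citing Lemma \ref{lem:kmx} (the K-matrix comparison result from Cottle--Pang--Stone): taking $\mathbf{A}=\mathbf{D}$ (the diagonal part of $\mathbf{\Psi}$, trivially a K-matrix) and $\mathbf{B}=\mathbf{C}\triangleq\mathbf{D}-\mathbf{\Psi}\geq\mathbf{0}$, that lemma gives $\rho(\mathbf{D}^{-1}\mathbf{C})=\rho(\mathbf{\Phi})<1$ if and only if $\mathbf{D}-\mathbf{C}=\mathbf{\Psi}$ is a K-matrix, which for the Z-matrix $\mathbf{\Psi}$ is the same as being a P-matrix. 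You use the identical splitting $\mathbf{\Psi}=\mathbf{D}(\mathbf{I}-\mathbf{\Phi})$, but instead of invoking the packaged comparison lemma you prove its relevant special case directly: Z-matrix plus P-matrix equals nonsingular M-matrix; a Z-matrix is a nonsingular M-matrix iff it is inverse-positive; and, for $\mathbf{\Phi}\geq\mathbf{0}$, inverse-positivity of $\mathbf{I}-\mathbf{\Phi}$ holds iff $\rho(\mathbf{\Phi})<1$, by the Neumann series in one direction and Perron--Frobenius in the other. What the paper's citation buys is brevity; what your argument buys is self-containedness---every step is a textbook M-matrix equivalence, and no result about general non-diagonal K-matrices $\mathbf{A}$ is needed. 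One small point worth stating explicitly: the diagonal entries of $\mathbf{\Psi}$ in (\ref{vr:ms}) are strictly positive (equivalently, $h_{ii}(n)>0$), which is needed both for $\mathbf{D}$ to be invertible and for $\mathbf{\Phi}$ in (\ref{vr:mf}) to be well defined; nonnegativity of the channel coefficients alone yields only the Z-matrix sign pattern, not the positivity of the diagonal.
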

\begin{proof}
Lemma \ref{lem:sfe} follows from Lemma \ref{lem:kmx} in Appendix
\ref{subsec:vi}.
\end{proof}
From Proposition \ref{pro:unq} and Lemma \ref{lem:sfe}, there is
a unique VE if the real matrix $\mathbf{\Psi}$ is a P-matrix (see
Appendix \ref{subsec:vi}) or $\rho(\mathbf{\Phi})<1$. By definition
of the P-matrix, a positive definite matrix is also a P-matrix, so
one can also check the positive definiteness of $\mathbf{\Psi}$,
which is implied by the strict diagonal dominance, i.e., $\left[\mathbf{\Psi}\right]_{i}>\sum_{j\neq i}\left[\left|\mathbf{\Psi}\right|\right]_{ij}$
and $\left[\mathbf{\Psi}\right]_{j}>\sum_{i\neq j}\left[\left|\mathbf{\Psi}\right|\right]_{ij}$
for $i,j=0,\ldots,M$. This can also be intuitively interpreted as
the information signals of each BS being stronger than the corresponding
interference \cite{ScutariFacchinei14,WangScutariPalomar11}.

Now, we investigate how to obtain a VE of $\mathcal{G}$. A natural
way to compute the VE, as pointed out in Lemma \ref{lem:vg}, is to
directly solve $\mathrm{VI}(\mathcal{S},\mathbf{f})$. Considering
that function $\mathbf{f}$ and set $\mathcal{S}$ are coupled by
all BSs, this approach, however, leads to a centralized algorithm,
which contradicts our desired goal of distributed optimization. For
a decentralized design, we introduce a generalized VI (GVI) (see Appendix
\ref{subsec:vi} or \cite{FacchineiPang03,FacchineiPang09}) based
on GNEP $\mathcal{G}$. Specifically, consider the following noncooperative
game or NEP: 
\begin{equation}
\mathcal{G}_{\boldsymbol{\mu}}:\;\begin{array}{ll}
\underset{\mathbf{p}_{i}}{\mathrm{maximize}} & R_{i}(\mathbf{p}_{i},\mathbf{p}_{-i})-\boldsymbol{\mu}^{T}\mathbf{g}(\mathbf{p})\\
\mathrm{subject\,to} & \mathbf{p}_{i}\in\mathcal{S}_{i}^{\mathrm{pow}}
\end{array}\;i=0,\ldots,M\label{vr:gu}
\end{equation}
where $\boldsymbol{\mu}\triangleq\left(\mu_{n}\right)_{n=1}^{N}\geq\mathbf{0}$
is a given nonnegative vector. This is a conventional NEP with decoupled
strategy sets. We denote its NE by $\mathbf{p}^{\star}(\boldsymbol{\mu})$,
which is a function of $\boldsymbol{\mu}$, as is $\mathbf{g}(\mathbf{p}^{\star}(\boldsymbol{\mu}))$.
Considering that there may be multiple NEs, the values of $\mathbf{g}(\mathbf{p}^{\star}(\boldsymbol{\mu}))$
could be a set $\left\{ \mathbf{g}(\mathbf{p}^{\star}(\boldsymbol{\mu}))\right\} $.
Thus, we define a point-to-set map $\mathcal{Q}(\boldsymbol{\mu}):\mathbb{R}_{+}^{N}\rightarrow\left\{ -\mathbf{g}(\mathbf{p}^{\star}(\boldsymbol{\mu}))\right\} $.
Then, we introduce $\mathrm{GVI}(\mathbb{R}_{+}^{N},\mathcal{Q})$,
whose solution is a vector $\boldsymbol{\mu}^{\star}$ such that $\mathcal{G}_{\boldsymbol{\mu}^{\star}}$
admits an NE $\mathbf{p}^{\star}(\boldsymbol{\mu}^{\star})$ and 
\begin{equation}
(\boldsymbol{\mu}-\boldsymbol{\mu}^{\star})^{T}\mathbf{g}(\mathbf{p}^{\star}(\boldsymbol{\mu}^{\star}))\leq0,\quad\forall\boldsymbol{\mu}\geq\mathbf{0}.\label{vr:lh}
\end{equation}
The relation between $\mathrm{GVI}(\mathbb{R}_{+}^{N},\mathcal{Q})$
and $\mathrm{VI}(\mathcal{S},\mathbf{f})$ (and also $\mathcal{G}$)
is given in the following theorem.
\begin{thm}
\label{thm:gvi}If $\boldsymbol{\mu}^{\star}$ is a solution of $\mathrm{GVI}(\mathbb{R}_{+}^{N},\mathcal{Q})$
with $\mathbf{p}^{\star}(\boldsymbol{\mu}^{\star})$ being an NE of
$\mathcal{G}_{\boldsymbol{\mu}^{\star}}$, then $\mathbf{p}^{\star}(\boldsymbol{\mu}^{\star})$
is a solution of $\mathrm{VI}(\mathcal{S},\mathbf{f})$ with $\boldsymbol{\mu}^{\star}$
being the Lagrange multiplier associated with the constraint $\mathbf{g}(\mathbf{p})\leq\mathbf{0}$.
Conversely, if $\mathbf{p}^{\star}$ is a solution of $\mathrm{VI}(\mathcal{S},\mathbf{f})$
and $\boldsymbol{\mu}^{\star}$ is the Lagrange multiplier associated
with the constraint $\mathbf{g}(\mathbf{p})\leq\mathbf{0}$, then
$\boldsymbol{\mu}^{\star}$ is a solution of $\mathrm{GVI}(\mathbb{R}_{+}^{N},\mathcal{Q})$
and $\mathbf{p}^{\star}$ is an NE of $\mathcal{G}_{\boldsymbol{\mu}^{\star}}$.
\end{thm}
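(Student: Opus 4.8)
The plan is to recognize that $\mathrm{VI}(\mathcal{S},\mathbf{f})$ on one side, and the pair consisting of an NE of $\mathcal{G}_{\boldsymbol{\mu}^\star}$ together with the GVI condition (\ref{vr:lh}) on the other, are two different groupings of one and the same KKT system; once this is established, both implications in the theorem follow simultaneously. The first step is to write out the KKT conditions of $\mathrm{VI}(\mathcal{S},\mathbf{f})$. Since $\mathcal{S}=\{\mathbf{p}\in\mathcal{S}^{\mathrm{pow}}:\mathbf{g}(\mathbf{p})\leq\mathbf{0}\}$ with $\mathcal{S}^{\mathrm{pow}}$ polyhedral and $\mathbf{g}$ affine, I would invoke the standard KKT characterization of a VI over a set defined by convex inequality constraints \cite{FacchineiPang03}: $\mathbf{p}^\star$ solves $\mathrm{VI}(\mathcal{S},\mathbf{f})$ if and only if there is a multiplier $\boldsymbol{\mu}^\star\geq\mathbf{0}$ such that (i) $(\mathbf{p}-\mathbf{p}^\star)^T[\mathbf{f}(\mathbf{p}^\star)+\nabla\mathbf{g}(\mathbf{p}^\star)^T\boldsymbol{\mu}^\star]\geq0$ for all $\mathbf{p}\in\mathcal{S}^{\mathrm{pow}}$, and (ii) $\mathbf{g}(\mathbf{p}^\star)\leq\mathbf{0}$ with $(\boldsymbol{\mu}^\star)^T\mathbf{g}(\mathbf{p}^\star)=0$. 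Because $\mathbf{g}$ is affine and $\mathcal{S}^{\mathrm{pow}}$ is polyhedral, the relevant constraint qualification holds at every feasible point, so this equivalence is both necessary and sufficient and $\boldsymbol{\mu}^\star$ is exactly the Lagrange multiplier named in the statement.

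The second step is to identify the stationarity part (i) with the NE condition of $\mathcal{G}_{\boldsymbol{\mu}^\star}$. The key observation is that $\mathbf{f}(\mathbf{p})+\nabla\mathbf{g}(\mathbf{p})^T\boldsymbol{\mu}$ is precisely the stacked pseudo-gradient of the penalized game $\mathcal{G}_{\boldsymbol{\mu}}$, since for each player $\nabla_{\mathbf{p}_i}[-R_i(\mathbf{p})+\boldsymbol{\mu}^T\mathbf{g}(\mathbf{p})]=\mathbf{f}_i(\mathbf{p})+\nabla_{\mathbf{p}_i}(\boldsymbol{\mu}^T\mathbf{g}(\mathbf{p}))$. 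Each payoff $R_i-\boldsymbol{\mu}^T\mathbf{g}$ is concave in $\mathbf{p}_i$ (as $R_i$ is concave in $\mathbf{p}_i$ and $\boldsymbol{\mu}^T\mathbf{g}$ is linear) over the convex set $\mathcal{S}_i^{\mathrm{pow}}$, so the NE is characterized by the collection of per-player first-order optimality conditions; and because $\mathcal{S}^{\mathrm{pow}}=\prod_{i}\mathcal{S}_i^{\mathrm{pow}}$ is a Cartesian product, this collection is exactly the single variational inequality in (i). Hence, with $\mathbf{p}^\star=\mathbf{p}^\star(\boldsymbol{\mu}^\star)$, condition (i) holds if and only if $\mathbf{p}^\star(\boldsymbol{\mu}^\star)$ is an NE of $\mathcal{G}_{\boldsymbol{\mu}^\star}$.

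The third step is to identify the complementarity part (ii) with the GVI condition. Holding $\mathbf{p}^\star=\mathbf{p}^\star(\boldsymbol{\mu}^\star)$ fixed, (ii) together with $\boldsymbol{\mu}^\star\geq\mathbf{0}$ is exactly the complementarity system $\mathbf{0}\leq\boldsymbol{\mu}^\star\perp-\mathbf{g}(\mathbf{p}^\star)\geq\mathbf{0}$, which by the standard equivalence between a complementarity problem and a VI over the nonnegative orthant is the same as requiring $(\boldsymbol{\mu}-\boldsymbol{\mu}^\star)^T\mathbf{g}(\mathbf{p}^\star(\boldsymbol{\mu}^\star))\leq0$ for all $\boldsymbol{\mu}\geq\mathbf{0}$. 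Recalling that $\mathcal{Q}(\boldsymbol{\mu})=\{-\mathbf{g}(\mathbf{p}^\star(\boldsymbol{\mu}))\}$, this is precisely the statement (\ref{vr:lh}) that $\boldsymbol{\mu}^\star$ solves $\mathrm{GVI}(\mathbb{R}_{+}^{N},\mathcal{Q})$. Assembling the pieces now yields both directions: for the forward implication I read off (i) and (ii) from a GVI solution paired with its NE and conclude that $\mathbf{p}^\star(\boldsymbol{\mu}^\star)$ solves $\mathrm{VI}(\mathcal{S},\mathbf{f})$ with multiplier $\boldsymbol{\mu}^\star$; for the converse I start from a VI solution and its multiplier and run the same two equivalences backwards to recover the NE and the GVI solution.

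The step I expect to demand the most care is the bookkeeping around the point-to-set map $\mathcal{Q}$. Because $\mathcal{G}_{\boldsymbol{\mu}}$ may possess several NEs, $\mathcal{Q}$ is genuinely set-valued, so one must ensure that the \emph{same} NE selection $\mathbf{p}^\star(\boldsymbol{\mu}^\star)$ enters both (i) and (ii). The theorem is deliberately phrased in terms of a matched NE--multiplier pair, so the correspondence should be stated at the level of such pairs rather than claimed for the set $\mathcal{Q}(\boldsymbol{\mu}^\star)$ as a whole, after which the multi-valuedness is harmless. The only other point worth making explicit is the constraint qualification underlying the KKT step, which, as noted, is automatic here because $\mathbf{g}$ is affine and $\mathcal{S}^{\mathrm{pow}}$ is polyhedral.
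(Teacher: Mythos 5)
Your proposal is correct and follows essentially the same route as the paper, which proves Theorem \ref{thm:gvi} precisely by comparing the KKT conditions of $\mathrm{GVI}(\mathbb{R}_{+}^{N},\mathcal{Q})$ and $\mathrm{VI}(\mathcal{S},\mathbf{f})$ (deferring details to \cite{FacchineiPang03,WangPeng14}). You have simply carried out in full the argument the paper only sketches---splitting the KKT system into the stationarity part (the NE condition of $\mathcal{G}_{\boldsymbol{\mu}^{\star}}$ via the Cartesian-product structure of $\mathcal{S}^{\mathrm{pow}}$) and the complementarity part (the VI over $\mathbb{R}_{+}^{N}$)---including the correct observations that the affine $\mathbf{g}$ and polyhedral $\mathcal{S}^{\mathrm{pow}}$ make the constraint qualification automatic and that the set-valuedness of $\mathcal{Q}$ is handled by working with matched NE--multiplier pairs.
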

\begin{proof}
The proof is based on comparing the Karush-Kuhn-Tucker (KKT) conditions
of $\mathrm{GVI}(\mathbb{R}_{+}^{N},\mathcal{Q})$ and $\mathrm{VI}(\mathcal{S},\mathbf{f})$.
Due to the space limitation, we refer the interested reader to \cite{FacchineiPang03,WangPeng14}
for details..
\end{proof}
Theorem \ref{thm:gvi} establishes the equivalence between $\mathrm{VI}(\mathcal{S},\mathbf{f})$
and $\mathrm{GVI}(\mathbb{R}_{+}^{N},\mathcal{Q})$ and enables us
to obtain a GNE of $\mathcal{G}$ by solving $\mathrm{GVI}(\mathbb{R}_{+}^{N},\mathcal{Q})$
instead. The benefit of this approach is its amenability to pricing
mechanisms, which facilitate the development of distributed algorithms
for computing the VE of $\mathcal{G}$. Specifically, the nonnegative
vector $\boldsymbol{\mu}$ can be regarded as the price of violating
the QoS constraint $\mathbf{g}(\mathbf{p})\leq\mathbf{0}$, and the
term $\boldsymbol{\mu}^{T}\mathbf{g}(\mathbf{p})$ is the cost paid
by all BSs. Given the price $\boldsymbol{\mu}$, the BSs (including
both the MBS and the SBSs) will compete and play NEP $\mathcal{G}_{\boldsymbol{\mu}}$
to reach an NE $\mathbf{p}^{\star}(\boldsymbol{\mu})$. The task of
$\mathrm{GVI}(\mathbb{R}_{+}^{N},\mathcal{Q})$ is to choose an appropriate
$\boldsymbol{\mu}^{\star}$ so that at this point NE $\mathbf{p}^{\star}(\boldsymbol{\mu}^{\star})$
is also a VE of GNEP $\mathcal{G}$. Consequently, the difficult problem
of finding a GNE of $\mathcal{G}$ is decomposed into two subproblems:
1) How to solve NEP $\mathcal{G}_{\boldsymbol{\mu}}$ for a given
price, and 2) how to choose the price $\boldsymbol{\mu}$. In the
next section, we will show that these two subproblems can both be
addressed distributively.

\section{Distributed Computation of GNE\label{sec:dcg}}

\subsection{Distributed Pricing Algorithm\label{sec:dm}}

In this subsection, we will establish a distributed pricing mechanism
to achieve the VE of $\mathcal{G}$ by solving the two subproblems
mentioned above. We first address the subproblem of how to solve NEP
$\mathcal{G}_{\boldsymbol{\mu}}$ for a given price $\boldsymbol{\mu}$.
Our focus is on obtaining the NE via the best response algorithm that
only uses local information. To this end, we shall first investigate
the existence and uniqueness properties of the NE of $\mathcal{G}_{\boldsymbol{\mu}}$.
This can be done by linking $\mathcal{G}_{\boldsymbol{\mu}}$ to a
VI.

Let us introduce $\mathcal{S}^{\mathrm{pow}}\triangleq\prod_{i=0}^{M}\mathcal{S}_{i}^{\mathrm{pow}}$
and $\mathbf{f}_{\boldsymbol{\mu}}(\mathbf{p})\triangleq\left(\mathbf{f}_{\boldsymbol{\mu},i}(\mathbf{p})\right)_{i=0}^{M}$
with 
\begin{align}
\mathbf{f}_{\boldsymbol{\mu},i}(\mathbf{p}) & \triangleq-\nabla_{\mathbf{p}_{i}}R_{i}(\mathbf{p})+\sum_{n=1}^{N}\mu_{n}\nabla_{\mathbf{p}_{i}}g_{n}(\mathbf{p})\nonumber \\
 & =\begin{cases}
(\mathbf{f}_{0}(\mathbf{p})-\tilde{h}_{00}(n)\mu_{n})_{n=1}^{N}, & i=0\\
(\mathbf{f}_{i}(\mathbf{p})+h_{i0}(n)\mu_{n})_{n=1}^{N}, & i=1,\ldots,M
\end{cases}\label{dm:fx}
\end{align}
where $\nabla_{\mathbf{p}_{i}}R_{i}(\mathbf{p})$ and $\nabla_{\mathbf{p}_{i}}g_{n}(\mathbf{p})$
are the partial derivatives of $R_{i}(\mathbf{p})$ and $g_{n}(\mathbf{p})$
with respect to $\mathbf{p}_{i}$, respectively. Then, NEP $\mathcal{G}_{\boldsymbol{\mu}}$
is equivalent to the following VI based on $\mathcal{S}^{\mathrm{pow}}$
and $\mathbf{f}_{\boldsymbol{\mu}}(\mathbf{p})$.
\begin{lem}
\label{lem:uvi}Given $\boldsymbol{\mu}\geq\mathbf{0}$, $\mathcal{G}_{\boldsymbol{\mu}}$
is equivalent to $\mathrm{VI}(\mathcal{S}^{\mathrm{pow}},\mathbf{f}_{\boldsymbol{\mu}})$,
i.e., $\mathbf{p}^{\star}$ is an NE of $\mathcal{G}_{\boldsymbol{\mu}}$
if and only if $\mathbf{p}^{\star}$ satisfies $\left(\mathbf{p}-\mathbf{p}^{\star}\right)^{T}\mathbf{f}_{\boldsymbol{\mu}}\left(\mathbf{p}^{\star}\right)\geq0$,
$\forall\mathbf{p}\in\mathcal{S}^{\mathrm{pow}}$.
\end{lem}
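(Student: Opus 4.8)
The plan is to prove the claimed equivalence by first reducing the NE condition to a per-player first-order optimality condition, and then stacking these conditions into a single variational inequality, exploiting the Cartesian product structure of $\mathcal{S}^{\mathrm{pow}}$.

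First I would unpack the definition of an NE of $\mathcal{G}_{\boldsymbol{\mu}}$: a profile $\mathbf{p}^{\star}$ is an NE if and only if, for every $i$, the block $\mathbf{p}_{i}^{\star}$ maximizes the player-$i$ payoff $u_{i}(\mathbf{p}_{i})\triangleq R_{i}(\mathbf{p}_{i},\mathbf{p}_{-i}^{\star})-\boldsymbol{\mu}^{T}\mathbf{g}(\mathbf{p}_{i},\mathbf{p}_{-i}^{\star})$ over $\mathbf{p}_{i}\in\mathcal{S}_{i}^{\mathrm{pow}}$. The essential structural fact is that each $u_{i}$ is concave and continuously differentiable in $\mathbf{p}_{i}$: the rate $R_{i}$ is concave in $\mathbf{p}_{i}$ (as already noted before the VI reformulation), and the penalty $\boldsymbol{\mu}^{T}\mathbf{g}$ is affine in $\mathbf{p}_{i}$ because $\mathbf{g}$ is linear and $\boldsymbol{\mu}\geq\mathbf{0}$ is fixed, so subtracting it preserves concavity. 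Moreover $\mathcal{S}_{i}^{\mathrm{pow}}$ is convex and compact.

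Next I would apply the standard first-order characterization of a maximizer of a differentiable concave function over a convex set: $\mathbf{p}_{i}^{\star}$ is optimal for player $i$ if and only if $(\mathbf{p}_{i}-\mathbf{p}_{i}^{\star})^{T}\nabla_{\mathbf{p}_{i}}u_{i}(\mathbf{p}_{i}^{\star})\leq0$ for all $\mathbf{p}_{i}\in\mathcal{S}_{i}^{\mathrm{pow}}$. Recalling from (\ref{dm:fx}) that $\mathbf{f}_{\boldsymbol{\mu},i}(\mathbf{p})=-\nabla_{\mathbf{p}_{i}}R_{i}(\mathbf{p})+\sum_{n}\mu_{n}\nabla_{\mathbf{p}_{i}}g_{n}(\mathbf{p})$, the gradient of $u_{i}$ coincides with $-\mathbf{f}_{\boldsymbol{\mu},i}$ evaluated at $\mathbf{p}^{\star}$, so the optimality condition is exactly the per-player variational inequality $(\mathbf{p}_{i}-\mathbf{p}_{i}^{\star})^{T}\mathbf{f}_{\boldsymbol{\mu},i}(\mathbf{p}^{\star})\geq0$ for all $\mathbf{p}_{i}\in\mathcal{S}_{i}^{\mathrm{pow}}$.

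Finally I would assemble these $M+1$ per-player inequalities into the single VI over $\mathcal{S}^{\mathrm{pow}}=\prod_{i}\mathcal{S}_{i}^{\mathrm{pow}}$. The implication from the aggregate VI to each per-player VI follows by choosing test points of the form $(\mathbf{p}_{i},\mathbf{p}_{-i}^{\star})$, which keep all blocks but the $i$-th fixed at the equilibrium value, so that only the $i$-th summand survives in the inner product. Conversely, any $\mathbf{p}\in\mathcal{S}^{\mathrm{pow}}$ decomposes blockwise, so summing the per-player inequalities over $i$ yields $(\mathbf{p}-\mathbf{p}^{\star})^{T}\mathbf{f}_{\boldsymbol{\mu}}(\mathbf{p}^{\star})=\sum_{i}(\mathbf{p}_{i}-\mathbf{p}_{i}^{\star})^{T}\mathbf{f}_{\boldsymbol{\mu},i}(\mathbf{p}^{\star})\geq0$. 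I expect the only delicate point to be precisely this equivalence between the collection of per-player VIs and the single aggregate VI, which rests entirely on the decoupled (product) structure of $\mathcal{S}^{\mathrm{pow}}$; the concavity argument and the first-order condition are routine once concavity of $R_{i}$ in its own variable and linearity of $\mathbf{g}$ are invoked.
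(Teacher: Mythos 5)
Your proposal is correct and follows essentially the same route as the paper, which proves the lemma by ``comparing the optimality conditions of the NE of $\mathcal{G}_{\boldsymbol{\mu}}$ and the solution of $\mathrm{VI}(\mathcal{S}^{\mathrm{pow}},\mathbf{f}_{\boldsymbol{\mu}})$'' with a citation to Facchinei--Pang; you have simply written out in full the standard argument behind that citation (per-player concavity, the first-order characterization of a constrained maximizer, and the Cartesian-product decomposition of the aggregate VI).
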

\begin{proof}
Lemma \ref{lem:uvi} is proved by comparing the optimality conditions
of the NE of $\mathcal{G}_{\boldsymbol{\mu}}$ and the solution of
$\mathrm{VI}(\mathcal{S}^{\mathrm{pow}},\mathbf{f}_{\boldsymbol{\mu}})$
\cite{FacchineiPang03}.
\end{proof}
With the help of Lemma \ref{lem:uvi}, we can now analyze $\mathcal{G}_{\boldsymbol{\mu}}$
via $\mathrm{VI}(\mathcal{S}^{\mathrm{pow}},\mathbf{f}_{\boldsymbol{\mu}})$
using established facts from VI theory. The existence of an NE of
$\mathcal{G}_{\boldsymbol{\mu}}$ is always guaranteed \cite{Rosen65},
since for each player $i$ the utility in (\ref{vr:gu}) is concave
in $\mathbf{p}_{i}$ and the strategy set $\mathcal{S}_{i}^{\mathrm{pow}}$
is convex and compact. The uniqueness of the solution of $\mathrm{VI}(\mathcal{S}^{\mathrm{pow}},\mathbf{f}_{\boldsymbol{\mu}})$
is implied by the uniformly P property of $\mathbf{f}_{\boldsymbol{\mu}}$.
Similar to Proposition \ref{pro:unq}, we are also able to provide
a sufficient condition for a unique NE. 
\begin{prop}
\label{pro:ung}Given $\boldsymbol{\mu}\geq\mathbf{0}$, $\mathrm{VI}(\mathcal{S}^{\mathrm{pow}},\mathbf{f}_{\boldsymbol{\mu}})$
($\mathcal{G}_{\boldsymbol{\mu}}$) has a unique solution (NE) if
$\Psi$ is a P-matrix or $\rho(\mathbf{\Phi})<1$. 
\end{prop}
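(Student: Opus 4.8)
The plan is to reduce the claim to the uniqueness analysis already carried out for Proposition~\ref{pro:unq}, exploiting the equivalence in Lemma~\ref{lem:uvi}. By Lemma~\ref{lem:uvi}, an NE of $\mathcal{G}_{\boldsymbol{\mu}}$ is precisely a solution of $\mathrm{VI}(\mathcal{S}^{\mathrm{pow}},\mathbf{f}_{\boldsymbol{\mu}})$. Existence is automatic: each $\mathcal{S}_i^{\mathrm{pow}}$ is convex and compact and each objective in (\ref{vr:gu}) is concave in $\mathbf{p}_i$. Hence only uniqueness needs to be established, for which it suffices, by the VI theory recalled in Appendix~\ref{subsec:vi}, to show that $\mathbf{f}_{\boldsymbol{\mu}}$ is a uniformly P function on $\mathcal{S}^{\mathrm{pow}}$.

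The observation that drives everything is that the shared constraint function $\mathbf{g}$ in (\ref{vr:gn}) is affine in $\mathbf{p}$, so each gradient $\nabla_{\mathbf{p}_i} g_n(\mathbf{p})$ is a constant vector. Consequently the pricing correction $\sum_{n=1}^{N}\mu_n \nabla_{\mathbf{p}_i} g_n(\mathbf{p})$ appearing in (\ref{dm:fx}) is a $\mathbf{p}$-independent shift $\mathbf{a}_i(\boldsymbol{\mu})$, and therefore
\[
\mathbf{f}_{\boldsymbol{\mu}}(\mathbf{p}) = \mathbf{f}(\mathbf{p}) + \mathbf{a}(\boldsymbol{\mu}),
\]
where $\mathbf{a}(\boldsymbol{\mu}) \triangleq (\mathbf{a}_i(\boldsymbol{\mu}))_{i=0}^{M}$ depends only on $\boldsymbol{\mu}$. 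For any $\mathbf{p},\mathbf{q}\in\mathcal{S}^{\mathrm{pow}}$ the constant cancels in the increment, giving $\mathbf{f}_{\boldsymbol{\mu}}(\mathbf{p})-\mathbf{f}_{\boldsymbol{\mu}}(\mathbf{q}) = \mathbf{f}(\mathbf{p})-\mathbf{f}(\mathbf{q})$; equivalently, the Jacobians coincide, $\nabla\mathbf{f}_{\boldsymbol{\mu}}(\mathbf{p}) = \nabla\mathbf{f}(\mathbf{p})$. Since the uniformly P property is defined entirely through such increments, $\mathbf{f}_{\boldsymbol{\mu}}$ is uniformly P on $\mathcal{S}^{\mathrm{pow}}$ if and only if $\mathbf{f}$ is.

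I would then invoke the argument used for Proposition~\ref{pro:unq} in Appendix~\ref{subsec:unq} verbatim. There, the P-matrix property of $\Psi$ bounds the blocks of the Jacobian of $\mathbf{f}$: the diagonal entries lower-bound the own curvature $h_{ii}^2(n)/I_{i,n}^2$ via the maximal interference $p_{l,n}^{\mathrm{max}}$, and the off-diagonal entries upper-bound the cross terms $h_{ii}(n)h_{ji}(n)/I_{i,n}^2$ via $I_{i,n}\ge\sigma_i(n)$, yielding that $\mathbf{f}$ is uniformly P. Because $\nabla\mathbf{f}_{\boldsymbol{\mu}} = \nabla\mathbf{f}$, the same bounds and the same matrix $\Psi$ apply unchanged to $\mathbf{f}_{\boldsymbol{\mu}}$, so $\Psi$ being a P-matrix makes $\mathbf{f}_{\boldsymbol{\mu}}$ uniformly P and thus $\mathrm{VI}(\mathcal{S}^{\mathrm{pow}},\mathbf{f}_{\boldsymbol{\mu}})$ has a unique solution. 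Lemma~\ref{lem:sfe} then supplies the equivalence between $\Psi$ being a P-matrix and $\rho(\Phi)<1$, covering the second stated condition.

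I expect no substantive obstacle: the entire content is the recognition that the affineness of $\mathbf{g}$ turns the pricing term into a constant shift, after which Proposition~\ref{pro:ung} follows from the analysis of Proposition~\ref{pro:unq} with no new computation. The only point requiring mild care is that the relevant domain here is the full power box $\mathcal{S}^{\mathrm{pow}}$ rather than $\mathcal{S}$, but this is harmless since the quantities $p_{l,n}^{\mathrm{max}}$ defining $\Psi$ are already the peak/sum power limits associated with $\mathcal{S}^{\mathrm{pow}}$.
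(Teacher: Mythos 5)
Your proposal is correct and is essentially the paper's own argument: the paper's proof likewise observes that the pricing term $\sum_{n=1}^{N}\mu_{n}\nabla_{\mathbf{p}_{i}}g_{n}(\mathbf{p})$ is a constant (since $\mathbf{g}$ is affine), so the uniformly P property of $\mathbf{f}_{\boldsymbol{\mu}}$ is inherited from that of $\mathbf{f}$ established in Proposition~\ref{pro:unq}. Your additional remark that the bounds defining $\Psi$ already hold on all of $\mathcal{S}^{\mathrm{pow}}$ (not merely on $\mathcal{S}$) is a careful point the paper leaves implicit, and it is correct.
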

\begin{proof}
Since the term $\sum_{n=1}^{N}\mu_{n}\nabla_{\mathbf{p}_{i}}g_{n}(\mathbf{p})$
in $\mathbf{f}_{\boldsymbol{\mu},i}(\mathbf{p})$ is a constant, the
uniformly P property of $\mathbf{f}_{\boldsymbol{\mu}}$ is implied
by that of $\mathbf{f}(\mathbf{p})$, which has been proved in Proposition
\ref{pro:unq}.
\end{proof}
Interestingly, Propositions \ref{pro:unq} and \ref{pro:ung} provide
the same uniqueness condition. Therefore, if GNEP $\mathcal{G}$ admits
a unique VE, then NEP $\mathcal{G}_{\boldsymbol{\mu}}$ also has a
unique NE, regardless of price $\boldsymbol{\mu}$. As mentioned above,
the condition of $\Psi$ being a P-matrix or $\rho(\mathbf{\Phi})<1$
can be understood as the interference in the small cell network being
not too large.

Now, we consider the decentralized computation of the NE of $\mathcal{G}_{\boldsymbol{\mu}}$
via the best response algorithm, i.e., each BS aims to maximize its
own utility by solving the problem in (\ref{vr:gu}). More exactly,
in each iteration, the MBS (BS 0) and SBSs (BS $i=1,\ldots,M$) shall
solve the following equivalent problems, respectively, 
\begin{alignat}{1}
 & \underset{\mathbf{p}_{0}\in\mathcal{S}_{0}^{\mathrm{pow}}}{\mathrm{maximize}}\;R_{i}(\mathbf{p}_{0},\mathbf{p}_{-0})+\sum_{n=1}^{N}\mu_{n}\tilde{h}_{00}(n)p_{0}(n)\label{dm:mbs}\\
 & \underset{\mathbf{p}_{i}\in\mathcal{S}_{i}^{\mathrm{pow}}}{\mathrm{maximize}}\;R_{i}(\mathbf{p}_{i},\mathbf{p}_{-i})-\sum_{n=1}^{N}\mu_{n}h_{i0}(n)p_{i}(n).\label{dm:sbs}
\end{alignat}
We are able to find the closed-form solution to (\ref{dm:mbs}) and
(\ref{dm:sbs}): 
\begin{multline}
p_{i}^{\star}(n)=V_{i,n}(\mathbf{p}_{-i},\boldsymbol{\mu})\triangleq\\
\begin{cases}
\left[\frac{1}{\lambda_{0}-\mu_{n}\tilde{h}_{00}(n)}-\frac{I_{0,n}(\mathbf{p}_{-0})}{h_{00}(n)}\right]_{0}^{p_{0,n}^{\mathrm{peak}}}, & i=0\\
\left[\frac{1}{\lambda_{i}+\mu_{n}h_{i0}(n)}-\frac{I_{i,n}(\mathbf{p}_{-i})}{h_{ii}(n)}\right]_{0}^{p_{i,n}^{\mathrm{peak}}}, & i=1,\ldots,M,
\end{cases}\label{dm:pi}
\end{multline}
where $I_{i,n}(\mathbf{p}_{-i})\triangleq\sigma_{i}(n)+\sum_{j\neq i}h_{ji}(n)p_{j}(n)$,
and $\lambda_{i}$ is the minimum value such that $\sum_{n=1}^{N}p_{i}^{\star}(n)\leq p_{i}^{\mathrm{sum}}$
for $\forall i$. By defining $\mathbf{V}_{i}(\mathbf{p}_{-i},\boldsymbol{\mu})\triangleq\left(V_{i,n}(\mathbf{p}_{-i},\boldsymbol{\mu})\right)_{n=1}^{N}$,
the best response of each BS can be compactly expressed as $\mathbf{p}_{i}^{\star}=\mathbf{V}_{i}(\mathbf{p}_{-i},\boldsymbol{\mu})$
for $i=0,\ldots,M$. The best response algorithm is formally stated
in Algorithm 1, where $\mathbf{p}^{t}\triangleq(\mathbf{p}_{i}^{t})_{i=0}^{M}$
represents the strategy profile generated in iteration $t$. The convergence
of Algorithm 1 is characterized in Proposition \ref{pro:bes}.

\begin{algorithm}[tbph]
\caption{\textbf{: Distributed Best Response Algorithm for }$\mathcal{G}_{\boldsymbol{\mu}}$}

1: Set the initial point $\mathbf{p}^{0}$, precision $\epsilon$,
and $t=0$;

2: Update $\mathbf{p}_{i}^{t+1}=\mathbf{V}_{i}\left(\mathbf{p}_{-i}^{t},\boldsymbol{\mu}\right)$
for $i=0,\ldots,M$;

3: $t=t+1$;

4: If $\left\Vert \mathbf{p}^{t}-\mathbf{p}^{t-1}\right\Vert \leq\epsilon$
stop, otherwise go to step 2.
\end{algorithm}

\begin{prop}
\label{pro:bes}The sequence $\left\{ \mathbf{p}^{t}\right\} _{t=0}^{\infty}$
generated by Algorithm 1 converges to the unique NE of $\mathcal{G}_{\boldsymbol{\mu}}$,
provided that $\mathbf{\Psi}$ is a P-matrix or $\rho(\mathbf{\Phi})<1$.
\end{prop}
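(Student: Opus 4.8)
The plan is to show that the simultaneous (Jacobi-type) best-response iteration in Algorithm~1 is a contraction mapping in a suitably weighted block-maximum norm, so that the Banach fixed-point theorem yields geometric convergence to the unique fixed point, which by Lemma~\ref{lem:uvi} is precisely the NE of $\mathcal{G}_{\boldsymbol{\mu}}$. Existence and uniqueness of that NE are already secured by Proposition~\ref{pro:ung} under the stated hypothesis, so the only thing left to establish is global convergence of the iteration.

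First I would view the stacked best response $\mathbf{V}(\mathbf{p},\boldsymbol{\mu})\triangleq(\mathbf{V}_{i}(\mathbf{p}_{-i},\boldsymbol{\mu}))_{i=0}^{M}$ as a self-map on $\mathcal{S}^{\mathrm{pow}}$ and rewrite each block $\mathbf{V}_{i}$ in (\ref{dm:pi}) as a Euclidean projection onto $\mathcal{S}_{i}^{\mathrm{pow}}$. The key observation is that $\mathbf{V}_{i}$ depends on the rival strategies $\mathbf{p}_{-i}$ only through the aggregate interference $I_{i,n}(\mathbf{p}_{-i})=\sigma_{i}(n)+\sum_{j\neq i}h_{ji}(n)p_{j}(n)$, since the price term enters (\ref{dm:pi}) only as a fixed additive constant. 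Exploiting the nonexpansiveness of the projection, I would bound $\|\mathbf{V}_{i}(\mathbf{p}_{-i},\boldsymbol{\mu})-\mathbf{V}_{i}(\mathbf{q}_{-i},\boldsymbol{\mu})\|$ by the perturbation of the interference-shaped argument and then push the channel gains through to obtain a per-block Lipschitz estimate of the form
\[
\|\mathbf{V}_{i}(\mathbf{p}_{-i},\boldsymbol{\mu})-\mathbf{V}_{i}(\mathbf{q}_{-i},\boldsymbol{\mu})\|\le\sum_{j\neq i}[\mathbf{\Phi}]_{ij}\,\|\mathbf{p}_{j}-\mathbf{q}_{j}\|,
\]
where the coefficients are exactly those in (\ref{vr:mf}). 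The $\min_{n}$ on the diagonal of $\Psi$ and the $\max_{n}$ off the diagonal arise from taking the worst-case channel when lower/upper bounding the sensitivity of the waterfilling level and the interference gain across the $N$ channels.

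Having obtained this block-Lipschitz bound with the nonnegative matrix $\mathbf{\Phi}$, I would invoke the standard fact (Bertsekas--Tsitsiklis) that whenever $\rho(\mathbf{\Phi})<1$ there is a strictly positive weight vector $\mathbf{w}$ (built from the Perron eigenvector of $\mathbf{\Phi}^{T}$) such that $\mathbf{V}(\cdot,\boldsymbol{\mu})$ is a contraction in the block-maximum norm $\|\mathbf{p}\|_{\mathbf{w}}^{\mathrm{block}}\triangleq\max_{i}\|\mathbf{p}_{i}\|/w_{i}$, with modulus strictly less than one. The contraction property then gives geometric convergence of $\{\mathbf{p}^{t}\}$ to the unique fixed point, i.e.\ the NE. Finally, Lemma~\ref{lem:sfe} translates the hypothesis ``$\mathbf{\Psi}$ is a P-matrix'' into the equivalent spectral condition $\rho(\mathbf{\Phi})<1$, so either form of the assumption suffices to trigger the argument.

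The main obstacle I anticipate is the per-block Lipschitz estimate itself: establishing that the sensitivity of the capped-waterfilling best response to the rival powers is genuinely dominated by the coefficients $[\mathbf{\Phi}]_{ij}$. The complication is that the waterfilling level $\lambda_{i}$ couples all channels of player $i$ through the sum-power budget, so the projection is onto the intersection of a box and a simplex rather than a plain box. I would need to argue, via the nonexpansiveness of the projection together with a worst-case bound over $n$ that effectively decouples the channels, that this coupling does not inflate the Lipschitz constant beyond the stated per-channel worst-case ratios. Once that bound is in place, the contraction and convergence steps are routine.
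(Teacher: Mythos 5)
Your plan stands or falls on one step: rewriting the priced best response (\ref{dm:pi}) as a Euclidean projection onto $\mathcal{S}_i^{\mathrm{pow}}$ and invoking nonexpansiveness. That rewriting is false whenever $\boldsymbol{\mu}\neq\mathbf{0}$, and your justification --- ``the price term enters (\ref{dm:pi}) only as a fixed additive constant'' --- is exactly where it breaks. The price is not added to the projected point; it is added to the Lagrange multiplier inside the reciprocal water level $1/(\lambda_i+\mu_n h_{i0}(n))$. A Euclidean projection onto the box--simplex intersection $\mathcal{S}_i^{\mathrm{pow}}$ necessarily has the form $p_i(n)=[x_i(n)-t]_0^{p_{i,n}^{\mathrm{peak}}}$ with a \emph{single} scalar $t$ shifting every channel by the same amount, whereas in (\ref{dm:pi}) the per-channel levels $1/(\lambda_i+\mu_n h_{i0}(n))$ move nonuniformly and nonlinearly as $\lambda_i$ varies; a short computation shows that no point $\mathbf{x}_i(\mathbf{p}_{-i})$ independent of $\lambda_i$ can reproduce this map unless all the products $\mu_n h_{i0}(n)$ coincide (the MBS block has the same problem with $-\mu_n\tilde{h}_{00}(n)$). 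So the nonexpansiveness-of-projection step has no footing. Moreover, even in the unpriced case $\boldsymbol{\mu}=\mathbf{0}$, where the projection identity \emph{is} valid, it yields Lipschitz coefficients $\max_n h_{ji}(n)/h_{ii}(n)$, not the entries $[\mathbf{\Phi}]_{ij}$ of (\ref{vr:mf}), which involve the noise powers and the maximum powers; your claim that the projection argument produces exactly the coefficients of (\ref{vr:mf}) is therefore unsubstantiated as well.

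Your underlying intuition --- that the price is harmless --- is correct, but the right place to exploit it is the first-order conditions, not the closed-form waterfilling expression, and this is precisely what the paper does. Since $\mathbf{f}_{\boldsymbol{\mu},i}(\mathbf{p})-\mathbf{f}_i(\mathbf{p})$ is a constant vector, the price cancels in differences. The paper writes the VI characterization of the NE (Lemma \ref{lem:uvi}) and the first-order optimality condition of the best response at iteration $t$, adds the two inequalities (the price terms drop out), and then applies the mean value theorem together with the Hessian bounds $\beta_{ij}^{\max}$ and $\alpha_i^{\min}$ from (\ref{sm:bt}) and (\ref{sm:af}) --- i.e., a cross-derivative bound and a strong-concavity modulus of $R_i$ in $\mathbf{p}_i$ over the compact feasible set. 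This gives the componentwise error recursion $s_i^{t+1}\leq\frac{1}{\alpha_i^{\min}}\sum_{j\neq i}\beta_{ij}^{\max}s_j^{t}$, i.e., $\mathbf{s}^{t+1}\leq\mathbf{\Phi}\mathbf{s}^{t}$ with exactly the coefficients you wanted, and $\rho(\mathbf{\Phi})<1$ (equivalently $\mathbf{\Psi}$ a P-matrix, by Lemma \ref{lem:sfe}) forces $\mathbf{s}^{t}\rightarrow\mathbf{0}$. This strongly-monotone-perturbation argument absorbs the sum-power coupling and the channel-dependent prices simultaneously and is the repair your proof needs; the rest of your scaffolding (uniqueness via Proposition \ref{pro:ung}, the spectral-radius/weighted-max-norm conclusion) is sound and matches the paper's final step.
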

\begin{proof}
See Appendix \ref{subsec:bes}.
\end{proof}
Next, we consider the second subproblem of how to choose price $\boldsymbol{\mu}$
to solve $\mathrm{GVI}(\mathbb{R}_{+}^{N},\mathcal{Q})$ and obtain
the VE of $\mathcal{G}$. For this purpose, we shall investigate how
the global QoS constraint $\mathbf{g}(\mathbf{p})\leq\mathbf{0}$
is related to price $\boldsymbol{\mu}$. Since NE $\mathbf{p}^{\star}(\boldsymbol{\mu})$
of $\mathcal{G}_{\boldsymbol{\mu}}$ is a function of $\boldsymbol{\mu}$,
$\mathbf{g}(\mathbf{p}^{\star}(\boldsymbol{\mu}))$, or for short
$\mathbf{g}(\boldsymbol{\mu})$, is also a function of $\boldsymbol{\mu}$
but through a rather complicated relation. Given that $\mathbf{\Psi}$
is a P-matrix, $\mathbf{g}(\boldsymbol{\mu})$ is unique and thus
multifunction $\mathcal{Q}(\boldsymbol{\mu}):\mathbb{R}_{+}^{N}\rightarrow\left\{ -\mathbf{g}(\boldsymbol{\mu})\right\} $
reduces to a single-valued function $\mathcal{Q}(\boldsymbol{\mu})=-\mathbf{g}(\boldsymbol{\mu})$,
so $\mathrm{GVI}(\mathbb{R}_{+}^{N},\mathcal{Q})$ becomes $\mathrm{VI}(\mathbb{R}_{+}^{N},-\mathbf{g}(\boldsymbol{\mu}))$.
Interestingly, $\mathbf{g}(\boldsymbol{\mu})$ has the following property.
\begin{lem}
\label{lem:coc}(\cite{ScutariPalomarFacchineiPang11}) Given $\mathbf{\Psi}\succ\mathbf{0}$,
$-\mathbf{g}(\boldsymbol{\mu})$ is co-coercive in $\boldsymbol{\mu}$,
i.e., there exists a constant $c_{coc}$ such that $\left(\boldsymbol{\mu}_{1}-\boldsymbol{\mu}_{2}\right)^{T}\left(\mathbf{g}(\boldsymbol{\mu}_{2})-\mathbf{g}(\boldsymbol{\mu}_{1})\right)\geq c_{\mathrm{coc}}\left\Vert \mathbf{g}(\boldsymbol{\mu}_{2})-\mathbf{g}(\boldsymbol{\mu}_{1})\right\Vert ^{2}$,
$\forall\boldsymbol{\mu}_{1},\boldsymbol{\mu}_{2}\in\mathbb{R}_{+}^{N}$.
\end{lem}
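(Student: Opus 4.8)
\emph{The plan} is to combine three ingredients: the variational characterization of the parametrized NE from Lemma~\ref{lem:uvi}, the linearity of $\mathbf{g}$ in $\mathbf{p}$ established in \eqref{vr:gn}, and a strong monotonicity property of $\mathbf{f}$ that $\mathbf{\Psi}\succ\mathbf{0}$ will be shown to imply. Note first that $\mathbf{\Psi}\succ\mathbf{0}$ makes $\mathbf{\Psi}$ a P-matrix, so by Proposition~\ref{pro:ung} the NE $\mathbf{p}^{\star}(\boldsymbol{\mu})$ is unique and $\mathbf{g}(\boldsymbol{\mu})$ is a well-defined single-valued map, as already observed above Lemma~\ref{lem:coc}.

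\emph{Step 1 (test the two VIs against each other).} Fix $\boldsymbol{\mu}_{1},\boldsymbol{\mu}_{2}\geq\mathbf{0}$ and write $\mathbf{p}_{1}\triangleq\mathbf{p}^{\star}(\boldsymbol{\mu}_{1})$, $\mathbf{p}_{2}\triangleq\mathbf{p}^{\star}(\boldsymbol{\mu}_{2})$. By Lemma~\ref{lem:uvi} each $\mathbf{p}_{k}$ solves $\mathrm{VI}(\mathcal{S}^{\mathrm{pow}},\mathbf{f}_{\boldsymbol{\mu}_{k}})$. Inserting the feasible point $\mathbf{p}_{2}$ into the VI for $\mathbf{p}_{1}$ and $\mathbf{p}_{1}$ into the VI for $\mathbf{p}_{2}$, then adding the two inequalities, yields $(\mathbf{p}_{2}-\mathbf{p}_{1})^{T}(\mathbf{f}_{\boldsymbol{\mu}_{1}}(\mathbf{p}_{1})-\mathbf{f}_{\boldsymbol{\mu}_{2}}(\mathbf{p}_{2}))\geq0$. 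Because $\mathbf{g}$ is linear, \eqref{dm:fx} reads $\mathbf{f}_{\boldsymbol{\mu}}(\mathbf{p})=\mathbf{f}(\mathbf{p})+\mathbf{J}^{T}\boldsymbol{\mu}$ with $\mathbf{J}$ the constant Jacobian of $\mathbf{g}$, and $\mathbf{g}(\boldsymbol{\mu}_{2})-\mathbf{g}(\boldsymbol{\mu}_{1})=\mathbf{J}(\mathbf{p}_{2}-\mathbf{p}_{1})$. Substituting and rearranging converts the added inequality into
\[
(\boldsymbol{\mu}_{1}-\boldsymbol{\mu}_{2})^{T}(\mathbf{g}(\boldsymbol{\mu}_{2})-\mathbf{g}(\boldsymbol{\mu}_{1}))\geq(\mathbf{p}_{2}-\mathbf{p}_{1})^{T}(\mathbf{f}(\mathbf{p}_{2})-\mathbf{f}(\mathbf{p}_{1})),
\]
whose left-hand side is precisely the co-coercivity inner product. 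It remains to lower-bound the right-hand side by a multiple of $\|\mathbf{g}(\boldsymbol{\mu}_{2})-\mathbf{g}(\boldsymbol{\mu}_{1})\|^{2}$.

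\emph{Step 2 (strong monotonicity of $\mathbf{f}$ --- the main obstacle).} I expect the crux to be showing $(\mathbf{p}_{2}-\mathbf{p}_{1})^{T}(\mathbf{f}(\mathbf{p}_{2})-\mathbf{f}(\mathbf{p}_{1}))\geq c_{\mathrm{sm}}\|\mathbf{p}_{2}-\mathbf{p}_{1}\|^{2}$ with $c_{\mathrm{sm}}>0$, which is genuinely stronger than the uniformly-P property used for mere uniqueness in Proposition~\ref{pro:unq} and requires the quantitative positive definiteness of $\mathbf{\Psi}$. I would prove it through the integral mean-value form $\mathbf{f}(\mathbf{p}_{2})-\mathbf{f}(\mathbf{p}_{1})=\int_{0}^{1}\mathbf{J_f}(\mathbf{p}_{1}+t(\mathbf{p}_{2}-\mathbf{p}_{1}))(\mathbf{p}_{2}-\mathbf{p}_{1})\,dt$, exploiting the block structure of the Jacobian $\mathbf{J_f}$ of $\mathbf{f}$ on $\mathcal{S}^{\mathrm{pow}}$: from \eqref{vr:fi} each block $\nabla_{\mathbf{p}_{j}}\mathbf{f}_{i}$ is \emph{diagonal} (channel $n$ of user $i$ couples only to channel $n$ of user $j$), its diagonal entries satisfy $h_{ii}^{2}(n)/I_{i,n}^{2}(\mathbf{p})\geq[\mathbf{\Psi}]_{ii}$, and the off-diagonal blocks have spectral norm at most $\max_{n}h_{ii}(n)h_{ji}(n)/I_{i,n}^{2}(\mathbf{p})\leq-[\mathbf{\Psi}]_{ij}$ --- exactly the extremal values in \eqref{vr:ms}. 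Writing $w_{i}\triangleq\|\mathbf{p}_{i,2}-\mathbf{p}_{i,1}\|$ and applying Cauchy--Schwarz block by block collapses the quadratic form to $(\mathbf{p}_{2}-\mathbf{p}_{1})^{T}(\mathbf{f}(\mathbf{p}_{2})-\mathbf{f}(\mathbf{p}_{1}))\geq\mathbf{w}^{T}\mathbf{\Psi}\mathbf{w}=\mathbf{w}^{T}\tfrac{1}{2}(\mathbf{\Psi}+\mathbf{\Psi}^{T})\mathbf{w}\geq\lambda_{\min}(\tfrac{1}{2}(\mathbf{\Psi}+\mathbf{\Psi}^{T}))\|\mathbf{w}\|^{2}$; since $\|\mathbf{w}\|^{2}=\|\mathbf{p}_{2}-\mathbf{p}_{1}\|^{2}$, this gives $c_{\mathrm{sm}}=\lambda_{\min}(\tfrac{1}{2}(\mathbf{\Psi}+\mathbf{\Psi}^{T}))>0$ whenever $\mathbf{\Psi}\succ\mathbf{0}$.

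\emph{Step 3 (pass back from $\mathbf{p}$-space to $\mathbf{g}$-space).} Finally, the bound $\|\mathbf{g}(\boldsymbol{\mu}_{2})-\mathbf{g}(\boldsymbol{\mu}_{1})\|=\|\mathbf{J}(\mathbf{p}_{2}-\mathbf{p}_{1})\|\leq\sigma_{\max}(\mathbf{J})\|\mathbf{p}_{2}-\mathbf{p}_{1}\|$ gives $\|\mathbf{p}_{2}-\mathbf{p}_{1}\|^{2}\geq\sigma_{\max}^{-2}(\mathbf{J})\|\mathbf{g}(\boldsymbol{\mu}_{2})-\mathbf{g}(\boldsymbol{\mu}_{1})\|^{2}$, where $\sigma_{\max}(\mathbf{J})$ is finite and positive by \eqref{vr:gn}. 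Chaining the inequalities of Steps~1--3 then establishes the claim with $c_{\mathrm{coc}}=c_{\mathrm{sm}}/\sigma_{\max}^{2}(\mathbf{J})$. The only delicate point beyond the monotonicity estimate is verifying the diagonal block structure and the entrywise extremal bounds that identify $\mathbf{\Psi}$ in \eqref{vr:ms}; everything else is routine manipulation of the VI inequalities and the linear map $\mathbf{J}$.
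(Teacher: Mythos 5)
Your proof is correct, but note that the paper never actually proves Lemma \ref{lem:coc} --- it imports the result wholesale from \cite{ScutariPalomarFacchineiPang11} --- so the relevant comparison is with that reference and with the machinery the paper builds elsewhere. Your argument is a faithful, self-contained reconstruction of the standard route: test $\mathrm{VI}(\mathcal{S}^{\mathrm{pow}},\mathbf{f}_{\boldsymbol{\mu}_{1}})$ and $\mathrm{VI}(\mathcal{S}^{\mathrm{pow}},\mathbf{f}_{\boldsymbol{\mu}_{2}})$ against each other's solutions, use the affinity of $\mathbf{g}$ (constant Jacobian $\mathbf{J}$) to turn the cross term into the dual pairing $(\boldsymbol{\mu}_{1}-\boldsymbol{\mu}_{2})^{T}(\mathbf{g}(\boldsymbol{\mu}_{2})-\mathbf{g}(\boldsymbol{\mu}_{1}))$, lower-bound the remaining primal term by strong monotonicity of $\mathbf{f}$, and pass from $\|\mathbf{p}_{2}-\mathbf{p}_{1}\|$ to $\|\mathbf{g}(\boldsymbol{\mu}_{2})-\mathbf{g}(\boldsymbol{\mu}_{1})\|$ via $\sigma_{\max}(\mathbf{J})$, giving $c_{\mathrm{coc}}=c_{\mathrm{sm}}/\sigma_{\max}^{2}(\mathbf{J})$. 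Your Step 2 is exactly where the hypothesis $\mathbf{\Psi}\succ\mathbf{0}$ earns its keep: Appendix \ref{subsec:unq} stops at the componentwise bound $(\mathbf{p}_{i}^{1}-\mathbf{p}_{i}^{2})^{T}\left(\mathbf{f}_{i}(\mathbf{p}^{1})-\mathbf{f}_{i}(\mathbf{p}^{2})\right)\geq s_{i}\left[\mathbf{\Psi}\mathbf{s}\right]_{i}$ and invokes the P-property (uniform-P, hence uniqueness only), whereas you sum the same bound over $i$ and use $\mathbf{s}^{T}\mathbf{\Psi}\mathbf{s}\geq\lambda_{\min}\left(\tfrac{1}{2}(\mathbf{\Psi}+\mathbf{\Psi}^{T})\right)\left\Vert \mathbf{s}\right\Vert ^{2}$, correctly reading positive definiteness of the nonsymmetric $\mathbf{\Psi}$ through its symmetric part; co-coercivity genuinely needs this strong monotonicity, not just uniform-P. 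Two points worth making explicit in a polished version: the entrywise bounds identifying $\mathbf{\Psi}$ require $p_{l}(n)\leq p_{l,n}^{\mathrm{max}}$ along the whole segment, which holds by convexity of $\mathcal{S}^{\mathrm{pow}}$; and $\sigma_{\max}(\mathbf{J})>0$ since $\tilde{h}_{00}(n)>0$, so the final constant is finite and positive.
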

Since a positive definite matrix is also a P-matrix, Lemma \ref{lem:coc}
is consistent with Propositions \ref{pro:ung} and \ref{pro:bes}.
Co-coercivity plays an important role in VIs similar to convexity
in optimization. The co-coercivity of $-\mathbf{g}(\boldsymbol{\mu})$
guarantees that there exists a solution of $\mathrm{VI}(\mathbb{R}_{+}^{N},-\mathbf{g}(\boldsymbol{\mu}))$
and thus a GNE (VE) of $\mathcal{G}$. Moreover, this favorable property
enables us to devise a distributed price updating algorithm, i.e.,
Algorithm 2, to find the solution of $\mathrm{VI}(\mathbb{R}_{+}^{N},-\mathbf{g}(\boldsymbol{\mu}))$
or the VE of $\mathcal{G}$.

\begin{algorithm}[tbph]
\caption{\textbf{: Distributed Pricing Algorithm for }$\mathcal{G}$}

1: Set the initial point $\boldsymbol{\mu}^{0}$, precision $\epsilon$,
and $k=0$;

2: Compute the NE $\mathbf{p}^{\star}(\boldsymbol{\mu}^{k})$ of $\mathcal{G}_{\boldsymbol{\mu}^{k}}$
via Algorithm 1;

3: Update the price as $\boldsymbol{\mu}^{k+1}=\left[\boldsymbol{\mu}^{k}-\eta_{k}\mathbf{g}(\boldsymbol{\mu}^{k})\right]_{+}$;

4: $k=k+1$;

5: If $\left\Vert \boldsymbol{\mu}^{k}-\boldsymbol{\mu}^{k-1}\right\Vert \leq\epsilon$
stop, otherwise go to step 2.
\end{algorithm}

Algorithm 2 contains two loops, where the outer loop is to update
the price vector $\boldsymbol{\mu}$, and the inner loop invokes Algorithm
1 to obtain the NE of $\mathcal{G}_{\boldsymbol{\mu}}$. In Algorithm
2, $\eta_{k}$ is a step size, which could be a constant or vary in
each iteration. The co-coercivity of $-\mathbf{g}(\boldsymbol{\mu})$
guarantees that Algorithm 2 converges to the solution of $\mathrm{VI}(\mathbb{R}_{+}^{N},-\mathbf{g}(\boldsymbol{\mu}))$
with a properly chosen step size. Consequently, we have the following
result.
\begin{thm}
\label{thm:pcv}Given $\mathbf{\Psi}\succ\mathbf{0}$ and $0<\eta_{k}<2c_{\mathrm{coc}}$
for $\forall k$, the sequence $\{\boldsymbol{\mu}^{k}\}_{k=0}^{\infty}$
generated by Algorithm 2 converges to a solution $\boldsymbol{\mu}^{\star}$
of $\mathrm{VI}(\mathbb{R}_{+}^{N},-\mathbf{g}(\boldsymbol{\mu}))$\textup{
and }$\mathbf{p}^{\star}(\boldsymbol{\mu}^{\star})$\textup{ is a
GNE (VE) of }$\mathcal{G}$.
\end{thm}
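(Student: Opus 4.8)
The plan is to recognize Algorithm 2 as the projection (gradient) method applied to $\mathrm{VI}(\mathbb{R}_{+}^{N},-\mathbf{g})$ and to drive its convergence with the co-coercivity supplied by Lemma \ref{lem:coc}. First I would record the reductions that the hypothesis $\mathbf{\Psi}\succ\mathbf{0}$ makes available: since a positive definite matrix is a P-matrix, Proposition \ref{pro:ung} guarantees that each NEP $\mathcal{G}_{\boldsymbol{\mu}}$ has a \emph{unique} NE, so $\mathbf{g}(\boldsymbol{\mu})$ is single-valued, the multifunction $\mathcal{Q}$ collapses to $-\mathbf{g}$, and $\mathrm{GVI}(\mathbb{R}_{+}^{N},\mathcal{Q})$ becomes $\mathrm{VI}(\mathbb{R}_{+}^{N},-\mathbf{g})$. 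Lemma \ref{lem:coc} then gives co-coercivity of $-\mathbf{g}$ with constant $c_{\mathrm{coc}}$, which in particular makes $-\mathbf{g}$ be $(1/c_{\mathrm{coc}})$-Lipschitz (hence continuous) and guarantees that a solution $\boldsymbol{\mu}^{\star}$ exists. The two standard tools I would invoke are the non-expansiveness of the Euclidean projection $[\cdot]_{+}$ onto the convex set $\mathbb{R}_{+}^{N}$ and the fixed-point characterization $\boldsymbol{\mu}^{\star}=[\boldsymbol{\mu}^{\star}-\eta_{k}(-\mathbf{g}(\boldsymbol{\mu}^{\star}))]_{+}$ of a VI solution, valid for every $\eta_{k}>0$.

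The core of the argument is a single one-step estimate. Fixing any solution $\boldsymbol{\mu}^{\star}$, I would subtract the fixed-point identity from the update rule, apply non-expansiveness of $[\cdot]_{+}$, expand the resulting square, and finally use the co-coercivity inequality of Lemma \ref{lem:coc} (with the roles $\boldsymbol{\mu}_{1}=\boldsymbol{\mu}^{k}$, $\boldsymbol{\mu}_{2}=\boldsymbol{\mu}^{\star}$) to obtain
\[
\|\boldsymbol{\mu}^{k+1}-\boldsymbol{\mu}^{\star}\|^{2}\leq\|\boldsymbol{\mu}^{k}-\boldsymbol{\mu}^{\star}\|^{2}-\eta_{k}(2c_{\mathrm{coc}}-\eta_{k})\|\mathbf{g}(\boldsymbol{\mu}^{k})-\mathbf{g}(\boldsymbol{\mu}^{\star})\|^{2}.
\]
With $0<\eta_{k}<2c_{\mathrm{coc}}$ (and, for a varying step size, $\eta_{k}$ kept bounded away from $0$ and $2c_{\mathrm{coc}}$), the coefficient $\eta_{k}(2c_{\mathrm{coc}}-\eta_{k})$ is positive, so $\{\|\boldsymbol{\mu}^{k}-\boldsymbol{\mu}^{\star}\|\}$ is non-increasing; hence $\{\boldsymbol{\mu}^{k}\}$ is Fej\'er monotone with respect to the solution set and, in particular, bounded. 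Telescoping the inequality then yields $\sum_{k}\eta_{k}(2c_{\mathrm{coc}}-\eta_{k})\|\mathbf{g}(\boldsymbol{\mu}^{k})-\mathbf{g}(\boldsymbol{\mu}^{\star})\|^{2}\leq\|\boldsymbol{\mu}^{0}-\boldsymbol{\mu}^{\star}\|^{2}<\infty$, which forces the residual $\|\mathbf{g}(\boldsymbol{\mu}^{k})-\mathbf{g}(\boldsymbol{\mu}^{\star})\|\to0$.

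From boundedness I would extract a subsequence $\boldsymbol{\mu}^{k_{j}}\to\bar{\boldsymbol{\mu}}$; continuity of $\mathbf{g}$ together with the vanishing residual gives $\mathbf{g}(\bar{\boldsymbol{\mu}})=\mathbf{g}(\boldsymbol{\mu}^{\star})$, and passing to the limit in the update rule (again using continuity of $[\cdot]_{+}$) shows that $\bar{\boldsymbol{\mu}}$ satisfies the fixed-point equation, i.e.\ $\bar{\boldsymbol{\mu}}$ solves $\mathrm{VI}(\mathbb{R}_{+}^{N},-\mathbf{g})$. Re-applying the Fej\'er inequality above with $\boldsymbol{\mu}^{\star}=\bar{\boldsymbol{\mu}}$ shows that $\{\|\boldsymbol{\mu}^{k}-\bar{\boldsymbol{\mu}}\|\}$ converges, and since one subsequence tends to $0$, the whole sequence converges to $\bar{\boldsymbol{\mu}}$, which is the desired solution $\boldsymbol{\mu}^{\star}$. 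Chaining the established equivalences then closes the proof: $\boldsymbol{\mu}^{\star}$ solves $\mathrm{VI}(\mathbb{R}_{+}^{N},-\mathbf{g})$ and hence $\mathrm{GVI}(\mathbb{R}_{+}^{N},\mathcal{Q})$, so by Theorem \ref{thm:gvi} the associated $\mathbf{p}^{\star}(\boldsymbol{\mu}^{\star})$ solves $\mathrm{VI}(\mathcal{S},\mathbf{f})$, and by Lemma \ref{lem:vg} it is a GNE (VE) of $\mathcal{G}$.

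The step I expect to be the main obstacle is the passage from the residual decay $\|\mathbf{g}(\boldsymbol{\mu}^{k})-\mathbf{g}(\boldsymbol{\mu}^{\star})\|\to0$ to actual convergence of the iterates $\boldsymbol{\mu}^{k}$: co-coercivity is strictly weaker than strong monotonicity, so it does not directly bound $\|\boldsymbol{\mu}^{k}-\boldsymbol{\mu}^{\star}\|$, and one must combine the subsequential-limit/continuity argument with Fej\'er monotonicity to upgrade to convergence of the entire sequence. A secondary technical point is making the step-size hypothesis precise enough (bounded away from both endpoints of $(0,2c_{\mathrm{coc}})$ in the varying case) so that the telescoped bound is genuinely summable.
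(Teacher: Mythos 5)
Your route is the right one---in fact it is the paper's route: the paper proves Theorem \ref{thm:pcv} by citing Lemma \ref{lem:coc} together with the projection-method convergence theorem {[}Th.~12.1.8{]} of Facchinei--Pang, and your proposal is essentially a re-derivation of that cited theorem. Your one-step estimate, the Fej\'{e}r monotonicity, the summability of the residuals, and the final chaining through Theorem \ref{thm:gvi} and Lemma \ref{lem:vg} are all correct. (Two side remarks: first, your estimate tacitly reads the update as the projection step for $\mathrm{VI}(\mathbb{R}_{+}^{N},-\mathbf{g})$, i.e.\ $\boldsymbol{\mu}^{k+1}=[\boldsymbol{\mu}^{k}+\eta_{k}\mathbf{g}(\boldsymbol{\mu}^{k})]_{+}$, consistent with your fixed-point identity; this dual-ascent sign is the only reading under which the descent inequality holds. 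Second, existence of a solution $\boldsymbol{\mu}^{\star}$ does \emph{not} follow from co-coercivity alone on the unbounded set $\mathbb{R}_{+}^{N}$---constant maps are co-coercive yet can have empty solution sets---it follows from the existence of a solution of $\mathrm{VI}(\mathcal{S},\mathbf{f})$ on the compact set $\mathcal{S}$ together with the converse direction of Theorem \ref{thm:gvi}.)

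The step you yourself flag as the main obstacle is, however, a genuine gap as written. From $\boldsymbol{\mu}^{k_{j}}\to\bar{\boldsymbol{\mu}}$ and continuity you may conclude that $\boldsymbol{\mu}^{k_{j}+1}=[\boldsymbol{\mu}^{k_{j}}+\eta_{k_{j}}\mathbf{g}(\boldsymbol{\mu}^{k_{j}})]_{+}$ converges to $[\bar{\boldsymbol{\mu}}+\eta\,\mathbf{g}(\bar{\boldsymbol{\mu}})]_{+}$, but nothing established so far says that $\boldsymbol{\mu}^{k_{j}+1}\to\bar{\boldsymbol{\mu}}$; the shifted subsequence could a priori converge elsewhere, and then ``passing to the limit in the update rule'' yields no fixed-point identity for $\bar{\boldsymbol{\mu}}$. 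Fej\'{e}r monotonicity cannot repair this, because your descent inequality is only available with respect to points already known to be solutions. What is missing is asymptotic regularity, $\|\boldsymbol{\mu}^{k+1}-\boldsymbol{\mu}^{k}\|\to0$, and plain nonexpansiveness of $[\cdot]_{+}$ is too weak to deliver it. The standard fix is to use the \emph{firm} nonexpansiveness of the Euclidean projection $\Pi$ onto a closed convex set,
\[
\left\Vert \Pi(a)-\Pi(b)\right\Vert ^{2}\leq\left\Vert a-b\right\Vert ^{2}-\left\Vert \left(a-\Pi(a)\right)-\left(b-\Pi(b)\right)\right\Vert ^{2},
\]
with $a=\boldsymbol{\mu}^{k}+\eta_{k}\mathbf{g}(\boldsymbol{\mu}^{k})$ and $b=\boldsymbol{\mu}^{\star}+\eta_{k}\mathbf{g}(\boldsymbol{\mu}^{\star})$. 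This strengthens your one-step estimate by the additional negative term $-\left\Vert (\boldsymbol{\mu}^{k}-\boldsymbol{\mu}^{k+1})+\eta_{k}(\mathbf{g}(\boldsymbol{\mu}^{k})-\mathbf{g}(\boldsymbol{\mu}^{\star}))\right\Vert ^{2}$; telescoping forces this term to vanish as well, which together with the vanishing residual gives $\|\boldsymbol{\mu}^{k+1}-\boldsymbol{\mu}^{k}\|\to0$. With that in hand, $\boldsymbol{\mu}^{k_{j}+1}\to\bar{\boldsymbol{\mu}}$, your limit-passing becomes valid, and the remainder of your argument (Fej\'{e}r monotonicity with respect to $\bar{\boldsymbol{\mu}}$, then Theorem \ref{thm:gvi} and Lemma \ref{lem:vg}) closes the proof exactly as you describe.
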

\begin{proof}
Theorem \ref{thm:pcv} follows from Lemma \ref{lem:coc} and \cite[Th. 12.1.8]{FacchineiPang03}.
\end{proof}
The implementation of Algorithms 1 and 2 in small cell networks leads
to a distributed pricing mechanism. Specifically, the MUEs are responsible
for updating the price according to Algorithm 2. For this purpose,
the MUEs need to know $g_{n}(\mathbf{p}(\boldsymbol{\mu}))$, which,
from (\ref{vr:gn}), contains the aggregate interference (plus noise)
$\sum_{j=1}^{M}h_{j0}(n)p_{j}(n)+\sigma_{0}(n)$ from the small cells
and the (normalized) received power $\tilde{h}_{00}(n)p_{0}(n)$ from
the MBS, both of which can be locally measured by each MUE. Then,
the MUE using channel $n$ broadcasts its price $\mu_{n}$ for $n=1,\ldots,N$.
With the given price, all BSs (the MBS and the SBSs) distributively
compute the NE of $\mathcal{G}_{\boldsymbol{\mu}}$ via Algorithm
1. In each iteration of Algorithm 1, according to the best response
in (\ref{dm:pi}), each BS $i$ needs to know the direct channel $h_{ii}(n)$
and the aggregate interference $I_{i,n}(\mathbf{p}_{-i})$ from the
other cells, while each SBS $i$ also needs to know the term $\mu_{n}h_{i0}(n)$.
It is easily seen that $h_{ii}(n)$ and $I_{i,n}(\mathbf{p}_{-i})$
can be locally estimated or measured by the user served by BS $i$
and be fed back to the BS. Since the price $\mu_{n}$ is broadcast
by the MUE on channel $n$, the term $\mu_{n}h_{i0}(n)$ can also
be locally measured by SBS $i$ by exploiting the reciprocity of the
channel between SBS $i$ and the MUE. Consequently, the whole pricing
mechanism only needs the MUEs to broadcast the price information. 

\subsection{Distributed Proximal Algorithm\label{sec:da}}

The above distributed pricing algorithm includes two time scales,
a faster one for power updating and a slower one for price updating.
Naturally, one may wonder, in the hope of accelerating the convergence
speed, if price and power can be updated simultaneously. The answer
is, however, complicated. In this subsection, we show simultaneous
updating of price and power is possible, but a two-loop structure
is still needed to guarantee convergence.

Inspired by the NEP with given price, a possible option is to incorporate
the price into the power update by viewing the price updater as an
additional player \cite{ScutariPalomarFacchineiPang11}. According
to $\mathrm{GVI}(\mathbb{R}_{+}^{N},\mathcal{Q})$, the optimal price
$\boldsymbol{\mu}^{\star}$ is chosen to satisfy $(\boldsymbol{\mu}-\boldsymbol{\mu}^{\star})^{T}\mathbf{g}(\mathbf{p}^{\star})\leq0$,
$\forall\boldsymbol{\mu}\geq\mathbf{0}$ with $\mathbf{p}^{\star}$
being the NE of $\mathcal{G}_{\boldsymbol{\mu}^{\star}}$, which is
exactly the first-order optimality condition \cite{BoydVan04} of
the maximization problem $\mathrm{maximize}_{\boldsymbol{\mu}\geq\mathbf{0}}\;\boldsymbol{\mu}^{T}\mathbf{g}(\mathbf{p}^{\star})$.
Therefore, we are able to incorporate the price into $\mathcal{G}_{\boldsymbol{\mu}}$
and formulate a new NEP with $M+2$ players, where the first $M$+1
players are the MBS and the SBSs who optimize the transmit power according
to 
\begin{equation}
\underset{\mathbf{p}_{i}\in\mathcal{S}_{i}^{\mathrm{pow}}}{\mathrm{maximize}}\;R_{i}(\mathbf{p}_{i},\mathbf{p}_{-i})-\boldsymbol{\mu}^{T}\mathbf{g}(\mathbf{p})\quad i=0,\ldots,M,\label{da:mp}
\end{equation}
and the $(M+2)$th player is the price updater who optimizes the price
according to 
\begin{equation}
\underset{\boldsymbol{\mu}\geq\mathbf{0}}{\mathrm{maximize}}\;\boldsymbol{\mu}^{T}\mathbf{g}(\mathbf{p}).\label{da:mu}
\end{equation}
We refer to the combination of (\ref{da:mp}) and (\ref{da:mu}) as
NEP $\bar{\mathcal{G}}$. Then, NEP $\bar{\mathcal{G}}$ is linked
to GNEP $\mathcal{G}$ and $\mathrm{VI}(\mathcal{S},\mathbf{f})$
as follows \cite{ScutariPalomarFacchineiPang11}.
\begin{lem}
\label{lem:pue}$(\mathbf{p}^{\star},\boldsymbol{\mu}^{\star})$ is
an NE of $\bar{\mathcal{G}}$ if and only if $\mathbf{p}^{\star}$
is a solution to $\mathrm{VI}(\mathcal{S},\mathbf{f})$ and thus a
VE of $\mathcal{G}$, and $\boldsymbol{\mu}^{\star}$ is the Lagrange
multiplier associated with the constraint $\mathbf{g}(\mathbf{p})\leq\mathbf{0}$.
\end{lem}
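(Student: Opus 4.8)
The plan is to prove this equivalence by showing that the defining conditions of an NE of $\bar{\mathcal{G}}$ are exactly the KKT conditions of $\mathrm{VI}(\mathcal{S},\mathbf{f})$, with $\boldsymbol{\mu}^{\star}$ serving as the Lagrange multiplier of $\mathbf{g}(\mathbf{p})\leq\mathbf{0}$. First I would write out the two blocks of equilibrium conditions characterizing an NE $(\mathbf{p}^{\star},\boldsymbol{\mu}^{\star})$ of $\bar{\mathcal{G}}$. The first block is the set of best responses of the $M+1$ BSs in (\ref{da:mp}) with the price held fixed at $\boldsymbol{\mu}^{\star}$; by inspection these are precisely the best responses defining an NE of $\mathcal{G}_{\boldsymbol{\mu}^{\star}}$, which by Lemma \ref{lem:uvi} is equivalent to the variational inequality $(\mathbf{p}-\mathbf{p}^{\star})^{T}\mathbf{f}_{\boldsymbol{\mu}^{\star}}(\mathbf{p}^{\star})\geq0$ for all $\mathbf{p}\in\mathcal{S}^{\mathrm{pow}}$. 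Since (\ref{dm:fx}) gives $\mathbf{f}_{\boldsymbol{\mu}^{\star}}=\mathbf{f}+\sum_{n=1}^{N}\mu_{n}^{\star}\nabla g_{n}$, this is exactly the stationarity (Lagrangian) part of the KKT system of $\mathrm{VI}(\mathcal{S},\mathbf{f})$ relative to the base set $\mathcal{S}^{\mathrm{pow}}$.

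The second block is the optimality condition of the price player (\ref{da:mu}). Because $\boldsymbol{\mu}^{T}\mathbf{g}(\mathbf{p}^{\star})$ is linear in $\boldsymbol{\mu}$ over the nonnegative orthant, its first-order condition reads $(\boldsymbol{\mu}-\boldsymbol{\mu}^{\star})^{T}\mathbf{g}(\mathbf{p}^{\star})\leq0$ for all $\boldsymbol{\mu}\geq\mathbf{0}$, i.e., exactly (\ref{vr:lh}). I would then extract the remaining KKT data from this single inequality: testing $\boldsymbol{\mu}=\mathbf{0}$ and $\boldsymbol{\mu}=2\boldsymbol{\mu}^{\star}$ forces complementary slackness $(\boldsymbol{\mu}^{\star})^{T}\mathbf{g}(\mathbf{p}^{\star})=0$, while testing $\boldsymbol{\mu}=\boldsymbol{\mu}^{\star}+\mathbf{e}_{n}$ (with $\mathbf{e}_{n}$ the $n$-th unit vector) for each $n$ forces primal feasibility $\mathbf{g}(\mathbf{p}^{\star})\leq\mathbf{0}$; combined with $\boldsymbol{\mu}^{\star}\geq\mathbf{0}$ (the price player's feasible set), this is the full complementarity relation $\mathbf{0}\leq\boldsymbol{\mu}^{\star}\perp-\mathbf{g}(\mathbf{p}^{\star})\geq\mathbf{0}$. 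Thus the two blocks together coincide exactly with the KKT conditions of $\mathrm{VI}(\mathcal{S},\mathbf{f})$.

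To close the equivalence I would invoke the standard fact that, for a VI with a convex constraint representation, the KKT system is both sufficient for and (under a constraint qualification) necessary for a VI solution \cite{FacchineiPang03}. Sufficiency follows from the convexity of $\mathbf{g}$: for any $\mathbf{p}\in\mathcal{S}$, combining stationarity with the gradient inequality $\nabla g_{n}(\mathbf{p}^{\star})^{T}(\mathbf{p}-\mathbf{p}^{\star})\leq g_{n}(\mathbf{p})-g_{n}(\mathbf{p}^{\star})$ and complementary slackness yields $(\mathbf{p}-\mathbf{p}^{\star})^{T}\mathbf{f}(\mathbf{p}^{\star})\geq-(\boldsymbol{\mu}^{\star})^{T}\mathbf{g}(\mathbf{p})\geq0$, so $\mathbf{p}^{\star}$ solves $\mathrm{VI}(\mathcal{S},\mathbf{f})$. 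Necessity holds because $\mathbf{g}$ is affine (established in Section III), so the linearity constraint qualification is automatic and a multiplier $\boldsymbol{\mu}^{\star}$ exists whenever $\mathbf{p}^{\star}$ solves the VI. Finally, Lemma \ref{lem:vg} identifies any solution of $\mathrm{VI}(\mathcal{S},\mathbf{f})$ as a VE of $\mathcal{G}$, completing both directions and matching the treatment in \cite{ScutariPalomarFacchineiPang11}.

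The hard part is not any single calculation but the careful bookkeeping needed to confirm that the price player's scalar inequality (\ref{vr:lh}) encodes precisely primal feasibility and complementary slackness and nothing more, and, on the analytic side, ensuring that the KKT characterization is a genuine equivalence rather than a one-way implication. The latter rests entirely on the affine (hence convex) structure of $\mathbf{g}$, which simultaneously supplies sufficiency via the gradient inequality and necessity via the automatic constraint qualification; without it, an NE of $\bar{\mathcal{G}}$ would only be guaranteed to produce a KKT point rather than a bona fide solution of $\mathrm{VI}(\mathcal{S},\mathbf{f})$.
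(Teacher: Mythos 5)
Your proof is correct and follows exactly the route the paper intends: the paper proves this lemma only by citation to \cite{ScutariPalomarFacchineiPang11}, and the method there (as in the paper's own sketch for Theorem \ref{thm:gvi}) is precisely the comparison of the NE conditions of $\bar{\mathcal{G}}$ with the KKT system of $\mathrm{VI}(\mathcal{S},\mathbf{f})$, which you carry out in full, including the correct extraction of primal feasibility and complementary slackness from the price player's condition and the use of affineness of $\mathbf{g}$ for the constraint qualification.
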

It is now clear that the VE of $\mathcal{G}$ can be obtained by finding
the NE of $\bar{\mathcal{G}}$, which implies that power and price
can be updated simultaneously. Herein, it is natural to exploit the
best response algorithm (e.g., Algorithm 1) to compute the NE of $\bar{\mathcal{G}}$.
However, directly applying the best response algorithm to $\bar{\mathcal{G}}$
may lead to divergence. Indeed, if one constructs an $(M+2)\times(M+2)$
matrix $\bar{\Psi}$ similar to $\Psi$ in (\ref{vr:ms}), it will
result in $[\bar{\Psi}]_{(M+2)(M+2)}=0$, implying that $\bar{\Psi}$
cannot be a P-matrix or a positive definite matrix, so convergence
is not guaranteed. The principal reason is that the utility function
in (\ref{da:mu}) is neither strictly concave nor strongly concave
in $\boldsymbol{\mu}$.

To overcome this difficulty, we reformulate NEP $\bar{\mathcal{G}}$
into a VI. Introduce $\bar{\mathbf{p}}\triangleq(\mathbf{p},\boldsymbol{\mu})$,
$\bar{\mathcal{S}}\triangleq\prod_{i=0}^{M}\mathcal{S}_{i}^{\mathrm{pow}}\times\mathbb{R}_{+}^{N}$,
and $\bar{\mathbf{f}}(\bar{\mathbf{p}})\triangleq\left(\left(\mathbf{f}_{\boldsymbol{\mu},i}(\mathbf{p})\right)_{i=0}^{M},-\mathbf{g}(\mathbf{p})\right)$
with $\mathbf{f}_{\boldsymbol{\mu},i}(\mathbf{p})$ defined in (\ref{dm:fx}).
Then, similar to Lemma \ref{lem:uvi}, $\bar{\mathcal{G}}$ is equivalent
to $\mathrm{VI}(\bar{\mathcal{S}},\bar{\mathbf{f}})$, which has the
following property.
\begin{lem}
\label{lem:mon}Given $\Psi\succeq\mathbf{0}$, $\mathrm{VI}(\bar{\mathcal{S}},\bar{\mathbf{f}})$
is a monotone VI.
\end{lem}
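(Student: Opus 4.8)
The plan is to use the equivalence between $\bar{\mathcal{G}}$ and $\mathrm{VI}(\bar{\mathcal{S}},\bar{\mathbf{f}})$ just established, together with the standard fact \cite{FacchineiPang03} that a continuously differentiable map is monotone on a convex set if and only if the symmetric part of its Jacobian is positive semidefinite throughout that set. Here $\bar{\mathbf{f}}$ is indeed smooth on the convex set $\bar{\mathcal{S}}$: the noise terms keep $I_{i,n}(\mathbf{p})\geq\sigma_i(n)>0$ so $\mathbf{f}$ is differentiable, and $\mathbf{g}$ is affine. Hence the lemma reduces entirely to verifying a positive-semidefiniteness condition on the block Jacobian $J\bar{\mathbf{f}}(\bar{\mathbf{p}})$.

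First I would exploit the affine structure of $\mathbf{g}$ in (\ref{vr:gn}). Writing its constant Jacobian as $\mathbf{A}\triangleq\nabla_{\mathbf{p}}\mathbf{g}$ (the matrix whose $n$th column is $\nabla_{\mathbf{p}}g_n$), the definition (\ref{dm:fx}) gives the clean separation $\mathbf{f}_{\boldsymbol{\mu}}(\mathbf{p})=\mathbf{f}(\mathbf{p})+\mathbf{A}\boldsymbol{\mu}$ and $-\mathbf{g}(\mathbf{p})=-\mathbf{A}^T\mathbf{p}-\mathbf{b}$, so that
\[
J\bar{\mathbf{f}}(\bar{\mathbf{p}})=\begin{pmatrix} J\mathbf{f}(\mathbf{p}) & \mathbf{A}\\ -\mathbf{A}^T & \mathbf{0}\end{pmatrix}.
\]
The key observation is that the $\mathbf{p}$--$\boldsymbol{\mu}$ coupling is \emph{skew}: in $J\bar{\mathbf{f}}+J\bar{\mathbf{f}}^T$ the off-diagonal blocks $\mathbf{A}$ and $-\mathbf{A}$ cancel, leaving the block-diagonal matrix with blocks $J\mathbf{f}+J\mathbf{f}^T$ and $\mathbf{0}$. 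Therefore $\bar{\mathbf{f}}$ is monotone if and only if $\mathbf{f}$ is monotone in $\mathbf{p}$ on $\mathcal{S}^{\mathrm{pow}}$. This is exactly why the degenerate zero block introduced by the price player—the same block that forced $\bar{\Psi}$ to fail to be a P-matrix in the best-response analysis—is harmless for \emph{monotonicity}.

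It then remains to show that $\Psi\succeq\mathbf{0}$ implies $\mathbf{f}$ is monotone, and here I would reuse the Jacobian estimates already underlying Proposition \ref{pro:unq}. A direct computation gives diagonal entries $\partial f_{i}(n)/\partial p_i(n)=h_{ii}^2(n)I_{i,n}^{-2}(\mathbf{p})\geq[\Psi]_{ii}$ and off-diagonal entries of magnitude $h_{ii}(n)h_{ji}(n)I_{i,n}^{-2}(\mathbf{p})\leq -[\Psi]_{ij}$ uniformly over $\mathcal{S}^{\mathrm{pow}}$. Consequently, for any test vector $\mathbf{y}=(\mathbf{y}_i)_{i=0}^M$, putting $\mathbf{w}\triangleq(\lVert\mathbf{y}_i\rVert)_{i=0}^M\geq\mathbf{0}$ and bounding the cross terms by Cauchy--Schwarz yields
\[
\mathbf{y}^T\tfrac12\bigl(J\mathbf{f}(\mathbf{p})+J\mathbf{f}(\mathbf{p})^T\bigr)\mathbf{y}\geq\mathbf{w}^T\mathbf{\Psi}\,\mathbf{w}\geq0,
\]
where the last inequality is precisely $\mathbf{\Psi}\succeq\mathbf{0}$. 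Thus $J\mathbf{f}+J\mathbf{f}^T\succeq\mathbf{0}$ everywhere on $\mathcal{S}^{\mathrm{pow}}$, and combined with the cancellation above, $\mathrm{VI}(\bar{\mathcal{S}},\bar{\mathbf{f}})$ is monotone.

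I expect the last step—establishing the uniform lower bound $\mathbf{y}^T J\mathbf{f}\,\mathbf{y}\geq\mathbf{w}^T\mathbf{\Psi}\,\mathbf{w}$ over the whole feasible set—to be the main technical obstacle, but it is merely the positive-semidefinite specialization of the argument already carried out in Appendix \ref{subsec:unq} for Proposition \ref{pro:unq} (where $\mathbf{\Psi}$ being a P-matrix delivers the stronger uniformly-P property), so it can be imported almost verbatim with $\succ$ relaxed to $\succeq$. The only genuinely new ingredient is the skew-symmetric cancellation of the price-coupling blocks, which is immediate once the affine form of $\mathbf{g}$ is used.
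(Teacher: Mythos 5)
Your proof is correct and takes essentially the same route as the paper's: the paper's (sketched) proof likewise reduces the claim to the $\mathbf{\Psi}$-based Jacobian estimates from the proof of Proposition \ref{pro:unq} (with the P-matrix/uniformly-P requirement relaxed to $\mathbf{\Psi}\succeq\mathbf{0}$/positive semidefiniteness), and the price-coupling terms cancel because $\mathbf{g}$ is affine---the cancellation you make explicit via the skew off-diagonal blocks. The only cosmetic difference is that you verify positive semidefiniteness of the symmetrized Jacobian of $\bar{\mathbf{f}}$, whereas the paper applies the same estimates to finite differences through the mean value theorem and the monotonicity definition in Appendix \ref{subsec:vi}.
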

\begin{proof}
The proof follows similar steps as used in the proof of Proposition
\ref{pro:unq} and exploits the monotonicity definition in Appendix
\ref{subsec:vi}.
\end{proof}
The monotonicity enables us to exploit methods from VI theory to solve
$\mathrm{VI}(\bar{\mathcal{S}},\bar{\mathbf{f}})$. An efficient method
is the proximal point method \cite{FacchineiPang03}, which employs
the following iteration 
\[
\bar{\mathbf{p}}^{k+1}=(1-\eta_{k})\bar{\mathbf{p}}^{k}+\eta_{k}J_{c}(\bar{\mathbf{p}}^{k}),
\]
where $J_{c}(\bar{\mathbf{p}}^{k})$ is the solution to $\mathrm{VI}(\bar{\mathcal{S}},\bar{\mathbf{f}}_{c,\bar{\mathbf{p}}^{k}})$
with $\bar{\mathbf{f}}_{c,\bar{\mathbf{p}}^{k}}(\bar{\mathbf{p}})\triangleq\bar{\mathbf{f}}(\bar{\mathbf{p}})+c(\bar{\mathbf{p}}-\bar{\mathbf{p}}^{k})$.
It is not difficult to see that $\mathrm{VI}(\bar{\mathcal{S}},\bar{\mathbf{f}}_{c,\bar{\mathbf{p}}^{k}})$
is actually equivalent to NEP $\bar{\mathcal{G}}_{c,k}$ below 
\begin{equation}
\underset{\mathbf{p}_{i}\in\mathcal{S}_{i}^{\mathrm{pow}}}{\mathrm{maximize}}\;R_{i}(\mathbf{p}_{i},\mathbf{p}_{-i})-\boldsymbol{\mu}^{T}\mathbf{g}(\mathbf{p})-\frac{c}{2}\left\Vert \mathbf{p}_{i}-\mathbf{p}_{i}^{k}\right\Vert ^{2}\label{da:pp}
\end{equation}
for $i=0,\ldots,M$ and
\begin{equation}
\underset{\boldsymbol{\mu}\geq\mathbf{0}}{\mathrm{maximize}}\;\boldsymbol{\mu}^{T}\mathbf{g}(\mathbf{p})-\frac{c}{2}\left\Vert \boldsymbol{\mu}-\boldsymbol{\mu}^{k}\right\Vert ^{2}.\label{da:up}
\end{equation}
Hence, $J_{c}(\bar{\mathbf{p}}^{k})$ is given by the NE of $\bar{\mathcal{G}}_{c,k}$. 

Parameter $c$ in (\ref{da:pp}) is chosen large enough such that
$\mathrm{VI}(\bar{\mathcal{S}},\bar{\mathbf{f}}_{c,\bar{\mathbf{p}}^{k}})$
is strongly monotone or equivalently the objectives in (\ref{da:pp})
and (\ref{da:up}) are all strongly concave. In this case, $\bar{\mathcal{G}}_{c,k}$
has a unique NE that can be computed via a best response algorithm
with guaranteed convergence. To this end, we derive the closed-form
solutions of (\ref{da:pp}) and (\ref{da:up}) as 
\begin{multline}
p_{i}^{\star}(n)=\bar{V}_{i,n}(\mathbf{p}_{-i},\boldsymbol{\mu})\triangleq\left[\frac{B_{i,n}(\mathbf{p}_{-i},\lambda_{i})}{A_{i,n}}+\right.\\
\left.\frac{\sqrt{B_{i,n}^{2}(\mathbf{p}_{-i},\lambda_{i})+2A_{i,n}C_{i,n}(\mathbf{p}_{-i},\lambda_{i})}}{A_{i,n}}\right]_{0}^{p_{i,n}^{\mathrm{peak}}}\label{da:opn}
\end{multline}
and $\mu_{n}^{\star}=\bar{U}_{n}(\mathbf{p})\triangleq\left[\frac{g_{n}\left(\mathbf{p}\right)}{c}+\mu_{n}^{k}\right]_{+}$
with $A_{i,n}\triangleq2c\,h_{ii}(n)$, $B_{i,n}(\mathbf{p}_{-i},\lambda_{i})\triangleq h_{ii}(n)\phi_{i,n}(\lambda_{i})-cI_{i,n}(\mathbf{p}_{-i})$,
$C_{i,n}(\mathbf{p}_{-i},\lambda_{i})\triangleq I_{i,n}(\mathbf{p}_{-i})\phi_{i,n}(\lambda_{i})+h_{ii}(n)$,
and
\[
\phi_{i,n}(\lambda_{i})\triangleq\begin{cases}
cp_{0}^{k}(n)+\mu_{n}\tilde{h}_{00}(n)-\lambda_{0}, & i=0\\
cp_{i}^{k}(n)-\mu_{n}h_{i0}(n)-\lambda_{i}, & i\neq0
\end{cases}
\]
where $\lambda_{i}$ is the minimum value such that $\sum_{n=1}^{N}p_{i}^{\star}(n)\leq p_{i}^{\mathrm{sum}}$
for $\forall i$. To find the optimal $\lambda_{i}$, we provide the
following result.
\begin{lem}
\label{lem:plm}$p_{i}^{\star}(n)$ is monotonically nonincreasing
in $\lambda_{i}$.
\end{lem}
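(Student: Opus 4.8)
The plan is to reduce the two-dimensional-looking claim to a single scalar monotonicity statement, exploiting the fact that $\lambda_i$ enters the closed form (\ref{da:opn}) only through the scalar $\phi_{i,n}(\lambda_i)$. First I would observe that the outer projection $[\,\cdot\,]_0^{p_{i,n}^{\mathrm{peak}}}$ is an order-preserving (nondecreasing) map of its argument, so it suffices to prove that the unprojected quantity
\[
\tilde p_{i,n}(\lambda_i)\triangleq\frac{B_{i,n}(\mathbf{p}_{-i},\lambda_i)+\sqrt{B_{i,n}^2(\mathbf{p}_{-i},\lambda_i)+2A_{i,n}C_{i,n}(\mathbf{p}_{-i},\lambda_i)}}{A_{i,n}}
\]
is nonincreasing in $\lambda_i$; then $[\tilde p_{i,n}(\lambda_i)]_0^{p_{i,n}^{\mathrm{peak}}}$ inherits this monotonicity. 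Since $A_{i,n}=2c\,h_{ii}(n)>0$ is independent of $\lambda_i$, while $B_{i,n}$ and $C_{i,n}$ depend on $\lambda_i$ solely through $\phi_{i,n}(\lambda_i)$, which is affine in $\lambda_i$ with slope $-1$, the chain rule gives $d\tilde p_{i,n}/d\lambda_i=-\,d\tilde p_{i,n}/d\phi$, where I write $\phi$ for $\phi_{i,n}$. Hence the task becomes showing $\tilde p_{i,n}$ is nondecreasing in $\phi$.

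There are two natural routes, and I would foreground the algebraic one because it simultaneously certifies that the closed form is well defined. Conceptually, $\tilde p_{i,n}$ is the (larger) root of the quadratic coming from the first-order condition of the strongly concave per-channel objective in (\ref{da:pp}); equivalently it is the maximizer of $F(x)\triangleq\log(I_{i,n}+h_{ii}(n)x)+\phi\,x-\tfrac{c}{2}x^2$, with $I_{i,n}$ abbreviating $I_{i,n}(\mathbf{p}_{-i})$. Since $F$ has strictly increasing differences in $(x,\phi)$, as $\partial^2 F/\partial x\,\partial\phi=1>0$, monotone comparative statics already guarantees that its maximizer is nondecreasing in $\phi$, settling the lemma. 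To keep the argument self-contained I would instead differentiate directly: with $B=h_{ii}(n)\phi-cI_{i,n}$ and $C=I_{i,n}\phi+h_{ii}(n)$ we have $dB/d\phi=h_{ii}(n)$ and $dC/d\phi=I_{i,n}$, so
\[
A_{i,n}\,\frac{d\tilde p_{i,n}}{d\phi}=h_{ii}(n)+\frac{B\,h_{ii}(n)+A_{i,n}I_{i,n}}{\sqrt{B^2+2A_{i,n}C}} .
\]
Nonnegativity of the right-hand side reduces, after dividing by $h_{ii}(n)>0$, to $\sqrt{D}\geq-\bigl(h_{ii}(n)\phi+cI_{i,n}\bigr)$, where $D\triangleq B^2+2A_{i,n}C$.

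The crux, and what I expect to be the only nontrivial point, is controlling the sign of the numerator when $B$ or $C$ is negative, since neither is sign-definite ($\phi$ can be large and negative for large $\lambda_i$). The linchpin is the algebraic identity
\[
D=B^2+2A_{i,n}C=\bigl(h_{ii}(n)\phi+cI_{i,n}\bigr)^2+4c\,h_{ii}^2(n),
\]
obtained by substituting the definitions of $A_{i,n},B,C$ and collecting terms. This yields $D\geq\bigl(h_{ii}(n)\phi+cI_{i,n}\bigr)^2$, hence $\sqrt{D}\geq\bigl|h_{ii}(n)\phi+cI_{i,n}\bigr|\geq-\bigl(h_{ii}(n)\phi+cI_{i,n}\bigr)$, so $d\tilde p_{i,n}/d\phi\geq0$. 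As a bonus it shows $D>0$, confirming that the square root in (\ref{da:opn}) is always real and the best response is well defined. Combining $d\tilde p_{i,n}/d\phi\geq0$ with $d\phi/d\lambda_i=-1$ gives $d\tilde p_{i,n}/d\lambda_i\leq0$, and re-applying the monotone projection completes the proof. The main obstacle is therefore confined to this discriminant identity; once it is in hand the monotonicity is immediate, and the comparative-statics viewpoint serves as a useful check that the sign must indeed come out nonnegative.
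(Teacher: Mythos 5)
Your proof is correct, but it takes a different route from the paper's. The paper never touches the closed-form root expression (\ref{da:opn}): it simply writes down the per-channel first-order optimality condition,
\[
\frac{h_{ii}(n)}{h_{ii}(n)p_{i}(n)+I_{i,n}(\mathbf{p}_{-i})}-cp_{i}(n)+cp_{i}^{k}(n)-\mu_{n}h_{i0}(n)=\lambda_{i},
\]
observes that the left-hand side is strictly decreasing in $p_{i}(n)$ (both the fraction and $-cp_{i}(n)$ decrease), so the implicitly defined map $\lambda_{i}\mapsto p_{i}(n)$ is nonincreasing, and then notes, as you do, that the projection preserves monotonicity. That is a two-line inversion argument and corresponds in spirit to the monotone comparative-statics remark you offer as a cross-check (your supermodularity observation $\partial^{2}F/\partial x\,\partial\phi=1>0$ is the abstract version of the paper's ``the FOC shifts monotonically with $\lambda_i$''). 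Your main line of attack instead differentiates the quadratic-root formula directly and closes the sign question with the discriminant identity $B^{2}+2A_{i,n}C_{i,n}=\bigl(h_{ii}(n)\phi+cI_{i,n}\bigr)^{2}+4c\,h_{ii}^{2}(n)$, which I verified is correct (as is the simplification $Bh_{ii}(n)+A_{i,n}I_{i,n}=h_{ii}(n)\bigl(h_{ii}(n)\phi+cI_{i,n}\bigr)$ that your inequality implicitly uses). The trade-off: the paper's argument is shorter and avoids any algebra with the explicit root, while yours is heavier but yields a genuine bonus the paper leaves implicit, namely that the discriminant is strictly positive, so the square root in (\ref{da:opn}) is always real and the stated best response is well defined for every $\lambda_{i}$, $\boldsymbol{\mu}$, and $\mathbf{p}_{-i}$.
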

\begin{proof}
See Appendix \ref{subsec:plm}.
\end{proof}
Therefore, one can exploit the bisection method to determine the optimal
$\lambda_{i}$. Define the mappings $\bar{\mathbf{V}}_{i}\left(\mathbf{p}_{-i},\boldsymbol{\mu}\right)\triangleq\left(\bar{V}_{i,n}(\mathbf{p}_{-i},\boldsymbol{\mu})\right)_{n=1}^{N}$
and $\bar{\mathbf{U}}(\mathbf{p})\triangleq\left(\bar{U}_{n}(\mathbf{p})\right)_{n=1}^{N}$.
Then, the NE of $\bar{\mathcal{G}}_{c,k}$ can be distributively computed
via Algorithm 1 by replacing $\mathbf{p}_{i}^{t+1}=\mathbf{V}_{i}\left(\mathbf{p}_{-i}^{t},\boldsymbol{\mu}\right)$
with $\boldsymbol{\mu}^{t+1}=\bar{\mathbf{U}}(\mathbf{p}^{t})$ and
$\mathbf{p}_{i}^{t+1}=\bar{\mathbf{V}}_{i}\left(\mathbf{p}_{-i}^{t},\boldsymbol{\mu}^{t}\right)$
for $i=0,...,M$. With the above VI and NEP interpretations, we formally
state the proximal point method applied to solving GNEP $\mathcal{G}$
in Algorithm 3. The convergence of Algorithm 3 is investigated in
Theorem \ref{thm:dpv}.

\begin{algorithm}[tbph]
\caption{\textbf{: Distributed Proximal Algorithm for }$\mathcal{G}$}

1: Set the initial point $(\mathbf{p}^{0},\boldsymbol{\mu}^{0})$,
precision $\epsilon$, and $k=0$;

2: Compute the NE $(\mathbf{p}^{\star k},\boldsymbol{\mu}^{\star k})$
of $\bar{\mathcal{G}}_{c,k}$ via Algorithm 1;

3: Update the power and price as $\mathbf{p}^{k+1}=(1-\eta_{k})\mathbf{p}^{k}+\eta_{k}\mathbf{p}^{\star k}$,
$\boldsymbol{\mu}^{k+1}=(1-\eta_{k})\boldsymbol{\mu}^{k}+\eta_{k}\boldsymbol{\mu}^{\star k}$;

4: $k=k+1$;

5: If $\left\Vert \bar{\mathbf{p}}^{k}-\bar{\mathbf{p}}^{k-1}\right\Vert \leq\epsilon$
stop, otherwise go to step 2.
\end{algorithm}

\begin{thm}
\label{thm:dpv}Given $\mathbf{\Psi}\succeq\mathbf{0}$ and $0<\eta_{k}<2$
for $\forall k$, the sequence $\{(\mathbf{p}^{k},\boldsymbol{\mu}^{k})\}_{k=0}^{\infty}$
generated by Algorithm 3 converges to a solution $(\mathbf{p}^{\star},\boldsymbol{\mu}^{\star})$
of $\mathrm{VI}(\bar{\mathcal{S}},\bar{\mathbf{f}})$\textup{ and
$\mathbf{p}^{\star}$ is a GNE (VE) of }$\mathcal{G}$.
\end{thm}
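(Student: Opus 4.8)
The plan is to recognize that Algorithm~3 is nothing but a relaxed proximal point (resolvent) iteration applied to the monotone $\mathrm{VI}(\bar{\mathcal{S}},\bar{\mathbf{f}})$, and then to invoke the classical convergence theory for such iterations from \cite{FacchineiPang03}. Two things must be verified: that the inner map $J_{c}$ returned in Step~2 is well defined and computable, and that the outer averaging in Step~3 converges to a fixed point. I would organize the argument accordingly, and finish by translating the limit back to a VE of $\mathcal{G}$ via Lemma~\ref{lem:pue}.

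For the inner loop, note that $\bar{\mathbf{f}}_{c,\bar{\mathbf{p}}^{k}}(\bar{\mathbf{p}})=\bar{\mathbf{f}}(\bar{\mathbf{p}})+c(\bar{\mathbf{p}}-\bar{\mathbf{p}}^{k})$ adds a strongly monotone term of modulus $c$ to the map $\bar{\mathbf{f}}$, which is monotone by Lemma~\ref{lem:mon} under $\Psi\succeq\mathbf{0}$. Hence $\bar{\mathbf{f}}_{c,\bar{\mathbf{p}}^{k}}$ is strongly monotone for any $c>0$, so $\mathrm{VI}(\bar{\mathcal{S}},\bar{\mathbf{f}}_{c,\bar{\mathbf{p}}^{k}})$ has a unique solution, namely the NE of $\bar{\mathcal{G}}_{c,k}$. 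Crucially, the proximal regularization converts the degenerate entry $[\bar{\Psi}]_{(M+2)(M+2)}=0$ into a positive one, restoring the P-matrix (indeed strict diagonal dominance) property of the corresponding matrix; the inner best-response loop (Algorithm~1 with the closed forms in (\ref{da:pp})--(\ref{da:up})) therefore converges by the same contraction argument as in Proposition~\ref{pro:bes}. Thus Step~2 delivers exactly $J_{c}(\bar{\mathbf{p}}^{k})$ and Step~3 is precisely $\bar{\mathbf{p}}^{k+1}=(1-\eta_{k})\bar{\mathbf{p}}^{k}+\eta_{k}J_{c}(\bar{\mathbf{p}}^{k})$.

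For the outer loop, I would identify $J_{c}=(I+c^{-1}(\bar{\mathbf{f}}+N_{\bar{\mathcal{S}}}))^{-1}$ as the resolvent of the operator $\bar{\mathbf{f}}+N_{\bar{\mathcal{S}}}$, where $N_{\bar{\mathcal{S}}}$ is the normal cone of the closed convex set $\bar{\mathcal{S}}$. Since $\bar{\mathbf{f}}$ is continuous and monotone and $\bar{\mathcal{S}}$ is closed convex, this operator is maximal monotone, its resolvent $J_{c}$ is firmly nonexpansive (i.e.\ $\tfrac{1}{2}$-averaged), and $\mathrm{Fix}(J_{c})$ coincides with $\mathrm{SOL}(\bar{\mathcal{S}},\bar{\mathbf{f}})$. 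This fixed-point set is nonempty: by the co-coercivity of $-\mathbf{g}$ in Lemma~\ref{lem:coc} a VE of $\mathcal{G}$ exists, and by Lemma~\ref{lem:pue} the pair formed with the associated Lagrange multiplier solves $\mathrm{VI}(\bar{\mathcal{S}},\bar{\mathbf{f}})$, as already used for Theorem~\ref{thm:pcv}. The relaxed iteration with $0<\eta_{k}<2$ is then a standard averaged (Krasnoselskii--Mann) scheme whose admissible relaxation range $(0,2)$ follows exactly from the firm nonexpansiveness of $J_{c}$; invoking the proximal point convergence theorem of \cite{FacchineiPang03} gives $\bar{\mathbf{p}}^{k}\to\bar{\mathbf{p}}^{\star}=(\mathbf{p}^{\star},\boldsymbol{\mu}^{\star})\in\mathrm{SOL}(\bar{\mathcal{S}},\bar{\mathbf{f}})$. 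Since $\bar{\mathcal{G}}$ is equivalent to $\mathrm{VI}(\bar{\mathcal{S}},\bar{\mathbf{f}})$, applying Lemma~\ref{lem:pue} once more shows $\mathbf{p}^{\star}$ is a GNE (VE) of $\mathcal{G}$.

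The main obstacle I anticipate is the existence/nonemptiness of the solution set together with the maximality of $\bar{\mathbf{f}}+N_{\bar{\mathcal{S}}}$: because $\bar{\mathcal{S}}$ is unbounded in its $\boldsymbol{\mu}$-component ($\mathbb{R}_{+}^{N}$), compactness-based existence arguments do not apply directly, so I must lean on the co-coercivity of $-\mathbf{g}$ (Lemma~\ref{lem:coc}) and the previously established existence of a VE rather than on a coercivity estimate for $\bar{\mathbf{f}}$ itself. A secondary technical point is pinning down the precise relaxation window $0<\eta_{k}<2$, which rests on the $\tfrac{1}{2}$-averagedness (firm nonexpansiveness) of the resolvent and not merely on its nonexpansiveness; for step sizes bounded away from the endpoints the required summability condition holds automatically.
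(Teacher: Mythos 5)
Your proof is correct and takes essentially the same route as the paper, whose entire proof is the citation of Lemma \ref{lem:mon} (monotonicity of $\mathrm{VI}(\bar{\mathcal{S}},\bar{\mathbf{f}})$ under $\mathbf{\Psi}\succeq\mathbf{0}$) together with the relaxed proximal-point convergence theorem \cite[Th. 12.3.9]{FacchineiPang03}; your resolvent/firm-nonexpansiveness (Krasnoselskii--Mann) argument is exactly the machinery behind that theorem, and your treatment of the inner loop matches the paper's ``$c$ chosen large enough'' discussion. The only adjustment needed: nonemptiness of the solution set of $\mathrm{VI}(\bar{\mathcal{S}},\bar{\mathbf{f}})$ should be drawn from Proposition \ref{pro:unq} (a VE of $\mathcal{G}$ always exists, by compactness of $\mathcal{S}$) combined with Lemma \ref{lem:pue}, rather than from Lemma \ref{lem:coc}, whose co-coercivity hypothesis requires $\mathbf{\Psi}\succ\mathbf{0}$ while the theorem assumes only $\mathbf{\Psi}\succeq\mathbf{0}$.
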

\begin{proof}
Theorem \ref{thm:dpv} follows from Lemma \ref{lem:mon} and \cite[Th. 12.3.9]{FacchineiPang03}.
\end{proof}
Similar to Algorithm 2, Algorithm 3 is also a distributed algorithm,
since the update of the proximal point can be conducted locally at
a BS or MUE. An advantage of Algorithm 3 is that price and power are
updated simultaneously, which may accelerate the convergence of the
price. Another advantage is that the condition $\mathbf{\Psi}\succeq\mathbf{0}$
in Theorem \ref{thm:dpv} is a bit weaker than requiring that $\mathbf{\Psi}$
is a P-matrix in Theorem \ref{thm:pcv}. On the other hand, Algorithm
3 still has two loops, the inner one for computing the NE of $\bar{\mathcal{G}}_{c,k}$
and the outer one for updating the proximal point. Because of the
simultaneous updating, the price in the inner loop is updated more
frequently than in Algorithm 2, meaning that the MUEs have to broadcast
the price more frequently. This implies a tradeoff between convergence
speed and signaling overhead.

\begin{figure*}[b]
\centering \hrulefill \setcounter{MYtempeqncnt}{\value{equation}}
\setcounter{equation}{23} 
\begin{equation}
\left[\mathbf{\Upsilon}\right]_{ij}\triangleq\begin{cases}
\max_{n}\sum_{l\neq j}h_{jl}^{2}(n)h_{ll}(n)p_{l,n}^{\mathrm{max}}\frac{\left(I_{l,n}(\mathbf{p}_{-l}^{\mathrm{max}})+I_{l,n}(\mathbf{p}^{\mathrm{max}})\right)^{2}}{\sigma_{l}^{4}(n)}, & i=j\\
\max_{n}h_{ij}(n)\frac{h_{jj}(n)}{\sigma_{j}^{2}(n)}+\max_{n}\sum_{l\neq i,j}h_{il}(n)h_{jl}(n)\frac{\left(I_{l,n}(\mathbf{p}_{-l}^{\mathrm{max}})+I_{l,n}(\mathbf{p}^{\mathrm{max}})\right)^{2}}{\sigma_{l}^{4}(n)}, & i\neq j
\end{cases}\label{go:mx}
\end{equation}
\setcounter{equation}{\value{MYtempeqncnt}} 
\end{figure*}

\section{Network Utility Maximization via GNEP\label{sec:gog}}

In this section, we consider NUM problem $\mathcal{P}$ in (\ref{gf:sm}),
aiming to maximize the sum rate of all BSs under the global QoS constraints.
As pointed out in Section \ref{sec:gf}, $\mathcal{P}$ is an NP-hard
problem, i.e., finding its globally optimal solution requires prohibitive
computational complexity even if a centralized approach is used. Thus,
low-cost suboptimal solutions are preferable in practice. Our goal
is to develop efficient distributed methods to find a stationary solution
of $\mathcal{P}$ by utilizing the above introduced GNEP methods.

For this purpose, we establish a bridge between NUM problem $\mathcal{P}$
and a GNEP. Specifically, consider the following penalized GNEP:
\begin{equation}
\mathcal{G}_{\mathbf{p}^{u}}:\;\begin{array}{ll}
\underset{\mathbf{p}_{i}\in\mathcal{S}_{i}^{\mathrm{pow}}}{\mathrm{maximize}} & R_{i}(\mathbf{p}_{i},\mathbf{p}_{-i})-(\mathbf{p}_{i}-\mathbf{p}_{i}^{u})^{T}\mathbf{b}_{i}(\mathbf{p}^{u})\\
\mathrm{subject\,to} & \mathbf{g}(\mathbf{p})\leq\mathbf{0}
\end{array}\label{go:gp}
\end{equation}
for $i=0,\ldots,M$, where 
\begin{multline*}
\mathbf{b}_{i}(\mathbf{p})\triangleq-\sum_{j\neq i}\nabla_{\mathbf{p}_{i}}R_{j}(\mathbf{p})=-\sum_{j\neq i}\left(\frac{\partial R_{j,n}}{\partial p_{i}(n)}\right)_{n=1}^{N}=\\
\left(\sum_{j\neq i}\frac{h_{ij}(n)h_{jj}(n)p_{j}(n)}{I_{j,n}(\mathbf{p}_{-j})I_{j,n}(\mathbf{p})}\right)_{n=1}^{N}=\left(\sum_{j\neq i}\omega_{ij}(n)\right)_{n=1}^{N}
\end{multline*}
with $\omega_{ij}(n)\triangleq\frac{h_{ij}(n)h_{jj}(n)p_{j}(n)}{I_{j,n}(\mathbf{p}_{-j})I_{j,n}(\mathbf{p})}$
, $I_{j,n}(\mathbf{p}_{-j})\triangleq\sigma_{j}(n)+\sum_{l\neq j}h_{lj}(n)p_{l}(n)$,
and $I_{j,n}(\mathbf{p})\triangleq I_{j,n}(\mathbf{p}_{-j})+h_{jj}(n)p_{j}(n)$.
According to Proposition \ref{pro:unq}, $\mathcal{G}_{\mathbf{p}^{u}}$
always admits a VE, which is unique if $\mathbf{\Psi}$ is a P-matrix
or equivalently $\rho(\mathbf{\Phi})<1$.\footnote{Note that the term $(\mathbf{p}_{i}-\mathbf{p}_{i}^{u})^{T}\mathbf{b}_{i}(\mathbf{p}^{u})$
does not change the uniformly P property of the corresponding VI of
$\mathcal{G}_{\mathbf{p}^{u}}$.} We denote the VE of $\mathcal{G}_{\mathbf{p}^{u}}$ by $\mathrm{VE}(\mathbf{p}^{u})$,
which is the solution of $\mathrm{VI}(\mathcal{S},\mathbf{f}_{\mathbf{p}^{u}})$,
where $\mathbf{f}_{\mathbf{p}^{u}}(\mathbf{p})\triangleq\mathbf{f}(\mathbf{p})+\mathbf{b}(\mathbf{p}^{u})$
and $\mathbf{b}(\mathbf{p}^{u})\triangleq\left(\mathbf{b}_{i}(\mathbf{p}^{u})\right)_{i=0}^{M}$.
Then, $\mathcal{G}_{\mathbf{p}^{u}}$ is related to $\mathcal{P}$
in the following Proposition.
\begin{prop}
\label{pro:stp}A point $\mathbf{p}^{\star}$ is a stationary point
of $\mathcal{P}$ if and only if it is a fixed point of $\mathrm{VE}(\mathbf{p}^{u})$,
i.e., $\mathbf{p}^{\star}=\mathrm{VE}(\mathbf{p}^{\star})$.
\end{prop}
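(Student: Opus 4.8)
The plan is to collapse both sides of the claimed equivalence onto the \emph{same} variational inequality over the common feasible set $\mathcal{S}$, namely $\mathrm{VI}(\mathcal{S},-\nabla_{\mathbf{p}}\sum_{i=0}^{M}R_{i})$, and then read off the ``if and only if'' directly. The cornerstone is the observation that the per-player operator of the penalized game, when the penalty is anchored on the diagonal $\mathbf{p}^{u}=\mathbf{p}$, reconstructs the full sum-rate gradient. Starting from the definitions $\mathbf{f}_{i}(\mathbf{p})=-\nabla_{\mathbf{p}_{i}}R_{i}(\mathbf{p})$ and $\mathbf{b}_{i}(\mathbf{p})=-\sum_{j\neq i}\nabla_{\mathbf{p}_{i}}R_{j}(\mathbf{p})$, I would first record the identity
\[
\mathbf{f}_{i}(\mathbf{p})+\mathbf{b}_{i}(\mathbf{p})=-\nabla_{\mathbf{p}_{i}}\sum_{j=0}^{M}R_{j}(\mathbf{p}),\qquad i=0,\ldots,M,
\]
so that, stacking over $i$, $\mathbf{f}(\mathbf{p})+\mathbf{b}(\mathbf{p})=-\nabla_{\mathbf{p}}\sum_{i=0}^{M}R_{i}(\mathbf{p})$. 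This is precisely the reason the linear penalty $-(\mathbf{p}_{i}-\mathbf{p}_{i}^{u})^{T}\mathbf{b}_{i}(\mathbf{p}^{u})$ was introduced: it hands each player the cross-gradient contributions of the other BSs' rates that are missing from its own selfish utility $R_{i}$, and these are exactly the terms $\omega_{ij}(n)$ already displayed in the definition of $\mathbf{b}_{i}$.

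Next I would express both objects as VI solutions. By the construction preceding the proposition (Lemma~\ref{lem:vg} applied to the penalized game $\mathcal{G}_{\mathbf{p}^{u}}$, noting that $\mathbf{b}(\mathbf{p}^{u})$ enters only as a constant offset), $\mathrm{VE}(\mathbf{p}^{u})$ is characterized as the solution of $\mathrm{VI}(\mathcal{S},\mathbf{f}_{\mathbf{p}^{u}})$ with $\mathbf{f}_{\mathbf{p}^{u}}(\mathbf{p})=\mathbf{f}(\mathbf{p})+\mathbf{b}(\mathbf{p}^{u})$. For the optimization side, the feasible set of $\mathcal{P}$ coincides with $\mathcal{S}$, which is convex since its defining constraints $\mathbf{g}(\mathbf{p})\leq\mathbf{0}$ and $\mathbf{p}\in\mathcal{S}^{\mathrm{pow}}$ are affine. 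Hence a point $\mathbf{p}^{\star}$ is stationary for $\mathcal{P}$ exactly when no feasible ascent direction exists, i.e.
\[
\left(\mathbf{p}-\mathbf{p}^{\star}\right)^{T}\nabla_{\mathbf{p}}\sum_{i=0}^{M}R_{i}(\mathbf{p}^{\star})\leq0,\qquad\forall\mathbf{p}\in\mathcal{S},
\]
which is nothing but the statement that $\mathbf{p}^{\star}$ solves $\mathrm{VI}(\mathcal{S},-\nabla_{\mathbf{p}}\sum_{i}R_{i})$.

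Finally I would close the loop. Imposing the fixed-point relation $\mathbf{p}^{\star}=\mathrm{VE}(\mathbf{p}^{\star})$ amounts to setting $\mathbf{p}^{u}=\mathbf{p}^{\star}$ in the VE-defining inequality and evaluating its operator at $\mathbf{p}^{\star}$, which yields $\mathbf{f}_{\mathbf{p}^{\star}}(\mathbf{p}^{\star})=\mathbf{f}(\mathbf{p}^{\star})+\mathbf{b}(\mathbf{p}^{\star})=-\nabla_{\mathbf{p}}\sum_{i}R_{i}(\mathbf{p}^{\star})$ by the identity above. Thus the fixed-point condition is \emph{verbatim} $\mathrm{VI}(\mathcal{S},-\nabla_{\mathbf{p}}\sum_{i}R_{i})$, the very VI that characterizes stationarity of $\mathcal{P}$, establishing both implications at once. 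I expect the only delicate point to be the justification that, for the nonconvex objective, the VI first-order condition is the correct notion of stationarity and coincides with the KKT system; this follows because the affineness of all constraints guarantees a constraint qualification, while the convexity and compactness of $\mathcal{S}$ needed here are already in place from the excerpt. The remaining gradient computation for $\mathbf{b}_{i}$ is routine and, as noted, is essentially carried out in the definition of $\omega_{ij}(n)$.
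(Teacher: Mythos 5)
Your proposal is correct and follows essentially the same route as the paper's own proof: both rest on the identity $\mathbf{f}_{i}(\mathbf{p})+\mathbf{b}_{i}(\mathbf{p})=-\nabla_{\mathbf{p}_{i}}\sum_{j=0}^{M}R_{j}(\mathbf{p})$, so that evaluating the VE-defining inequality of $\mathrm{VI}(\mathcal{S},\mathbf{f}_{\mathbf{p}^{u}})$ at the fixed point $\mathbf{p}^{u}=\mathbf{p}^{\star}$ reproduces verbatim the first-order stationarity condition $(\mathbf{p}-\mathbf{p}^{\star})^{T}\nabla_{\mathbf{p}}\sum_{i}R_{i}(\mathbf{p}^{\star})\leq0$ for all $\mathbf{p}\in\mathcal{S}$, giving both implications at once. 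Your added remark that the affine constraints supply a constraint qualification (so the VI condition is indeed the right notion of stationarity for the nonconvex objective) is a small point of rigor the paper leaves implicit, but it does not change the argument.
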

\begin{proof}
See Appendix \ref{subsec:stp}.
\end{proof}
Proposition \ref{pro:stp} suggests that we can achieve a stationary
point of $\mathcal{P}$ by using the fixed point iteration $\mathbf{p}^{u+1}=\mathrm{VE}(\mathbf{p}^{u})$,
where in each iteration a GNEP $\mathcal{G}_{\mathbf{p}^{u}}$ is
to be solved. Then, we can exploit the distributed pricing algorithm
(Algorithm 2) or the distributed proximal algorithm (Algorithm 3)
to obtain the VE of $\mathcal{G}_{\mathbf{p}^{u}}$. This procedure
is formally stated in Algorithm 4. 

\begin{algorithm}[tbph]
\caption{\textbf{: Distributed GNEP Algorithm for }$\mathcal{P}$}

1: Set the initial point $\mathbf{p}^{0}$, precision $\epsilon$,
and $u=0$;

2: Compute the $\mathrm{VE}(\mathbf{p}^{u})$ of $\mathcal{G}_{\mathbf{p}^{u}}$
via Algorithm 2 or 3;

3: Update the power as $\mathbf{p}^{u+1}=\mathrm{VE}(\mathbf{p}^{u})$;

4: $u=u+1$;

5: If $\left\Vert \mathbf{p}^{u}-\mathbf{p}^{u-1}\right\Vert \leq\epsilon$
stop, otherwise go to step 2.
\end{algorithm}

The iteration $\mathbf{p}^{u+1}=\mathrm{VE}(\mathbf{p}^{u})$ can
be performed locally at each BS and does not require any additional
signaling. On the other hand, each BS $i$ needs to know $\mathbf{b}_{i}(\mathbf{p}^{u})$
or equivalently $\omega_{ij}(n)=\frac{h_{ij}(n)h_{jj}(n)p_{j}(n)}{I_{j,n}(\mathbf{p}_{-j})I_{j,n}(\mathbf{p})}$
for $n=1,\ldots,N$ and $j\neq i$. $(\omega_{ij}(n))_{n=1}^{N}$
can be obtained by BS $j$ via feedback from its users. Then, BSs
$j\neq i$ have to forward $(\omega_{ij}(n))_{n=1}^{N}$ to BS $i$
via wireline or wireless backhaul links. Therefore, the cost of obtaining
a stationary point of $\mathcal{P}$, besides the additional computational
complexity, is an information exchange between BSs. This implies a
fundamental tradeoff between the network utility and signaling overhead.

To investigate the convergence of Algorithm 4, we introduce matrix
$\mathbf{\Upsilon}\in\mathbb{R}^{(M+1)\times(M+1)}$ shown in (\ref{go:mx})\addtocounter{equation}{1}
at the bottom of this page, where $p_{l,n}^{\mathrm{max}}\triangleq\min\{p_{l}^{\mathrm{sum}},p_{l,n}^{\mathrm{peak}}\}$,
$\mathbf{p}^{\mathrm{max}}\triangleq(p_{m,n}^{\mathrm{max}})_{n,m=0}^{N,M}$,
and $\mathbf{p}_{-l}^{\mathrm{max}}\triangleq(p_{m,n}^{\mathrm{max}})_{n=0,m\neq l}^{N,M}$.
Then, Algorithm 4 converges to a stationary solution of $\mathcal{P}$
under the following condition.
\begin{thm}
\label{thm:gcv}The sequence $\{\mathbf{p}^{u}\}_{u=0}^{\infty}$
generated by Algorithm 4 converges to a stationary point of $\mathcal{P}$
if $\rho(\mathbf{\Psi}^{-1}\mathbf{\Upsilon})<1$ and $\mathbf{\Psi}\succ\mathbf{0}$.
\end{thm}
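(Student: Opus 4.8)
The plan is to prove convergence of the fixed point iteration $\mathbf{p}^{u+1}=\mathrm{VE}(\mathbf{p}^{u})$ by showing that the map $\mathrm{VE}(\cdot)$ is a contraction on $\mathcal{S}^{\mathrm{pow}}$ under the stated condition $\rho(\mathbf{\Psi}^{-1}\mathbf{\Upsilon})<1$ together with $\mathbf{\Psi}\succ\mathbf{0}$. By Proposition~\ref{pro:stp} any fixed point is a stationary point of $\mathcal{P}$, so once the contraction (hence convergence to a unique fixed point) is established, the theorem follows. The natural tool is a weighted block-maximum norm argument, mirroring the analysis of Algorithm~1 in Appendix~\ref{subsec:bes}: I expect the convergence to be governed by the spectral radius of a nonnegative matrix that compares the ``sensitivity'' of the equilibrium map to perturbations of the penalty term $\mathbf{b}(\mathbf{p}^u)$ against the ``stability'' supplied by $\mathbf{\Psi}$.

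First I would quantify how $\mathrm{VE}(\mathbf{p}^u)$ depends on $\mathbf{p}^u$. Recall that $\mathrm{VE}(\mathbf{p}^u)$ solves $\mathrm{VI}(\mathcal{S},\mathbf{f}_{\mathbf{p}^u})$ with $\mathbf{f}_{\mathbf{p}^u}(\mathbf{p})=\mathbf{f}(\mathbf{p})+\mathbf{b}(\mathbf{p}^u)$. The dependence on $\mathbf{p}^u$ enters only through the constant shift $\mathbf{b}(\mathbf{p}^u)$. Taking two points $\mathbf{p}^u_1,\mathbf{p}^u_2$ with corresponding solutions $\mathbf{p}_1=\mathrm{VE}(\mathbf{p}^u_1)$, $\mathbf{p}_2=\mathrm{VE}(\mathbf{p}^u_2)$, I would write down both VI inequalities, add them, and use the strong monotonicity of $\mathbf{f}$ on the left (guaranteed by $\mathbf{\Psi}\succ\mathbf{0}$, as used in Proposition~\ref{pro:unq} and Lemma~\ref{lem:coc}) against a Lipschitz bound on $\mathbf{b}(\cdot)$ on the right. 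This yields a blockwise estimate of the form $\|\mathbf{p}_{1,i}-\mathbf{p}_{2,i}\|$ controlled by the diagonal entries of $\mathbf{\Psi}$ on one side and by the entries of $\mathbf{\Upsilon}$ bounding $\|\mathbf{b}_i(\mathbf{p}^u_1)-\mathbf{b}_i(\mathbf{p}^u_2)\|$ on the other. Concretely, I would show $[\mathbf{\Psi}]$ lower-bounds the per-block contraction gain while $[\mathbf{\Upsilon}]$ upper-bounds the Lipschitz constants of the coupling terms $\omega_{ij}(n)$, so that collecting the block norms into a vector $\mathbf{e}\triangleq(\|\mathbf{p}_{1,i}-\mathbf{p}_{2,i}\|)_i$ gives a component-wise inequality $\mathbf{\Psi}\,\mathbf{e}\le\mathbf{\Upsilon}\,\mathbf{e}^u$, where $\mathbf{e}^u$ collects the block distances of the inputs.

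The crux is the matrix $\mathbf{\Upsilon}$ in~(\ref{go:mx}): I would derive its entries by bounding the partial derivatives of $\omega_{ij}(n)$ with respect to the powers, using that the interference-plus-noise terms $I_{l,n}$ are bounded below by $\sigma_l(n)$ and above by $I_{l,n}(\mathbf{p}^{\mathrm{max}})$ over the compact set $\mathcal{S}^{\mathrm{pow}}$. This is the source of the $\sigma_l^4(n)$ denominators and the $p_{l,n}^{\mathrm{max}}$ factors appearing in~(\ref{go:mx}); the diagonal case isolates the self-dependence of $\mathbf{b}_i$ on $\mathbf{p}_i$ through the cross terms $l\neq i$, while the off-diagonal case separates the dominant direct coupling $h_{ij}(n)h_{jj}(n)/\sigma_j^2(n)$ from the residual terms $l\neq i,j$. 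Since $\mathbf{\Psi}\succ\mathbf{0}$ is invertible with nonnegative structure, premultiplying by $\mathbf{\Psi}^{-1}$ turns the inequality into $\mathbf{e}\le\mathbf{\Psi}^{-1}\mathbf{\Upsilon}\,\mathbf{e}^u$, and $\rho(\mathbf{\Psi}^{-1}\mathbf{\Upsilon})<1$ makes the nonnegative matrix $\mathbf{\Psi}^{-1}\mathbf{\Upsilon}$ a contraction in a suitable monotone (weighted $\ell_\infty$) norm whose existence follows from Perron--Frobenius theory (cf. Lemma~\ref{lem:kmx} in Appendix~\ref{subsec:vi}). The Banach fixed point theorem then gives convergence of $\{\mathbf{p}^u\}$ to the unique fixed point, which by Proposition~\ref{pro:stp} is a stationary point of $\mathcal{P}$.

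I expect the main obstacle to be the explicit and tight derivation of $\mathbf{\Upsilon}$: obtaining the stated entries requires carefully bounding each $\partial\omega_{ij}(n)/\partial p_l(m)$ uniformly over $\mathcal{S}^{\mathrm{pow}}$ and correctly attributing the direct versus residual coupling across the index sets $l=j$, $l=i$, and $l\neq i,j$, while keeping the bounds block-separable so that they assemble into a clean matrix inequality. A secondary subtlety is justifying the step that converts the pointwise VI estimate into the block-norm inequality $\mathbf{\Psi}\,\mathbf{e}\le\mathbf{\Upsilon}\,\mathbf{e}^u$; this relies on the strong-monotonicity constant of $\mathbf{f}$ being captured by $\mathbf{\Psi}$ exactly as in the proof of Proposition~\ref{pro:unq}, so I would reuse that machinery rather than re-deriving it. Once these two bounds are in hand, the spectral-radius contraction argument and the appeal to Proposition~\ref{pro:stp} are routine.
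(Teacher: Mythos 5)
Your proposal follows essentially the same route as the paper's proof: write the VI optimality conditions for $\mathrm{VE}(\mathbf{p}^{1})$ and $\mathrm{VE}(\mathbf{p}^{2})$, add them, lower-bound the $\mathbf{f}$-difference blockwise by $\left[\mathbf{\Psi}\mathbf{s}\right]_{i}$ via the machinery of Proposition \ref{pro:unq}, upper-bound the $\mathbf{b}$-difference via mean-value/Jacobian estimates captured by $\mathbf{\Upsilon}$, and conclude contraction from $\mathbf{\Psi}\mathbf{s}\leq\mathbf{\Upsilon}\mathbf{e}$ and $\rho(\mathbf{\Psi}^{-1}\mathbf{\Upsilon})<1$, with Proposition \ref{pro:stp} identifying the fixed point as a stationary point of $\mathcal{P}$. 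Your added Perron--Frobenius/weighted-norm justification of the final contraction step (using that $\mathbf{\Psi}$ is a Z-matrix, hence a K-matrix with nonnegative inverse) is a sound elaboration of a step the paper states tersely.
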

\begin{proof}
See Appendix \ref{subsec:gcv}.
\end{proof}
The condition $\mathbf{\Psi}\succ\mathbf{0}$ (where $\mathbf{\Psi}$
is defined in (\ref{vr:ms})) in Theorem \ref{thm:gcv} is needed
to guarantee the convergence of the embedded Algorithm 2 or 3 (as
well as the invertibility of $\mathbf{\Psi}$). Meanwhile, Algorithm
4 needs one more condition, i.e., $\rho(\mathbf{\Psi}^{-1}\mathbf{\Upsilon})<1$,
for convergence. Considering the definition of $\mathbf{\Upsilon}$
in (\ref{go:mx}), this condition is more likely satisfied if the
cross interference between the BSs is small. Yet, the convergence
condition of Algorithm 4 is much stronger than those of Algorithms
2 and 3. One may wonder if we can relax the condition in Theorem \ref{thm:gcv}
to make the proposed GNEP algorithm more practical. 

The answer is positive. To this end, let us consider the following
GNEP: 
\begin{equation}
\begin{array}{ll}
\underset{\mathbf{p}_{i}\in\mathcal{S}_{i}^{\mathrm{pow}}}{\mathrm{maximize}} & R_{i}(\mathbf{p}_{i},\mathbf{p}_{-i})-(\mathbf{p}_{i}-\mathbf{p}_{i}^{u})^{T}\mathbf{b}_{i}(\mathbf{p}^{u})-\frac{\tau}{2}\left\Vert \mathbf{p}_{i}-\mathbf{q}_{i}^{v}\right\Vert ^{2}\\
\mathrm{subject\,to} & \mathbf{g}(\mathbf{p})\leq\mathbf{0}.
\end{array}\label{go:pp}
\end{equation}
We denote the GNEP in (\ref{go:pp}) by $\mathcal{G}_{\mathbf{p}^{u},\mathbf{q}^{v}}$,
which is obtained by adding the proximal term $-\frac{\tau}{2}\left\Vert \mathbf{p}_{i}-\mathbf{q}_{i}^{v}\right\Vert ^{2}$
to each objective in $\mathcal{G}_{\mathbf{p}^{u}}$. Denote the variational
equilibrium of $\mathcal{G}_{\mathbf{p}^{u},\mathbf{q}^{v}}$ by $\mathrm{VE}(\mathbf{p}^{u},\mathbf{q}^{v})$.
Then, we propose Algorithm 5 for finding a stationary solution of
$\mathcal{P}$ with guaranteed convergence.

\begin{algorithm}[tbph]
\caption{\textbf{: Distributed Proximal GNEP Algorithm for }$\mathcal{P}$}

1: Set the initial points $\mathbf{p}^{0},\mathbf{q}^{0}$, precision
$\epsilon$, and $u,v=0$;

2: Compute the $\mathrm{VE}(\mathbf{p}^{u},\mathbf{q}^{v})$ of $\mathcal{G}_{\mathbf{p}^{u},\mathbf{q}^{v}}$
via Algorithm 2 or 3;

3: Update $\mathbf{p}$ as $\mathbf{p}^{u+1}=\mathrm{VE}(\mathbf{p}^{u},\mathbf{q}^{v})$;

4: $u=u+1$;

5: If$\left\Vert \mathbf{p}^{u}-\mathbf{p}^{u-1}\right\Vert \leq\epsilon$
go to step 7, otherwise go to step 3;

6: Update $\mathbf{q}$ as $\mathbf{q}^{v+1}=(1-\kappa_{v})\mathbf{q}^{v}+\kappa_{v}\mathbf{p}^{u}$;

7: $v=v+1$;

8: If$\left\Vert \mathbf{q}^{v}-\mathbf{q}^{v-1}\right\Vert \leq\epsilon$
stop, otherwise $u=0$, go to step 2;
\end{algorithm}

To find the VE of $\mathcal{G}_{\mathbf{p}^{u},\mathbf{q}^{v}}$,
Algorithm 5 invokes Algorithm 2 or 3, which in turn invokes Algorithm
1, i.e., the best response algorithm. Particularly, if Algorithm 2
is invoked, the best response of each BS is obtained by solving 
\begin{multline*}
\underset{\mathbf{p}_{i}\in\mathcal{S}_{i}^{\mathrm{pow}}}{\mathrm{maximize}}\;R_{i}(\mathbf{p}_{i},\mathbf{p}_{-i})-(\mathbf{p}_{i}-\mathbf{p}_{i}^{u})^{T}\mathbf{b}_{i}(\mathbf{p}^{u})-\frac{\tau}{2}\left\Vert \mathbf{p}_{i}-\mathbf{q}_{i}^{v}\right\Vert ^{2}\\
-\boldsymbol{\mu}^{T}\mathbf{g}(\mathbf{p})
\end{multline*}
 whose solution is given by (\ref{da:opn}) with $A_{i,n}=2\tau h_{ii}(n)$,
$B_{i,n}(\mathbf{p}_{-i},\lambda_{i})=h_{ii}(n)\phi_{i,n}(\mathbf{p}_{-i},\lambda_{i})-\tau I_{i,n}(\mathbf{p}_{-i})$,
$C_{i,n}(\mathbf{p}_{-i},\lambda_{i})=I_{i,n}(\mathbf{p}_{-i})\phi_{i,n}(\mathbf{p}_{-i},\lambda_{i})+h_{ii}(n)$,
and 
\[
\phi_{i,n}(\mathbf{p}_{-i},\lambda_{i})=\begin{cases}
\tau q_{0}^{v}(n)-b_{0,n}^{u}+\mu_{n}\tilde{h}_{00}(n)-\lambda_{0}, & i=0\\
\tau q_{i}^{v}(n)-b_{i,n}^{u}-\mu_{n}h_{i0}(n)-\lambda_{i}, & i\neq0.
\end{cases}
\]
If Algorithm 3 is invoked, the best response of each BS is obtained
by solving 
\begin{multline*}
\underset{\mathbf{p}_{i}\in\mathcal{S}_{i}^{\mathrm{pow}}}{\mathrm{maximize}}\;R_{i}(\mathbf{p}_{i},\mathbf{p}_{-i})-(\mathbf{p}_{i}-\mathbf{p}_{i}^{u})^{T}\mathbf{b}_{i}(\mathbf{p}^{u})-\frac{\tau}{2}\left\Vert \mathbf{p}_{i}-\mathbf{q}_{i}^{v}\right\Vert ^{2}\\
-\boldsymbol{\mu}^{T}\mathbf{g}(\mathbf{p})-\frac{c}{2}\left\Vert \mathbf{p}_{i}-\mathbf{p}_{i}^{k}\right\Vert ^{2}
\end{multline*}
whose solution is still given in the form of (\ref{da:opn}) with
$A_{i,n}=2(\tau+c)h_{ii}(n)$, $B_{i,n}(\mathbf{p}_{-i},\lambda_{i})=h_{ii}(n)\phi_{i,n}(\mathbf{p}_{-i},\lambda_{i})-(\tau+c)I_{i,n}(\mathbf{p}_{-i})$,
$C_{i,n}(\mathbf{p}_{-i},\lambda_{i})=I_{i,n}(\mathbf{p}_{-i})\phi_{i,n}(\mathbf{p}_{-i},\lambda_{i})+h_{ii}(n)$,
and 
\begin{multline*}
\phi_{i,n}(\mathbf{p}_{-i},\lambda_{i})=\\
\begin{cases}
\tau q_{0}^{v}(n)+cp_{0}^{k}(n)-b_{0,n}^{u}+\mu_{n}\tilde{h}_{00}(n)-\lambda_{0}, & i=0\\
\tau q_{i}^{v}(n)+cp_{i}^{k}(n)-b_{i,n}^{u}-\mu_{n}h_{i0}(n)-\lambda_{i}, & i\neq0.
\end{cases}
\end{multline*}
In both cases, $\lambda_{i}$ is chosen to be the minimum value such
that $\sum_{n=1}^{N}p_{i}(n)\leq p_{i}^{\mathrm{sum}}$, $\forall i$.
Similar to Lemma \ref{lem:plm}, we can also show that $p_{i}^{\star}(n)$
is monotonically nonincreasing in $\lambda_{i}$ so that it can be
efficiently found via the bisection method.

Now, we study the convergence of Algorithm 5, which is quite involved.
Thus, we first investigate the inner iteration $\mathbf{p}^{u+1}=\mathrm{VE}(\mathbf{p}^{u},\mathbf{q}^{v})$
and provide the following useful result.
\begin{prop}
\label{pro:cpu} Given $\mathbf{q}^{v}$ and $\tau\geq\max\left\{ \left|\lambda_{\mathrm{min}}\left(\mathbf{\Psi}-\mathbf{\Upsilon}\right)\right|,\tau_{\mathbf{\Psi}}\right\} $
with $\tau_{\mathbf{\Psi}}\triangleq\max\{\max_{i}(\sum_{j\neq i}\left[\left|\mathbf{\Psi}\right|\right]_{ij}-\left[\mathbf{\Psi}\right]_{ii}),\max_{j}(\sum_{i}\left[\left|\mathbf{\Psi}\right|\right]_{ij}-\left[\mathbf{\Psi}\right]_{jj})\}$,
$\mathbf{p}^{u+1}=\mathrm{VE}(\mathbf{p}^{u},\mathbf{q}^{v})$ converges
to a stationary point of the following problem: 
\begin{equation}
\mathcal{P}_{\mathbf{q}^{v}}:\;\begin{array}{ll}
\underset{\mathbf{p}}{\mathrm{maximize}} & \sum_{i=0}^{M}R_{i}(\mathbf{p})-\frac{\tau}{2}\left\Vert \mathbf{p}-\mathbf{q}^{v}\right\Vert ^{2}\\
\mathrm{subject\,to} & \mathbf{p}_{i}\in\mathcal{S}_{i}^{\mathrm{pow}},\quad i=0,\ldots,M\\
 & R_{0,n}(\mathbf{p})\geq\gamma_{n},\quad n=1,\ldots,N.
\end{array}\label{pv:pv}
\end{equation}
\end{prop}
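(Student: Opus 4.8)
The plan is to recognize that $\mathcal{P}_{\mathbf{q}^{v}}$ and $\mathcal{G}_{\mathbf{p}^{u},\mathbf{q}^{v}}$ in (\ref{go:pp}) are simply the \emph{proximally regularized} counterparts of $\mathcal{P}$ and of the penalized GNEP $\mathcal{G}_{\mathbf{p}^{u}}$ in (\ref{go:gp}): each is obtained by appending the strongly concave term $-\frac{\tau}{2}\left\Vert \cdot-\mathbf{q}^{v}\right\Vert ^{2}$, centered at the frozen point $\mathbf{q}^{v}$, to the respective objective. Since $\mathbf{q}^{v}$ is held fixed throughout the inner loop, this term acts exactly as an additive Tikhonov regularizer. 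I would therefore re-run, almost verbatim, the two-part argument already used for Algorithm 4, namely the fixed-point characterization of Proposition \ref{pro:stp} followed by the contraction argument behind Theorem \ref{thm:gcv}, carrying the extra gradient contribution $\tau(\mathbf{p}_{i}-\mathbf{q}_{i}^{v})$ through every estimate.

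\textbf{Step 1 (fixed point $\Leftrightarrow$ stationary point).} Following Lemma \ref{lem:vg} and Proposition \ref{pro:unq}, the VE of $\mathcal{G}_{\mathbf{p}^{u},\mathbf{q}^{v}}$ solves a $\mathrm{VI}(\mathcal{S},\mathbf{f}_{\mathbf{p}^{u},\mathbf{q}^{v}})$ whose $i$-th block is $-\nabla_{\mathbf{p}_{i}}R_{i}(\mathbf{p})+\mathbf{b}_{i}(\mathbf{p}^{u})+\tau(\mathbf{p}_{i}-\mathbf{q}_{i}^{v})$. At a fixed point of the inner iteration one sets $\mathbf{p}=\mathbf{p}^{u}=\mathbf{p}^{\star}$; since $\mathbf{b}_{i}(\mathbf{p}^{\star})=-\sum_{j\neq i}\nabla_{\mathbf{p}_{i}}R_{j}(\mathbf{p}^{\star})$, the first two blocks collapse to $-\nabla_{\mathbf{p}_{i}}\sum_{j=0}^{M}R_{j}(\mathbf{p}^{\star})$, so the VI inequality $(\mathbf{p}-\mathbf{p}^{\star})^{T}\mathbf{f}_{\mathbf{p}^{\star},\mathbf{q}^{v}}(\mathbf{p}^{\star})\geq0$ over $\mathbf{p}\in\mathcal{S}$ is precisely the first-order stationarity condition of $\mathcal{P}_{\mathbf{q}^{v}}$, whose objective gradient is $\nabla\sum_{j}R_{j}(\mathbf{p})-\tau(\mathbf{p}-\mathbf{q}^{v})$. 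This is the exact analog of Proposition \ref{pro:stp} with the $\tau$-term retained, establishing that $\mathbf{p}^{\star}=\mathrm{VE}(\mathbf{p}^{\star},\mathbf{q}^{v})$ holds if and only if $\mathbf{p}^{\star}$ is stationary for $\mathcal{P}_{\mathbf{q}^{v}}$.

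\textbf{Step 2 (convergence).} I would show the map $\mathbf{p}^{u}\mapsto\mathrm{VE}(\mathbf{p}^{u},\mathbf{q}^{v})$ is a contraction by repeating the sensitivity estimate from the proof of Theorem \ref{thm:gcv}. The dependence on $\mathbf{p}^{u}$ enters only through $\mathbf{b}(\mathbf{p}^{u})$, whose Lipschitz behavior is still governed by $\mathbf{\Upsilon}$ in (\ref{go:mx}); what changes is the monotonicity of the VI map. Because the penalty contributes $+\tau\mathbf{I}$ to each self-block of the map's Jacobian and nothing to the coupling blocks, the structured matrix $\mathbf{\Psi}$ of (\ref{vr:ms}) is replaced by $\mathbf{\Psi}+\tau\mathbf{I}$, so the natural contraction modulus becomes $\rho((\mathbf{\Psi}+\tau\mathbf{I})^{-1}\mathbf{\Upsilon})$. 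The requirement $\tau\geq\left|\lambda_{\min}(\mathbf{\Psi}-\mathbf{\Upsilon})\right|$ guarantees $\mathbf{\Psi}+\tau\mathbf{I}-\mathbf{\Upsilon}\succeq\mathbf{0}$, which is exactly what drives this modulus below one; meanwhile $\tau\geq\tau_{\mathbf{\Psi}}$ renders $\mathbf{\Psi}+\tau\mathbf{I}$ row- and column-diagonally dominant, hence a P-matrix (positive definite), so by Propositions \ref{pro:unq} and \ref{pro:ung} and Theorems \ref{thm:pcv}/\ref{thm:dpv} the inner VE is unique and computable by Algorithm 2 or 3. The Banach fixed-point theorem then yields geometric convergence of $\mathbf{p}^{u+1}=\mathrm{VE}(\mathbf{p}^{u},\mathbf{q}^{v})$ to the unique fixed point, which by Step 1 is a stationary point of $\mathcal{P}_{\mathbf{q}^{v}}$.

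The main obstacle is the quantitative contraction estimate of Step 2: one must verify carefully that the proximal regularization enters the analysis precisely as the additive shift $\mathbf{\Psi}\mapsto\mathbf{\Psi}+\tau\mathbf{I}$ while leaving $\mathbf{\Upsilon}$ untouched, and that the two thresholds play distinct roles, with $\tau\geq\left|\lambda_{\min}(\mathbf{\Psi}-\mathbf{\Upsilon})\right|$ securing the positive-semidefiniteness inequality $\mathbf{\Psi}+\tau\mathbf{I}\succeq\mathbf{\Upsilon}$ that underlies the contraction and $\tau\geq\tau_{\mathbf{\Psi}}$ securing well-posedness and computability of the inner GNEP. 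Extra care is needed because $\mathbf{\Psi}$ and $\mathbf{\Upsilon}$ are nonsymmetric, so the statements ``$\succeq\mathbf{0}$'' and ``$\lambda_{\min}$'' must be read through symmetric parts (or via M-matrix arguments), exactly as in the monotonicity results of Lemma \ref{lem:mon} and Theorem \ref{thm:dpv}.
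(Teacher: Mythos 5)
Your proposal follows the paper's own proof essentially step for step: the paper likewise first observes (via the argument of Proposition \ref{pro:stp}) that any limit of the inner iteration is a stationary point of $\mathcal{P}_{\mathbf{q}^{v}}$, then reruns the VI sensitivity estimate of Theorem \ref{thm:gcv} with the proximal term adding $\tau\mathbf{I}$ only to the diagonal blocks, obtaining $(\tau\mathbf{I}+\mathbf{\Psi})\mathbf{s}\leq\mathbf{\Upsilon}\mathbf{e}$, contraction when $\rho\left((\tau\mathbf{I}+\mathbf{\Psi})^{-1}\mathbf{\Upsilon}\right)<1$ (justified through Lemma \ref{lem:kmx} and $\tau>\left|\lambda_{\mathrm{min}}(\mathbf{\Psi}-\mathbf{\Upsilon})\right|$), and the second threshold $\tau_{\mathbf{\Psi}}$ enforcing diagonal dominance of $\tau\mathbf{I}+\mathbf{\Psi}$ so that the inner Algorithms 2/3 converge. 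The only caveat, which your write-up inherits from the proposition statement itself, is that the contraction argument really needs the strict inequalities ($\succ\mathbf{0}$, hence P-matrix), not the $\geq$/$\succeq$ versions, exactly as in the paper's own proof.
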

\begin{proof}
See Appendix \ref{subsec:cpu}.
\end{proof}
Problem $\mathcal{P}_{\mathbf{q}^{v}}$ is in fact the proximal version
of NUM problem $\mathcal{P}$ at $\mathbf{q}^{v}$. Proposition \ref{pro:cpu}
states that, if $\tau$ is chosen large enough, more exactly $\tau\geq\max\left\{ \left|\lambda_{\mathrm{min}}\left(\mathbf{\Psi}-\mathbf{\Upsilon}\right)\right|,\tau_{\mathbf{\Psi}}\right\} $,\footnote{Note that $\mathbf{\Psi}-\mathbf{\Upsilon}$ is a symmetric matrix,
so its eigenvalues are real.} the inner iteration will converge to a stationary point of $\mathcal{P}_{\mathbf{q}^{v}}$.
Let $F(\mathbf{p})\triangleq-\sum_{i=0}^{M}R_{i}(\mathbf{p})$ and
$F_{\tau,v}(\mathbf{p})\triangleq F(\mathbf{p})+\frac{\tau}{2}\left\Vert \mathbf{p}-\mathbf{q}^{v}\right\Vert ^{2}$.
The objective function in $\mathcal{P}_{\mathbf{q}^{v}}$ has the
following favorable property.
\begin{lem}
\label{lem:scv}Given $\tau>\left|\lambda_{\mathrm{min}}\left(\mathbf{\Psi}-\mathbf{\Upsilon}\right)\right|$,
$F_{\tau,v}(\mathbf{p})$ is strongly convex on $\mathcal{S}$, i.e.,
$(\mathbf{p}^{1}-\mathbf{p}^{2})^{T}(\nabla F_{\tau,v}(\mathbf{p}^{1})-\nabla F_{\tau,v}(\mathbf{p}^{2}))\geq L_{\mathrm{sc}}\left\Vert \mathbf{p}^{1}-\mathbf{p}^{2}\right\Vert ^{2}$,
$\forall\mathbf{p}^{1},\mathbf{p}^{2}\in\mathcal{S}$, with $L_{\mathrm{sc}}\triangleq\tau+\lambda_{\mathrm{min}}\left(\mathbf{\Psi}-\mathbf{\Upsilon}\right)$.
\end{lem}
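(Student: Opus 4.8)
The plan is to reduce the claimed strong convexity to a strong monotonicity estimate for $\nabla F$ and then reuse the curvature bounds already established for Proposition~\ref{pro:unq} and Theorem~\ref{thm:gcv}. Since $F_{\tau,v}(\mathbf{p})=F(\mathbf{p})+\frac{\tau}{2}\|\mathbf{p}-\mathbf{q}^{v}\|^{2}$, we have $\nabla F_{\tau,v}(\mathbf{p})=\nabla F(\mathbf{p})+\tau(\mathbf{p}-\mathbf{q}^{v})$, so for $\mathbf{p}^{1},\mathbf{p}^{2}\in\mathcal{S}$, writing $\mathbf{z}\triangleq\mathbf{p}^{1}-\mathbf{p}^{2}$, the proximal term contributes exactly $\tau\|\mathbf{z}\|^{2}$ and
\[
\mathbf{z}^{T}\!\left(\nabla F_{\tau,v}(\mathbf{p}^{1})-\nabla F_{\tau,v}(\mathbf{p}^{2})\right)=\mathbf{z}^{T}\!\left(\nabla F(\mathbf{p}^{1})-\nabla F(\mathbf{p}^{2})\right)+\tau\|\mathbf{z}\|^{2}.
\]
Hence it suffices to prove $\mathbf{z}^{T}(\nabla F(\mathbf{p}^{1})-\nabla F(\mathbf{p}^{2}))\geq\lambda_{\mathrm{min}}(\mathbf{\Psi}-\mathbf{\Upsilon})\|\mathbf{z}\|^{2}$; adding $\tau\|\mathbf{z}\|^{2}$ then yields the modulus $L_{\mathrm{sc}}=\tau+\lambda_{\mathrm{min}}(\mathbf{\Psi}-\mathbf{\Upsilon})$, which is strictly positive under $\tau>\left|\lambda_{\mathrm{min}}(\mathbf{\Psi}-\mathbf{\Upsilon})\right|$. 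The key structural observation is that $\nabla_{\mathbf{p}_{i}}F=-\nabla_{\mathbf{p}_{i}}R_{i}-\sum_{j\neq i}\nabla_{\mathbf{p}_{i}}R_{j}=\mathbf{f}_{i}(\mathbf{p})+\mathbf{b}_{i}(\mathbf{p})$, i.e.\ $\nabla F=\mathbf{f}+\mathbf{b}$, linking the gradient of the sum rate to the two maps already analyzed.

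Next I would exploit the channel separability of $R_{i,n}$: the $n$th entries of $\mathbf{f}_{i}$ and $\mathbf{b}_{i}$ depend only on the channel-$n$ powers $(p_{l}(n))_{l}$. Using the integral mean-value identity $\nabla F(\mathbf{p}^{1})-\nabla F(\mathbf{p}^{2})=\big(\int_{0}^{1}\nabla^{2}F(\mathbf{p}^{2}+t\mathbf{z})\,dt\big)\mathbf{z}$ and grouping variables by channel, the quadratic form decouples as
\[
\mathbf{z}^{T}\!\left(\nabla F(\mathbf{p}^{1})-\nabla F(\mathbf{p}^{2})\right)=\sum_{n=1}^{N}(\mathbf{z}^{(n)})^{T}\big(\mathbf{J}^{(n)}+\mathbf{K}^{(n)}\big)\mathbf{z}^{(n)},
\]
where $\mathbf{z}^{(n)}\triangleq(z_{i}(n))_{i=0}^{M}$ and $\mathbf{J}^{(n)},\mathbf{K}^{(n)}$ are the averaged channel-$n$ Jacobian blocks of $\mathbf{f}$ and $\mathbf{b}$. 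The curvature computations behind Proposition~\ref{pro:unq} show that, uniformly over $\mathcal{S}$, the diagonal of $\mathbf{J}^{(n)}$ is bounded below by $[\mathbf{\Psi}]_{ii}$ and its off-diagonal entries are bounded in magnitude by $\left|[\mathbf{\Psi}]_{ij}\right|$, so that $(\mathbf{z}^{(n)})^{T}\mathbf{J}^{(n)}\mathbf{z}^{(n)}\geq(\bar{\mathbf{z}}^{(n)})^{T}\mathbf{\Psi}\,\bar{\mathbf{z}}^{(n)}$, with $\bar{\mathbf{z}}^{(n)}\triangleq(|z_{i}(n)|)_{i=0}^{M}$, because $\mathbf{\Psi}$ has positive diagonal and nonpositive off-diagonal entries. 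Likewise, the bounds used to define $\mathbf{\Upsilon}$ in (\ref{go:mx}) give $\left|[\mathbf{K}^{(n)}]_{ij}\right|\leq[\mathbf{\Upsilon}]_{ij}$ for all $i,j$, whence $(\mathbf{z}^{(n)})^{T}\mathbf{K}^{(n)}\mathbf{z}^{(n)}\geq-(\bar{\mathbf{z}}^{(n)})^{T}\mathbf{\Upsilon}\,\bar{\mathbf{z}}^{(n)}$.

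Combining the two estimates yields $(\mathbf{z}^{(n)})^{T}(\mathbf{J}^{(n)}+\mathbf{K}^{(n)})\mathbf{z}^{(n)}\geq(\bar{\mathbf{z}}^{(n)})^{T}(\mathbf{\Psi}-\mathbf{\Upsilon})\bar{\mathbf{z}}^{(n)}$. The finishing move relies on the fact that, although $\mathbf{\Psi}$ and $\mathbf{\Upsilon}$ are individually nonsymmetric, their cross terms recombine so that $\mathbf{\Psi}-\mathbf{\Upsilon}$ is symmetric; hence $\lambda_{\mathrm{min}}(\mathbf{\Psi}-\mathbf{\Upsilon})$ is real and $(\bar{\mathbf{z}}^{(n)})^{T}(\mathbf{\Psi}-\mathbf{\Upsilon})\bar{\mathbf{z}}^{(n)}\geq\lambda_{\mathrm{min}}(\mathbf{\Psi}-\mathbf{\Upsilon})\|\bar{\mathbf{z}}^{(n)}\|^{2}$. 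Summing over $n$ and using $\sum_{n}\|\bar{\mathbf{z}}^{(n)}\|^{2}=\|\mathbf{z}\|^{2}$ completes the estimate for $\nabla F$, and adding $\tau\|\mathbf{z}\|^{2}$ gives strong convexity with modulus $L_{\mathrm{sc}}$. I expect the main obstacle to be the entrywise Jacobian bounds in the second paragraph, namely verifying that the worst-case curvature of $\mathbf{f}$ and $\mathbf{b}$ over the whole feasible set $\mathcal{S}$ is captured by the $\min_{n}/\max_{n}$ entries (evaluated at $\mathbf{p}^{\mathrm{max}}$) appearing in (\ref{vr:ms}) and (\ref{go:mx}); once these are in hand, which is essentially the content of Proposition~\ref{pro:unq} and the derivation of $\mathbf{\Upsilon}$ in Theorem~\ref{thm:gcv}, the symmetric recombination and the eigenvalue bound are routine.
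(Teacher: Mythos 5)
Your proposal is correct and arrives at the paper's constant $L_{\mathrm{sc}}=\tau+\lambda_{\mathrm{min}}\left(\mathbf{\Psi}-\mathbf{\Upsilon}\right)$, but through a different decomposition than the paper uses. The paper groups variables by \emph{player}: with $\mathbf{d}_{i}=\mathbf{p}_{i}^{1}-\mathbf{p}_{i}^{2}$ and $s_{i}=\left\Vert \mathbf{d}_{i}\right\Vert $, it applies the mean value theorem blockwise, bounds the diagonal $N\times N$ Jacobian blocks of $\mathbf{f}_{i}$ and $\mathbf{b}_{i}$ via their extreme eigenvalues and the off-diagonal blocks via spectral norms --- reusing verbatim the inequalities (\ref{sm:ff}), (\ref{gv:zb}) and (\ref{gv:fb}) already established for Proposition \ref{pro:unq} and Theorem \ref{thm:gcv} --- and ends with $\mathbf{s}^{T}\left(\tau\mathbf{I}+\mathbf{\Psi}-\mathbf{\Upsilon}\right)\mathbf{s}\geq\lambda_{\mathrm{min}}\left(\tau\mathbf{I}+\mathbf{\Psi}-\mathbf{\Upsilon}\right)\left\Vert \mathbf{s}\right\Vert ^{2}$. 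You instead group by \emph{channel}, exploiting that $R_{i,n}$ depends only on the channel-$n$ powers so that $\nabla^{2}F$ is block diagonal across channels, and then use entrywise bounds together with the sign structure of $\mathbf{\Psi}$ (positive diagonal, nonpositive off-diagonal) and the nonnegativity of $\mathbf{\Upsilon}$, followed by a Rayleigh-quotient estimate per channel; this avoids the Cauchy--Schwarz compression of each player's channel vector into a scalar norm, makes the role of channel separability explicit, and would yield sharper channel-dependent constants if one retained per-channel matrices instead of the $\min_{n}/\max_{n}$ entries of (\ref{vr:ms}) and (\ref{go:mx}). Both routes rest on the same two ingredients: the symmetry of $\mathbf{\Psi}-\mathbf{\Upsilon}$ (the paper only asserts this in a footnote; your ``cross terms recombine'' remark is the actual reason, since $\left[\mathbf{\Psi}\right]_{ij}$ matches the first term of $\left[\mathbf{\Upsilon}\right]_{ji}$ and the $l\neq i,j$ sums are symmetric in $i,j$), and the worst-case curvature bounds defining $\mathbf{\Psi}$ and $\mathbf{\Upsilon}$, which you defer --- exactly as the paper does with its ``it can be verified'' steps --- to the computations behind Proposition \ref{pro:unq} and Theorem \ref{thm:gcv}. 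So the skeleton (reduce to $\nabla F=\mathbf{f}+\mathbf{b}$, dominate the two monotonicity gaps by $\mathbf{\Psi}$ and $-\mathbf{\Upsilon}$, finish with $\lambda_{\mathrm{min}}$) is shared, while the paper's player-wise aggregation is shorter given the appendix machinery and your channel-wise aggregation is tighter in structure and more self-contained.
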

\begin{proof}
See Appendix \ref{subsec:scv}.
\end{proof}
According to Lemma \ref{lem:scv}, $\mathcal{P}_{\mathbf{q}^{v}}$
actually becomes a convex problem with a unique solution if $\tau$
is chosen large enough, i.e., $\tau\geq\left|\lambda_{\mathrm{min}}\left(\mathbf{\Psi}-\mathbf{\Upsilon}\right)\right|$.
Therefore, if the inner iteration converges to $\mathbf{p}_{\mathbf{q}^{v}}^{\star}$,
i.e., $\mathbf{p}_{\mathbf{q}^{v}}^{\star}=\mathrm{VE}(\mathbf{p}^{\star},\mathbf{q}^{v})$,
$\mathbf{p}_{\mathbf{q}^{v}}^{\star}$ is the optimal solution of
$\mathcal{P}_{\mathbf{q}^{v}}$. The outer iteration $\mathbf{q}^{v+1}=(1-\kappa_{v})\mathbf{q}^{v}+\kappa_{v}\mathbf{p}_{\mathbf{q}^{v}}^{\star}$
is then a fixed point iteration of the optimal solution of $\mathcal{P}_{\mathbf{q}^{v}}$.
To prove its convergence, we need the following intermediate result.
\begin{lem}
\label{lem:lip}$\nabla F(\mathbf{p})$ is Lipschitz continuous on
$\mathcal{S}$, i.e., $\left\Vert \nabla F(\mathbf{p}^{1})-\nabla F(\mathbf{p}^{2})\right\Vert \leq L_{\mathrm{lip}}\left\Vert \mathbf{p}^{1}-\mathbf{p}^{2}\right\Vert $,
where $L_{\mathrm{lip}}\triangleq\sigma_{\max}\left(\mathbf{\Xi}\right)+\sigma_{\max}\left(\mathbf{\Upsilon}\right)$
with $\mathbf{\Xi}\in\mathbb{R}^{(M+1)\times(M+1)}$ defined as 
\[
\left[\mathbf{\Xi}\right]_{ij}\triangleq\begin{cases}
\max_{n}h_{ii}^{2}(n)\sigma_{i}^{-2}(n) & i=j\\
\max_{n}h_{ii}(n)h_{ji}(n)\sigma_{i}^{-2}(n) & i\neq j.
\end{cases}
\]
\end{lem}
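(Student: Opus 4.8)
The plan is to split the gradient of the (negative) sum rate into a \emph{self} part and a \emph{cross} part and to bound the Lipschitz modulus of each separately. Since $\nabla_{\mathbf{p}_i}F(\mathbf{p})=-\nabla_{\mathbf{p}_i}R_i(\mathbf{p})-\sum_{j\neq i}\nabla_{\mathbf{p}_i}R_j(\mathbf{p})=\mathbf{f}_i(\mathbf{p})+\mathbf{b}_i(\mathbf{p})$, with $\mathbf{f}_i$ given in (\ref{vr:fi}) and $\mathbf{b}_i$ as defined after (\ref{go:gp}), we have $\nabla F=\mathbf{f}+\mathbf{b}$. By the triangle inequality it then suffices to establish $\|\mathbf{f}(\mathbf{p}^1)-\mathbf{f}(\mathbf{p}^2)\|\leq\sigma_{\max}(\mathbf{\Xi})\|\mathbf{p}^1-\mathbf{p}^2\|$ and $\|\mathbf{b}(\mathbf{p}^1)-\mathbf{b}(\mathbf{p}^2)\|\leq\sigma_{\max}(\mathbf{\Upsilon})\|\mathbf{p}^1-\mathbf{p}^2\|$ separately; adding the two moduli yields $L_{\mathrm{lip}}=\sigma_{\max}(\mathbf{\Xi})+\sigma_{\max}(\mathbf{\Upsilon})$.

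The device that handles both terms is a per-channel majorization argument, which I would isolate first. The crucial structural fact is that the channel-$n$ component of $\mathbf{f}$ and of $\mathbf{b}$ depends only on the channel-$n$ powers $(p_l(n))_{l=0}^M$. Fixing $\mathbf{p}^1,\mathbf{p}^2\in\mathcal{S}$, for each channel $n$ I collect the increments into the nonnegative vectors $\mathbf{e}^{(n)}\triangleq(|p_l^1(n)-p_l^2(n)|)_{l=0}^M$ and, for the operator under consideration (either $\mathbf{f}$ or $\mathbf{b}$), the vector $\mathbf{d}^{(n)}$ whose $i$th entry is the magnitude of the difference of that operator's channel-$n$ component at $\mathbf{p}^1$ and $\mathbf{p}^2$. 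If a nonnegative matrix $\mathbf{M}$ satisfies $\mathbf{d}^{(n)}\leq\mathbf{M}\,\mathbf{e}^{(n)}$ entrywise for every $n$, then squaring (which preserves inequalities between nonnegative numbers) and using $\|\mathbf{M}\mathbf{e}^{(n)}\|\leq\sigma_{\max}(\mathbf{M})\|\mathbf{e}^{(n)}\|$ gives, after summing over $n$, the bound $\sum_n\|\mathbf{d}^{(n)}\|^2\leq\sigma_{\max}^2(\mathbf{M})\sum_n\|\mathbf{e}^{(n)}\|^2=\sigma_{\max}^2(\mathbf{M})\|\mathbf{p}^1-\mathbf{p}^2\|^2$, i.e. the desired Lipschitz estimate with modulus $\sigma_{\max}(\mathbf{M})$. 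The whole proof thus reduces to exhibiting suitable nonnegative majorants $\mathbf{M}=\mathbf{\Xi}$ for $\mathbf{f}$ and $\mathbf{M}=\mathbf{\Upsilon}$ for $\mathbf{b}$.

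For $\mathbf{f}$ this is immediate: from $f_{i,n}=-h_{ii}(n)/I_{i,n}(\mathbf{p})$ one gets $\partial f_{i,n}/\partial p_m(n)=h_{ii}(n)h_{mi}(n)/I_{i,n}^2(\mathbf{p})$, and since all powers are nonnegative, $I_{i,n}(\mathbf{p})\geq\sigma_i(n)$, whence $|\partial f_{i,n}/\partial p_m(n)|\leq h_{ii}(n)h_{mi}(n)/\sigma_i^2(n)\leq[\mathbf{\Xi}]_{im}$. The mean value theorem along the segment joining $\mathbf{p}^1$ and $\mathbf{p}^2$ (which stays in the convex set $\mathcal{S}$) then yields $\mathbf{d}^{(n)}\leq\mathbf{\Xi}\mathbf{e}^{(n)}$, closing this half. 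The substantive work is the $\mathbf{b}$ part. Writing $b_{i,n}=\sum_{j\neq i}\omega_{ij}(n)$, I would compute the two types of partial derivatives: the own-power derivative $\partial\omega_{ij}(n)/\partial p_j(n)=h_{ij}(n)h_{jj}(n)/I_{j,n}^2(\mathbf{p})$, and, for $m\neq j$, the cross derivative $\partial\omega_{ij}(n)/\partial p_m(n)=-h_{ij}(n)h_{mj}(n)h_{jj}(n)p_j(n)\,(I_{j,n}(\mathbf{p}_{-j})+I_{j,n}(\mathbf{p}))/(I_{j,n}^2(\mathbf{p}_{-j})I_{j,n}^2(\mathbf{p}))$. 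Assembling $\partial b_{i,n}/\partial p_m(n)$ as the $j=m$ own-power contribution plus the $j\neq m$ cross contributions, and then majorizing every factor by its extreme value on the compact set $\mathcal{S}^{\mathrm{pow}}$ (namely $p_l(n)\leq p_{l,n}^{\mathrm{max}}$, $I_{j,n}\geq\sigma_j(n)$, and $I_{j,n}\leq I_{j,n}(\mathbf{p}^{\mathrm{max}})$), produces the diagonal and off-diagonal entries of $\mathbf{\Upsilon}$ in (\ref{go:mx}); the mean value theorem again gives $\mathbf{d}^{(n)}\leq\mathbf{\Upsilon}\mathbf{e}^{(n)}$.

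The hard part will be the bookkeeping in the $\mathbf{b}$ term, not any conceptual difficulty. One must correctly split $\partial b_{i,n}/\partial p_m(n)$ into its own-power and cross pieces and track which channel gains multiply which, and the squared interference-plus-noise factors $(I_{j,n}(\mathbf{p}_{-j})+I_{j,n}(\mathbf{p}))^2$ appearing in $\mathbf{\Upsilon}$ come from a deliberately loose grouping: in the cross terms one uses $h_{jj}(n)p_j(n)=I_{j,n}(\mathbf{p})-I_{j,n}(\mathbf{p}_{-j})\leq I_{j,n}(\mathbf{p}_{-j})+I_{j,n}(\mathbf{p})$ to absorb a power factor into the interference sum before lower-bounding the denominator by $\sigma_j^4(n)$. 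Throughout, the convexity of $\mathcal{S}$ (and of $\mathcal{S}^{\mathrm{pow}}$) is what legitimizes both the segment form of the mean value theorem and the replacement of the suprema of the partials by their worst-case values over $\mathcal{S}^{\mathrm{pow}}\supseteq\mathcal{S}$, making all per-channel majorants uniform and independent of $\mathbf{p}^1$ and $\mathbf{p}^2$.
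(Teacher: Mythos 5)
Your route is in substance the paper's own: both arguments reduce the Lipschitz claim to uniform bounds on the second-order partials, split into the self part $\mathbf{f}$ (majorized by $\mathbf{\Xi}$) and the cross part $\mathbf{b}$ (majorized by $\mathbf{\Upsilon}$), and then pass to spectral norms and add. The paper applies the vector mean value inequality first, $\left\Vert \nabla F(\mathbf{p}^{1})-\nabla F(\mathbf{p}^{2})\right\Vert \leq\left\Vert \nabla_{\mathbf{p}}^{2}F(\mathbf{p}_{\theta})\right\Vert _{2}\left\Vert \mathbf{p}^{1}-\mathbf{p}^{2}\right\Vert $, and then bounds the Hessian blockwise via $\sigma_{\max}\left(\left[\mathbf{A}_{ij}\right]_{i,j}\right)\leq\sigma_{\max}\left(\left[\sigma_{\max}(\mathbf{A}_{ij})\right]_{i,j}\right)$ together with $\sigma_{\max}\left(\nabla_{\mathbf{p}_{j}}\mathbf{f}_{i}\right)\leq\left[\mathbf{\Xi}\right]_{ij}$ and $\sigma_{\max}\left(\nabla_{\mathbf{p}_{j}}\mathbf{b}_{i}\right)\leq\left[\mathbf{\Upsilon}\right]_{ij}$; you instead apply the triangle inequality at the gradient level and run a per-channel scalar mean value theorem plus nonnegative majorization $\mathbf{d}^{(n)}\leq\mathbf{M}\mathbf{e}^{(n)}$. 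These are interchangeable devices: your per-channel argument is valid precisely because every block $\nabla_{\mathbf{p}_{j}}\mathbf{f}_{i}$, $\nabla_{\mathbf{p}_{j}}\mathbf{b}_{i}$ is diagonal (channel-$n$ components depend only on channel-$n$ powers), which is also what makes the paper's block-norm step work. Your derivative computations for $f_{i,n}$ and $\omega_{ij}(n)$ are correct, and the $\mathbf{\Xi}$ half of the proof is complete.

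The genuine problem is your claim that the bookkeeping for $\mathbf{b}$ ``produces the diagonal and off-diagonal entries of $\mathbf{\Upsilon}$ in (\ref{go:mx}).'' For the off-diagonal entries it does: the own-power term gives $\max_{n}h_{ij}(n)h_{jj}(n)\sigma_{j}^{-2}(n)$ and the absorption $h_{ll}(n)p_{l}(n)\leq I_{l,n}(\mathbf{p}_{-l})+I_{l,n}(\mathbf{p})$ gives the squared-interference cross terms. But on the diagonal ($m=i$), where every contribution is a cross term, that same absorption eliminates the power factor and yields
\[
\max_{n}\sum_{l\neq i}h_{il}^{2}(n)\frac{\left(I_{l,n}(\mathbf{p}_{-l}^{\mathrm{max}})+I_{l,n}(\mathbf{p}^{\mathrm{max}})\right)^{2}}{\sigma_{l}^{4}(n)},
\]
whereas $\left[\mathbf{\Upsilon}\right]_{ii}$ in (\ref{go:mx}) carries the additional factor $h_{ll}(n)p_{l,n}^{\mathrm{max}}$ inside the sum. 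Since $h_{ll}(n)p_{l,n}^{\mathrm{max}}$ is a received power and is not bounded below by one, neither matrix dominates the other entrywise, so you cannot pass from $\sigma_{\max}$ of your majorant to $\sigma_{\max}(\mathbf{\Upsilon})$: as written, your argument proves Lipschitz continuity of $\nabla F$ with an explicit constant, but not with the stated $L_{\mathrm{lip}}$. To be fair, the mismatch originates in the paper itself---recovering the printed $\left[\mathbf{\Upsilon}\right]_{ii}$ would require simultaneously keeping $h_{ll}(n)p_{l,n}^{\mathrm{max}}$ and squaring the interference sum, which rests on the unit-dependent step $I\leq I^{2}$, so your diagonal is the one the clean majorization actually supports. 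Still, you must either state and use your own matrix in place of $\mathbf{\Upsilon}$ throughout, or explicitly reconcile it with (\ref{go:mx}); the claim of exact agreement is false as it stands.
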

\begin{proof}
See Appendix \ref{subsec:lip}.
\end{proof}
Then, the convergence of Algorithm 5 is provided below.
\begin{thm}
\label{thm:pgv}The sequence $\{\mathbf{q}^{v+1}\}_{v=0}^{\infty}$
generated by Algorithm 5 converges to a stationary point of $\mathcal{P}$,
if the following conditions are satisfied: 1) $\tau\geq\max\left\{ \left|\lambda_{\mathrm{min}}\left(\mathbf{\Psi}-\mathbf{\Upsilon}\right)\right|,\tau_{\mathbf{\Psi}}\right\} $,
2) $\kappa_{v}\in(0,1]$, 3) $\kappa_{v}<\min\left\{ \frac{2(\tau+\lambda_{\mathrm{min}}\left(\mathbf{\Psi}-\mathbf{\Upsilon}\right))}{L_{\mathrm{lip}}},\frac{\tau+\lambda_{\mathrm{min}}\left(\mathbf{\Psi}-\mathbf{\Upsilon}\right)}{2\tau+\lambda_{\mathrm{min}}\left(\mathbf{\Psi}-\mathbf{\Upsilon}\right)}\right\} $,
4) $\sum_{v}\kappa_{v}=\infty$.
\end{thm}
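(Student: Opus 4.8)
The plan is to run the argument on the two nested levels of Algorithm 5. On the inner level I would first combine Lemma \ref{lem:scv} with Proposition \ref{pro:cpu}: condition 1) gives $\tau>\left|\lambda_{\mathrm{min}}\left(\mathbf{\Psi}-\mathbf{\Upsilon}\right)\right|$, so by Lemma \ref{lem:scv} the objective $F_{\tau,v}$ of $\mathcal{P}_{\mathbf{q}^{v}}$ is strongly convex on $\mathcal{S}$ and $\mathcal{P}_{\mathbf{q}^{v}}$ has a \emph{unique} minimizer $\mathbf{p}_{\mathbf{q}^{v}}^{\star}$, which is its only stationary point; Proposition \ref{pro:cpu} (whose hypothesis is exactly condition 1)) then guarantees that the inner iteration $\mathbf{p}^{u+1}=\mathrm{VE}(\mathbf{p}^{u},\mathbf{q}^{v})$ converges to $\mathbf{p}_{\mathbf{q}^{v}}^{\star}$. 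Consequently, after inner-loop termination the outer update in Step 6 becomes the relaxed proximal iteration $\mathbf{q}^{v+1}=(1-\kappa_{v})\mathbf{q}^{v}+\kappa_{v}\mathbf{p}_{\mathbf{q}^{v}}^{\star}$, i.e.\ a damped fixed-point iteration of the proximal map $T(\mathbf{q})\triangleq\arg\min_{\mathbf{p}\in\mathcal{S}}\{F(\mathbf{p})+\tfrac{\tau}{2}\|\mathbf{p}-\mathbf{q}\|^{2}\}$ associated with minimizing the (nonconvex) $F=-\sum_{i}R_{i}$ over $\mathcal{S}$. Since $\mathcal{S}$ is convex and $\mathbf{q}^{0}\in\mathcal{S}$, every $\mathbf{q}^{v}$ stays in $\mathcal{S}$.

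The core of the proof is a sufficient-decrease estimate for the outer loop. First I would establish a descent-direction inequality for $\mathbf{d}^{v}\triangleq\mathbf{p}_{\mathbf{q}^{v}}^{\star}-\mathbf{q}^{v}$: exploiting that the gradient of the proximal objective at $\mathbf{q}^{v}$ equals $\nabla F(\mathbf{q}^{v})$ (the proximal term contributes nothing there), the variational optimality of $\mathbf{p}_{\mathbf{q}^{v}}^{\star}$ tested at $\mathbf{q}^{v}$, and the strong monotonicity with modulus $L_{\mathrm{sc}}=\tau+\lambda_{\mathrm{min}}\left(\mathbf{\Psi}-\mathbf{\Upsilon}\right)$ from Lemma \ref{lem:scv}, one obtains
\[
\nabla F(\mathbf{q}^{v})^{T}\mathbf{d}^{v}\leq-L_{\mathrm{sc}}\|\mathbf{d}^{v}\|^{2}.
\]
Inserting this into the descent lemma for the $L_{\mathrm{lip}}$-smooth $F$ (Lemma \ref{lem:lip}) together with $\mathbf{q}^{v+1}-\mathbf{q}^{v}=\kappa_{v}\mathbf{d}^{v}$ yields
\[
F(\mathbf{q}^{v+1})\leq F(\mathbf{q}^{v})-\kappa_{v}\Bigl(L_{\mathrm{sc}}-\tfrac{L_{\mathrm{lip}}}{2}\kappa_{v}\Bigr)\|\mathbf{d}^{v}\|^{2}.
\]
The first bound in condition 3), namely $\kappa_{v}<2L_{\mathrm{sc}}/L_{\mathrm{lip}}$, keeps the bracket positive and bounded away from zero, so $\{F(\mathbf{q}^{v})\}$ is nonincreasing; since $F$ is continuous on the compact set $\mathcal{S}$ it is bounded below, and telescoping gives $\sum_{v}\kappa_{v}\|\mathbf{d}^{v}\|^{2}<\infty$.

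I would then close the argument with the remaining hypotheses. Because $\sum_{v}\kappa_{v}=\infty$ (condition 4)), the above summability forces $\liminf_{v}\|\mathbf{d}^{v}\|=0$. Extracting a subsequence along which $\mathbf{d}^{v}\to\mathbf{0}$ and, by compactness of $\mathcal{S}$, $\mathbf{q}^{v}\to\mathbf{q}^{\infty}$, the continuity of the solution map $T$ — which the same variational/strong-monotonicity estimates show to be Lipschitz with constant $\tau/L_{\mathrm{sc}}$ — gives $\mathbf{q}^{\infty}=T(\mathbf{q}^{\infty})$, i.e.\ $\mathbf{p}_{\mathbf{q}^{\infty}}^{\star}=\mathbf{q}^{\infty}$. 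At such a point the proximal penalty and its gradient vanish, so $\mathrm{VE}(\mathbf{q}^{\infty},\mathbf{q}^{\infty})$ coincides with the VE of $\mathcal{G}_{\mathbf{p}^{u}}$ evaluated at $\mathbf{q}^{\infty}$; hence $\mathbf{q}^{\infty}$ is a fixed point of that VE map and Proposition \ref{pro:stp} identifies it as a stationary point of $\mathcal{P}$. The second bound in condition 3), $\kappa_{v}<L_{\mathrm{sc}}/(\tau+L_{\mathrm{sc}})$, I would invoke to upgrade subsequential convergence to convergence of the whole sequence $\{\mathbf{q}^{v}\}$, via a Fej\'{e}r-type argument controlling the distance of the iterates to the set of fixed points of $T$ in this damped regime.

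The step I expect to be the main obstacle is exactly this last one. Because $F$ is nonconvex, $T$ need not be nonexpansive — its modulus $\tau/L_{\mathrm{sc}}$ exceeds one precisely when $\lambda_{\mathrm{min}}\left(\mathbf{\Psi}-\mathbf{\Upsilon}\right)<0$ — so neither the monotone decrease of $F$ nor a naive Banach contraction suffices on its own to single out a limit. The descent estimate controls $\|\mathbf{d}^{v}\|$ and thus pins down the stationarity of limit points, while the second step-size bound is what damps the non-nonexpansiveness of $T$ enough to make the iterates Cauchy; verifying that these two mechanisms are simultaneously in force under the single window on $\kappa_{v}$ in condition 3), and that the inner loop may legitimately be treated as exactly solved, is the delicate part of the analysis.
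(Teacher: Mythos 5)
Your proposal is correct and follows essentially the same route as the paper's proof: the descent-direction bound $\nabla F(\mathbf{q}^{v})^{T}\mathbf{d}^{v}\leq-L_{\mathrm{sc}}\|\mathbf{d}^{v}\|^{2}$ and the Lipschitz bound $\tau/L_{\mathrm{sc}}$ on the proximal solution map are exactly the paper's Lemma \ref{lem:pqd}, the sufficient-decrease estimate via the descent lemma with the first step-size bound, the summability argument with $\sum_{v}\kappa_{v}=\infty$ giving $\liminf_{v}\|\mathbf{d}^{v}\|=0$, and the fixed-point-implies-stationarity conclusion all match. The final step you flag as the main obstacle --- upgrading $\liminf$ to $\lim$ and whole-sequence convergence under the second step-size bound $\kappa_{v}<L_{\mathrm{sc}}/(\tau+L_{\mathrm{sc}})$ --- is precisely the point where the paper itself does not give details and instead defers to the cited reference [Razaviyayn14a], so your treatment is at the same level of completeness as the paper's.
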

\begin{proof}
See Appendix \ref{subsec:pgv}.
\end{proof}
Theorem \ref{thm:pgv} indicates that we can always use Algorithm
5 to obtain a stationary point of $\mathcal{P}$ by setting the parameters
$\tau$ and $\kappa_{v}$ properly. Specifically, $\tau$ shall be
chosen large enough, as explicitly quantified in condition 1) in Theorem
\ref{thm:pgv}. Given $\tau$, one can always find a step size $\kappa_{v}$
satisfying conditions 2) to 4). Since the fixed point iterations $\mathbf{p}^{u+1}=\mathrm{VE}(\mathbf{p}^{u},\mathbf{q}^{v})$
and $\mathbf{q}^{v+1}=(1-\kappa_{v})\mathbf{q}^{v}+\kappa_{v}\mathbf{p}^{u}$
can be performed locally at each BS, similar to Algorithm 4, Algorithm
5 also enjoys a decentralized structure. Yet, the BSs have to exchange
their locally obtained information $\omega_{ij}(n)$, as an inevitable
cost of utility maximization of the entire network.

\begin{figure}[tbph]
\centering \includegraphics[width=2.5in]{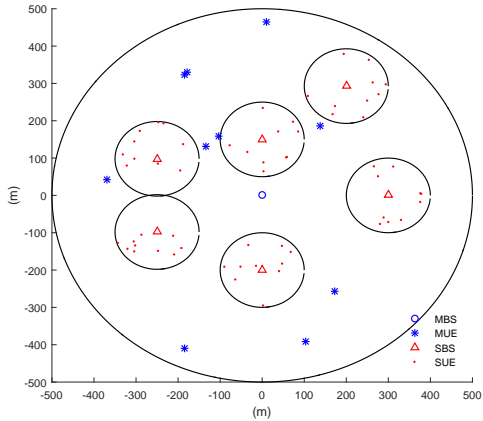} \caption{Topology of the small cell network.}

\label{fig:loc} 
\end{figure}

\section{Numerical Results\label{sec:nr}}

In this section, we evaluate the performance of the proposed GNEP
methods via numerical simulations. The MBS has $N=10$ channels shared
with $M=6$ SBSs, where each channel is allocated to one MUE and one
SUE in each macrocell and small cell, respectively, i.e., there are
$10$ MUEs and $60$ SUEs. The radii of the macrocell and the small
cells are 500m and 100m, respectively. The SBSs and MUEs are randomly
and uniformly located within the macrocell, and the SUEs are randomly
and uniformly located within each small cell, as shown in Fig. \ref{fig:loc}.
According to \cite{3GPP36814}, the path loss is given by $128.1+37.6\log_{10}d$
dB, where $d$ is the distance in kilometers. The small-scale fading
coefficients follow independent and identical zero-mean unit-variance
complex Gaussian distributions. We assume that only the total (sum)
power budgets are limited, which, from \cite{3GPP36814}, are set
to $p_{0}^{\mathrm{sum}}=46$dBm for the MBS and $p_{i}^{\mathrm{sum}}=33$dBm
for SBSs $i=1,...,M$. The noise power is $-114$dBm, corresponding
to a bandwidth of 10MHz and a noise power spectral density of $-174$dBm.

\begin{figure}[h]
\centering \includegraphics[width=2.9in]{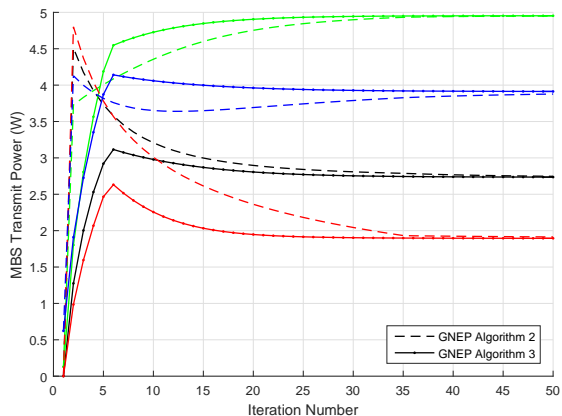} \caption{Convergence process of the transmit powers of the MBS for the two
GNEP algorithms.}
\label{fig:gcm} 
\end{figure}

\begin{figure}[h]
\centering \includegraphics[width=2.9in]{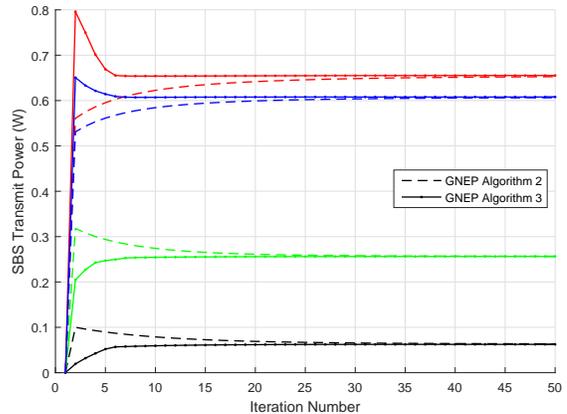} \caption{Convergence process of the transmit powers of one SBS for the two
GNEP algorithms.}
\label{fig:gcs} 
\end{figure}

For clarity, we refer to Algorithms 2 and 3 in Section \ref{sec:dcg}
as the GNEP methods since they aim to achieve a GNE of $\mathcal{G}$,
and to Algorithms 4 and 5 in Section \ref{sec:gog} as the NUM GNEP
methods since they aim to achieve a stationary solution of NUM problem
$\mathcal{P}$. The proposed methods are compared with two NEP-based
distributed methods, namely the NEP and QoS NEP methods. The NEP method
\cite{YuGinisCioffi02} is obtained by removing the global QoS constraints
from $\mathcal{G}$. In the QoS NEP method \cite{ScutariFacchinei14,WangScutariPalomar11},
the global QoS constraints are replaced by the individual QoS constraints
$h_{i0}(n)p_{i}(n)\leq\zeta_{i,n}$ for each SBS $i$ on channel $n$,
where the individual QoS threshold is set to $\zeta_{i,n}=(\tilde{h}_{00}(n)p_{0}(n)-\sigma_{0}(n))/M$
for $i=1,\ldots,M$ with $p_{0}(n)=p_{0}^{\mathrm{sum}}/N$. We also
compare the proposed methods with the interior point method \cite{BoydVan04},
i.e., a centralized optimization method, which can provide a stationary
solution to NUM problem $\mathcal{P}$.

In Figs. \ref{fig:gcm} and \ref{fig:gcs}, we display the convergence
process of the transmit powers for the two GNEP methods, i.e., Algorithms
2 and 3, versus the iteration number with QoS threshold $\gamma_{n}=2$
nats/s/Hz. Due to the large number of users, only the transmit powers
of four channels of the MBS and one SBS are shown. One can observe
that both algorithms converge rapidly to the same power allocation
profile, indicating that they achieve the same GNE. Compared to Algorithm
2, Algorithm 3 converges relatively faster, which is the benefit of
simultaneously updating transmit power and price. On the other hand,
Algorithm 2 enjoys the advantage of less signaling overhead, since
it requires the MBS to broadcast the price less frequently. This corresponds
to the commonly-observed tradeoff between convergence and information
exchange in distributed optimization.

\begin{figure}[h]
\centering \includegraphics[width=2.9in]{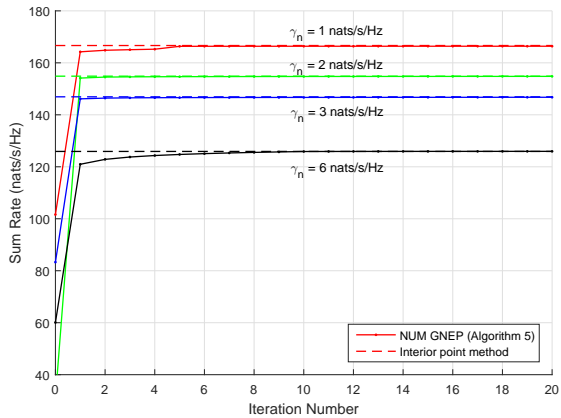} \caption{Convergence process of the sum rate of all BSs for the NUM GNEP and
interior point methods.}
\label{fig:gsv} 
\end{figure}

\begin{figure}[h]
\centering \includegraphics[width=2.9in]{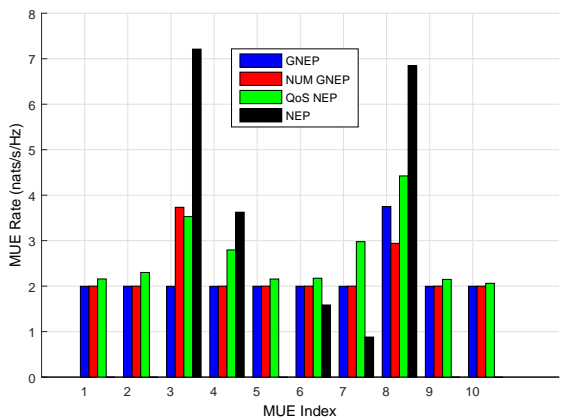} \caption{Rate of each MUE with QoS threshold $\gamma_{n}=2$ nats/s/Hz.}
\label{fig:muq} 
\end{figure}

Fig. \ref{fig:gsv} shows the convergence process of the sum rate
for the NUM GNEP method, i.e., Algorithm 5, versus the iteration number
for different QoS requirements. As indicated in Section \ref{sec:gog},
by properly choosing the algorithm parameters, Algorithm 5 is always
guaranteed to converge to a stationary solution of $\mathcal{P}$.
This is verified in Fig. \ref{fig:gsv}, where Algorithm 5 converges
to the same point as the interior point method. Note that the interior
point method is centralized and requires the collection of the channel
state information of the entire network at a central node, whereas
the NUM GNEP method can be implemented in a decentralized manner with
limited signaling overhead.

In Fig. \ref{fig:muq}, we show the rates of the MUEs generated by
different distributed methods with a QoS threshold of $\gamma_{n}=2$
nats/s/Hz for each MUE. The first observation is that the NEP method
may violate the QoS requirement and even result in zero rate for some
MUEs, i.e., it is not able to protect the macrocell communication.
The second observation is that, upon satifying the QoS requirement,
the GNEP and the NUM GNEP methods tend to meet exactly the QoS threshold,
whereas the QoS NEP method often leads to MUE rates higher than the
QoS threshold. From the system perspective, such redundancy in QoS
satisfaction may come at the expense of a degradation of the performance
of other utilities. e.g., the sum rate, of the BSs.

\begin{figure}[h]
\centering \includegraphics[width=2.9in]{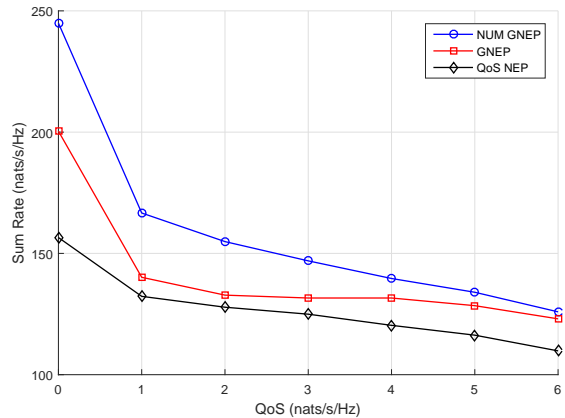} \caption{Sum rate versus QoS threshold.}
\label{fig:srq} 
\end{figure}

\begin{figure}[h]
\centering \includegraphics[width=2.9in]{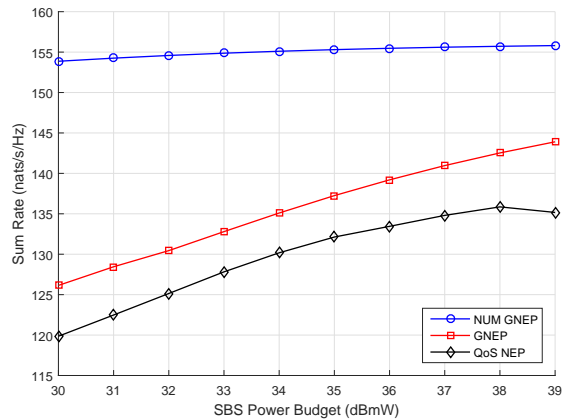} \caption{Sum rate versus SBS power budget.}
\label{fig:srp} 
\end{figure}

To further elaborate on this point, we plot the sum rate versus the
QoS threshold in Fig. \ref{fig:srq} and the sum rate versus the total
power budget of the SBSs with a QoS threshold of $\gamma_{n}=2$ nats/s/Hz
in Fig. \ref{fig:srp}. As can be observed from both figures, the
GNEP method achieves a higher sum rate than the QoS NEP method. This
is because the global QoS constraints allow to control the aggregate
interference at each MUE in a more flexible manner than the individual
QoS constraints and hence provide more degrees of freedom for improving
the network utility.

As expected, the NUM GNEP method, which aims to maximize the sum rate
of all BSs under the QoS constraints, achieves the highest sum rate.
The cost of the gain compared to the GNEP method is a higher signaling
overhead between the BSs as well as a higher complexity. Indeed, Figs.
\ref{fig:srq} and \ref{fig:srp} demonstrate the fundamental tradeoff
between signaling overhead and network performance, i.e., the more
signaling overhead can be afforded, the better the achievable network
performance. The proposed GNEP framework is able to flexibly adjust
this tradeoff by providing different GNEP-based distributed optimization
methods.

\section{Conclusions and Extensions\label{sec:cd}}

We have considered the interference management of a two-tier hierarchical
small cell network and developed a GNEP framework for distributed
optimization of the transmit strategies of the SBSs and the MBS. The
two different network design philosophies, i.e., achieving a GNE while
satisfying global QoS constraints and maximizing the network utility
under global QoS constraints, are unified under the GNEP framework.
We have developed various GNEP-based distributed algorithms, which
differ in the signaling overhead and complexity required to meet the
global QoS constraints at the GNE or to obtain a stationary solution
to the sum-rate maximization problem. The convergence properties of
the proposed algorithms were investigated.

The established GNEP framework can be extended in the following directions.
1) Multiple MBSs: Each MBS can logically view other MBSs as SBSs and
all BSs still play the same GNEP. 2) Global QoS constraints for SUEs:
the SBS, whose SUEs have global QoS requirements, can be logically
viewed as an MBS, which then falls into the case of multiple MBSs.
3) Other utilities: Each MBS or SBS can adopt utilities other than
the information rate as long as the utility is concave in its own
variables. The best response and convergence properties can be derived
using similar steps as in this paper.

The mathematical results derived in this paper guarantee that the
global QoS constraints are satisfied and a stationary solution is
obtained when the algorithms converge. Nevertheless, in practice,
one can terminate the algorithms before convergence (which is equivalent
to using a lower precision in the algorithms) and still obtain a close-to-optimal
performance. This and the aforementioned extensions make the desired
results applicable to a wide range of network scenarios including
dense and hyper-dense small cell networks.

\appendix{}

\subsection{A Brief Introduction to VI Theory\label{subsec:vi}}

We first introduce the basic concepts of VIs and generalized VIs (GVIs).
Specifically, let $\mathcal{X}\in\mathbb{R}^{n}$ and $\mathbf{F}:\mathcal{X}\rightarrow\mathbb{R}^{n}$
be a continuous function. Then, a VI problem, denoted by $\mathrm{VI}(\mathcal{X},\mathbf{F})$,
is to find a vector $\mathbf{x}^{\star}\in\mathcal{X}$ such that
$(\mathbf{x}-\mathbf{x}^{\star})^{T}\mathbf{F}(\mathbf{x}^{\star})\geq0$,
$\forall\mathbf{x}\in\mathcal{X}$. More generally, if $\mathbf{F}(\mathbf{x})$
is a point-to-set map $\mathcal{F}(\mathbf{x})\subseteq\mathbb{R}^{n}$
(also referred to as a set-valued function), then we have a GVI. Specifically,
let $\mathcal{X}\in\mathbb{R}^{n}$ and $\mathcal{F}:\mathcal{X}\rightarrow\mathbb{R}^{n}$
be a point-to-set map. Then, a GVI problem, denoted by $\mathrm{GVI}(\mathcal{X},\mathcal{F})$,
is to find a vector $\mathbf{x}^{\star}\in\mathcal{X}$ such that
there exists a vector $\mathbf{z}^{\star}\in\mathcal{F}(\mathbf{x}^{\star})$
and $(\mathbf{x}-\mathbf{x}^{\star})^{T}\mathbf{z}^{\star}\geq0$,
$\forall\mathbf{x}\in\mathcal{X}$.

In VI theory, a function $\mathbf{F}$ is monotone on $\mathcal{X}$
if for any distinct vectors $\mathbf{x},\mathbf{y}\in\mathcal{X}$,
$(\mathbf{x}-\mathbf{y})^{T}(\mathbf{F}(\mathbf{x})-\mathbf{F}(\mathbf{y}))\geq0,$
and strongly monotone if there exists a constant $c_{\mathrm{sm}}>0$
such that for any distinct vectors $\mathbf{x},\mathbf{y}\in\mathcal{X}$,
$\left(\mathbf{x}-\mathbf{y}\right)^{T}\left(\mathbf{F}(\mathbf{x})-\mathbf{F}(\mathbf{y})\right)\geq c_{\mathrm{sm}}\left\Vert \mathbf{x}-\mathbf{y}\right\Vert ^{2}$.
A function $\mathbf{F}\triangleq\left(\mathbf{F}_{k}\right)_{k=1}^{K}$
is uniformly P on $\mathcal{X}=\prod_{k=1}^{K}\mathcal{X}_{k}$ if
there exists a constant $c_{\mathrm{up}}>0$ such that for any distinct
vectors $\mathbf{x}=(\mathbf{x}_{k})_{k=1}^{K}$ and $\mathbf{y}=(\mathbf{y}_{k})_{k=1}^{K}\in\mathcal{X}$,
$\max_{k=1,\ldots,K}\left(\mathbf{x}_{k}-\mathbf{y}_{k}\right)^{T}\left(\mathbf{F}_{k}(\mathbf{x})-\mathbf{F}_{k}(\mathbf{y})\right)\geq c_{\mathrm{up}}\left\Vert \mathbf{x}-\mathbf{y}\right\Vert ^{2}$.
$\mathrm{VI}(\mathcal{X},\mathbf{F})$ is monotone, strongly monotone,
and uniformly P if $\mathbf{F}$ is monotone, strongly monotone, and
uniformly P on $\mathcal{X}$, respectively. Note that the uniformly
P property and strong monotonicity often imply a unique solution to
a VI (or GVI) \cite{FacchineiPang03}.

Related to the (strong) monotonicity and uniformly P property of VIs
are several special matrix definitions. A matrix $\mathbf{A}\in\mathbb{R}^{n\times n}$
is a Z-matrix if its off-diagonal entries are all non-positive and
a P-matrix if all its principle minors are positive. A Z-matrix that
is also a P-matrix is a K-matrix. These matrices have the following
properties.
\begin{lem}
\label{lem:pmx}(\cite{CottlePangStone92}) A matrix $\mathbf{A}\in\mathbb{R}^{n\times n}$
is a P-matrix if and only if $\mathbf{A}$ does not invert the sign
of any non-zero vector, i.e., if $x_{i}[\mathbf{A}\mathbf{x}]_{i}\leq0$
for $\forall i$, then $\mathbf{x}$ is an all-zero vector.
\end{lem}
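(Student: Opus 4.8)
The plan is to prove the two implications separately, in each case arguing by contraposition through the principal-minor definition of a P-matrix. Throughout, write $\mathbf{A}_{SS}$ for the principal submatrix of $\mathbf{A}$ indexed by a nonempty set $S\subseteq\{1,\ldots,n\}$, and recall that $\mathbf{A}$ is a P-matrix exactly when $\det\mathbf{A}_{SS}>0$ for every such $S$, while ``$\mathbf{A}$ inverts the sign of a nonzero vector'' means there is an $\mathbf{x}\ne\mathbf{0}$ with $x_i[\mathbf{A}\mathbf{x}]_i\le0$ for all $i$. I would first show that the sign condition forces the P-matrix property, by contraposition: suppose $\mathbf{A}$ is not a P-matrix, so $\det\mathbf{A}_{SS}\le0$ for some nonempty $S$, and construct a sign-inverted vector. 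If $\det\mathbf{A}_{SS}=0$, pick a nonzero $\mathbf{z}$ in the kernel of $\mathbf{A}_{SS}$. If $\det\mathbf{A}_{SS}<0$, then, since the determinant is the product of the eigenvalues and the nonreal ones occur in conjugate pairs of positive product, $\mathbf{A}_{SS}$ must have a negative real eigenvalue $\lambda$; take a corresponding eigenvector $\mathbf{z}$. In both cases, extend $\mathbf{z}$ to $\mathbf{x}\in\mathbb{R}^{n}$ by setting the entries outside $S$ to zero. Because those entries vanish, $[\mathbf{A}\mathbf{x}]_i=[\mathbf{A}_{SS}\mathbf{z}]_i$ for $i\in S$, so $x_i[\mathbf{A}\mathbf{x}]_i$ equals $0$ (kernel case) or $\lambda z_i^2\le0$ (eigenvalue case) for $i\in S$, and equals $0$ for $i\notin S$; hence $\mathbf{A}$ inverts the sign of the nonzero vector $\mathbf{x}$.

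For the converse, suppose $\mathbf{A}$ inverts the sign of some $\mathbf{x}\ne\mathbf{0}$; I will show $\mathbf{A}$ is not a P-matrix. Let $S=\{i:x_i\ne0\}\ne\emptyset$ and restrict to the support: since $x_j=0$ off $S$, for $i\in S$ one has $[\mathbf{A}\mathbf{x}]_i=[\mathbf{A}_{SS}\mathbf{x}_S]_i$, whence $(\mathbf{x}_S)_i[\mathbf{A}_{SS}\mathbf{x}_S]_i\le0$ with every entry of $\mathbf{x}_S$ nonzero. Defining $t_i\triangleq-[\mathbf{A}_{SS}\mathbf{x}_S]_i/(\mathbf{x}_S)_i\ge0$ and $\mathbf{T}\triangleq\mathrm{diag}(t_i)$, I obtain $(\mathbf{A}_{SS}+\mathbf{T})\mathbf{x}_S=\mathbf{0}$ with $\mathbf{x}_S\ne\mathbf{0}$, so $\det(\mathbf{A}_{SS}+\mathbf{T})=0$.

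The crux is to contradict this singularity via the multilinearity of the determinant in its columns, which gives the subset expansion
\[\det(\mathbf{A}_{SS}+\mathbf{T})=\sum_{K\subseteq S}\Big(\prod_{i\in K}t_i\Big)\det\big(\mathbf{A}_{(S\setminus K)(S\setminus K)}\big),\]
where the $K=S$ term is $\prod_{i\in S}t_i$ (the empty principal minor being $1$) and the $K=\emptyset$ term is $\det\mathbf{A}_{SS}$. If $\mathbf{A}$ were a P-matrix, every principal minor occurring here would be strictly positive while every coefficient $\prod_{i\in K}t_i$ is nonnegative, so the $K=\emptyset$ term alone makes the right-hand side strictly positive, contradicting $\det(\mathbf{A}_{SS}+\mathbf{T})=0$. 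Hence $\mathbf{A}$ is not a P-matrix, which completes the equivalence.

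The main obstacle is this converse direction. A tempting one-coordinate Schur-complement reduction fails, because the cross terms it produces carry uncontrolled signs and one cannot propagate the inequality $\mathbf{A}_{SS}\mathbf{x}_S\le\mathbf{0}$ down in dimension. The decisive device is instead to encode the sign inversion as a singular \emph{nonnegative}-diagonal perturbation $\mathbf{A}_{SS}+\mathbf{T}$ and to invoke the subset expansion, in which positivity of all principal minors directly contradicts singularity; the remaining bookkeeping, namely verifying the all-plus signs in the expansion and the cofactor identification of the complementary principal minors, is then routine.
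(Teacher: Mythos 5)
Your proof is correct. The paper does not prove this lemma at all --- it is quoted directly from the cited reference (Cottle, Pang, and Stone) --- and your argument (the kernel/negative-real-eigenvalue construction for the ``not P $\Rightarrow$ sign reversal'' direction, and the nonnegative diagonal perturbation $\mathbf{A}_{SS}+\mathbf{T}$ together with the principal-minor expansion of $\det(\mathbf{A}_{SS}+\mathbf{T})$ for the converse) is precisely the standard proof of this classical characterization found in that reference, so you have simply, and correctly, supplied the details the paper delegates to the citation.
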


\begin{lem}
\label{lem:kmx}(\cite{CottlePangStone92}) Let $\mathbf{A}\in\mathbb{R}^{n\times n}$
be a K-matrix and $\mathbf{B}$ a non-negative matrix. Then, $\rho(\mathbf{A}^{-1}\mathbf{B})<1$
if and only if $\mathbf{A}-\mathbf{B}$ is a K-matrix.
\end{lem}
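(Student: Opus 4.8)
The plan is to identify K-matrices with nonsingular M-matrices and exploit the standard equivalence that a Z-matrix is a K-matrix if and only if it has a nonnegative inverse (this follows from Lemma \ref{lem:pmx} together with the characterizations in \cite{CottlePangStone92}). Since $\mathbf{A}$ is a K-matrix it is nonsingular with $\mathbf{A}^{-1}\geq\mathbf{0}$, and because $\mathbf{B}\geq\mathbf{0}$ the matrix $\mathbf{C}\triangleq\mathbf{A}^{-1}\mathbf{B}$ is nonnegative. I would first record two elementary facts: (i) $\mathbf{A}-\mathbf{B}$ is a Z-matrix, since its off-diagonal entries equal $[\mathbf{A}]_{ij}-[\mathbf{B}]_{ij}\leq[\mathbf{A}]_{ij}\leq0$ for $i\neq j$; and (ii) the factorization $\mathbf{A}-\mathbf{B}=\mathbf{A}(\mathbf{I}-\mathbf{C})$, which ties the invertibility and sign structure of $\mathbf{A}-\mathbf{B}$ to the spectral radius $\rho(\mathbf{C})$. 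With these in hand, the claim reduces to showing that $\mathbf{A}-\mathbf{B}$ has a nonnegative inverse if and only if $\rho(\mathbf{C})<1$.

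To show that $\rho(\mathbf{C})<1$ implies $\mathbf{A}-\mathbf{B}$ is a K-matrix, I would use that, since $\mathbf{C}\geq\mathbf{0}$, the Neumann series gives $(\mathbf{I}-\mathbf{C})^{-1}=\sum_{k=0}^{\infty}\mathbf{C}^{k}\geq\mathbf{0}$, so $\mathbf{I}-\mathbf{C}$ is invertible. Then the factorization yields $(\mathbf{A}-\mathbf{B})^{-1}=(\mathbf{I}-\mathbf{C})^{-1}\mathbf{A}^{-1}\geq\mathbf{0}$ as a product of two nonnegative matrices. A Z-matrix with a nonnegative inverse is a K-matrix, which finishes this direction.

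For the converse, I would start from the hypothesis that $\mathbf{A}-\mathbf{B}$ is a K-matrix and use the characterization that then there is a strictly positive vector $\mathbf{u}>\mathbf{0}$ with $\mathbf{w}\triangleq(\mathbf{A}-\mathbf{B})\mathbf{u}>\mathbf{0}$. Rewriting $\mathbf{A}\mathbf{u}=\mathbf{B}\mathbf{u}+\mathbf{w}$ and multiplying on the left by $\mathbf{A}^{-1}\geq\mathbf{0}$ yields $\mathbf{u}=\mathbf{C}\mathbf{u}+\mathbf{A}^{-1}\mathbf{w}$. Because $\mathbf{A}^{-1}$ is nonnegative with no zero row (as $\mathbf{A}^{-1}\mathbf{A}=\mathbf{I}$) and $\mathbf{w}>\mathbf{0}$, the vector $\mathbf{A}^{-1}\mathbf{w}$ is strictly positive, hence $\mathbf{C}\mathbf{u}<\mathbf{u}$. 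Finally I would invoke Perron--Frobenius: taking a nonzero left eigenvector $\mathbf{z}\geq\mathbf{0}$ of $\mathbf{C}$ with $\mathbf{z}^{T}\mathbf{C}=\rho(\mathbf{C})\mathbf{z}^{T}$ and pairing it with $\mathbf{C}\mathbf{u}<\mathbf{u}$ gives $\rho(\mathbf{C})\,\mathbf{z}^{T}\mathbf{u}=\mathbf{z}^{T}\mathbf{C}\mathbf{u}<\mathbf{z}^{T}\mathbf{u}$; since $\mathbf{z}^{T}\mathbf{u}>0$, this forces $\rho(\mathbf{C})<1$.

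The main obstacle is securing the strict inequality $\mathbf{C}\mathbf{u}<\mathbf{u}$ rather than merely $\mathbf{C}\mathbf{u}\leq\mathbf{u}$ in the converse direction: a non-strict inequality only yields $\rho(\mathbf{C})\leq1$, which is too weak. This is precisely why I would carry the strictly positive residual $\mathbf{w}>\mathbf{0}$ through the argument and verify that $\mathbf{A}^{-1}\mathbf{w}>\mathbf{0}$, using that every row of the nonnegative matrix $\mathbf{A}^{-1}$ is nonzero. The remaining steps are routine consequences of the Neumann series and Perron--Frobenius theory.
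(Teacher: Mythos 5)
Your proof is correct, but note that the paper itself offers no argument for this lemma at all: it is stated as a quoted result from \cite{CottlePangStone92} (it is the classical splitting/comparison theorem for nonsingular M-matrices, i.e.\ K-matrices), and the paper's ``proof'' is the citation. What you have done is supply a complete, self-contained proof along the standard textbook route: identify K-matrices with inverse-positive Z-matrices, write $\mathbf{A}-\mathbf{B}=\mathbf{A}(\mathbf{I}-\mathbf{C})$ with $\mathbf{C}=\mathbf{A}^{-1}\mathbf{B}\geq\mathbf{0}$, get one direction from the Neumann series $(\mathbf{I}-\mathbf{C})^{-1}=\sum_{k\geq0}\mathbf{C}^{k}\geq\mathbf{0}$, and get the converse from the semipositivity characterization plus a Perron--Frobenius left eigenvector. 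All the delicate points are handled properly: you correctly reduce ``Z-matrix $+$ nonnegative inverse $\Rightarrow$ K-matrix'' to the standard equivalence, and, crucially, you secure the \emph{strict} inequality $\mathbf{C}\mathbf{u}<\mathbf{u}$ (via $\mathbf{A}^{-1}\mathbf{w}>\mathbf{0}$, using that $\mathbf{A}^{-1}\geq\mathbf{0}$ can have no zero row since $\mathbf{A}^{-1}\mathbf{A}=\mathbf{I}$), without which one only gets $\rho(\mathbf{C})\leq1$. The only soft spot is attributing the inverse-positivity equivalence partly to Lemma \ref{lem:pmx}; the sign non-reversal property of P-matrices alone does not yield $\mathbf{A}^{-1}\geq\mathbf{0}$ for Z-matrices, and you genuinely need the M-matrix characterizations from \cite{CottlePangStone92} there, which you also cite. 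What your approach buys is self-containedness (modulo those standard characterizations and Perron--Frobenius); what the paper's approach buys is brevity, deferring a known linear-algebra fact to the authoritative reference.
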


\subsection{Proof of Proposition \ref{pro:unq}\label{subsec:unq}}

The existence of a solution of $\mathrm{VI}(\mathcal{S},\mathbf{f})$
follows directly from \cite[Corollary 2.2.5]{FacchineiPang03}. Next,
we show that $\mathbf{f}(\mathbf{p})$ is uniformly P on $\mathcal{S}$
if $\mathbf{\Psi}$ is a P-matrix, which leads to a unique solution.

Consider two distinct vectors $\mathbf{p}^{1}\triangleq(\mathbf{p}_{i}^{1})_{i=0}^{M},\mathbf{p}^{2}\triangleq(\mathbf{p}_{i}^{2})_{i=0}^{M}\in\mathcal{\mathcal{S}}$
and define $z_{i}(\theta)$ with $\theta\in[0,1]$ 
\begin{equation}
z_{i}(\theta)=\left(\mathbf{p}_{i}^{1}-\mathbf{p}_{i}^{2}\right)^{T}\mathbf{f}_{i}\left(\theta\mathbf{p}^{1}+(1-\theta)\mathbf{p}^{2}\right).\label{sm:fq}
\end{equation}
From the mean value theorem, the derivative of $z_{i}(\theta)$ with
respect to some $\theta\in[0,1]$ is given by 
\begin{equation}
z_{i}^{\prime}(\theta)=z_{i}(1)-z_{i}(0)=\left(\mathbf{p}_{i}^{1}-\mathbf{p}_{i}^{2}\right)^{T}\left(\mathbf{f}_{i}(\mathbf{p}^{1})-\mathbf{f}_{i}(\mathbf{p}^{2})\right).\label{sm:ft}
\end{equation}
On the other hand, $z_{i}^{\prime}(\theta)$ can be calculated as
\begin{align}
z_{i}^{\prime}(\theta) & =\left(\mathbf{p}_{i}^{1}-\mathbf{p}_{i}^{2}\right)^{T}\sum_{j=0}^{M}\nabla_{\mathbf{p}_{j}}\mathbf{f}_{i}(\mathbf{p}_{\theta})\left(\mathbf{p}_{j}^{1}-\mathbf{p}_{j}^{2}\right)\nonumber \\
 & =\mathbf{d}_{i}^{T}\sum_{j=0}^{M}\nabla_{\mathbf{p}_{j}}\mathbf{f}_{i}(\mathbf{p}_{\theta})\mathbf{d}_{j}\nonumber \\
 & \geq\mathbf{d}_{i}^{T}\nabla_{\mathbf{p}_{i}}\mathbf{f}_{i}(\mathbf{p}_{\theta})\mathbf{d}_{i}-\sum_{j\neq i}\left|\mathbf{d}_{i}^{T}\nabla_{\mathbf{p}_{j}}\mathbf{f}_{i}(\mathbf{p}_{\theta})\mathbf{d}_{j}\right|\label{sm:la}
\end{align}
where $\mathbf{d}_{j}\triangleq\mathbf{p}_{j}^{1}-\mathbf{p}_{j}^{2}$
and $\mathbf{p}_{\theta}\triangleq\theta\mathbf{p}^{1}+(1-\theta)\mathbf{p}^{2}\in\mathcal{\mathcal{S}}$
since $\mathcal{\mathcal{S}}$ is a convex set.

It is not difficult to verify that $\nabla_{\mathbf{p}_{j}}\mathbf{f}_{i}(\mathbf{p})=-\nabla_{\mathbf{p}_{i}\mathbf{p}_{j}}^{2}R_{i}(\mathbf{p})$
is an $N\times N$ diagonal matrix with diagonal elements 
\begin{multline}
h_{ii}(n)h_{ji}(n)\left(\sigma_{i}(n)+\sum_{l=0}^{M}h_{li}(n)p_{l}(n)\right)^{-2}\leq\\
h_{ii}(n)h_{ji}(n)\sigma_{i}^{-2}(n)\triangleq\beta_{ij}(n)\label{sm:bt}
\end{multline}
and $\nabla_{\mathbf{p}_{i}}\mathbf{f}_{i}(\mathbf{p})=-\nabla_{\mathbf{p}_{i}}^{2}R_{i}(\mathbf{p})$
is also an $N\times N$ diagonal matrix with diagonal elements 
\begin{multline}
h_{ii}^{2}(n)\left(\sigma_{i}(n)+\sum_{l=0}^{M}h_{li}(n)p_{l}(n)\right)^{-2}\geq\\
h_{ii}^{2}(n)\left(\sigma_{i}(n)+\sum_{l=0}^{M}h_{li}(n)p_{l,n}^{\mathrm{max}}\right)^{-2}\triangleq\alpha_{i}(n)\label{sm:af}
\end{multline}
where $p_{l,n}^{\mathrm{max}}\triangleq\min\left\{ p_{l}^{\mathrm{sum}},p_{l,n}^{\mathrm{peak}}\right\} $.
Let $\alpha_{i}^{\mathrm{min}}\triangleq\min_{n}\{\alpha_{i}(n)\}$
and $\beta_{ij}^{\mathrm{max}}\triangleq\max_{n}\{\beta_{ij}(n)\}$.
Then, we have $\mathbf{d}_{i}^{T}\nabla_{\mathbf{p}_{i}}\mathbf{f}_{i}(\mathbf{p}_{\theta})\mathbf{d}_{i}\geq\alpha_{i}^{\mathrm{min}}\left\Vert \mathbf{d}_{i}\right\Vert ^{2}$
and 
\begin{equation}
\left|\mathbf{d}_{i}^{T}\nabla_{\mathbf{p}_{j}}\mathbf{f}_{i}(\mathbf{p}_{\theta})\mathbf{d}_{j}\right|\leq\left\Vert \mathbf{d}_{i}\right\Vert \left\Vert \nabla_{\mathbf{p}_{j}}\mathbf{f}_{i}(\mathbf{p}_{\theta})\mathbf{d}_{j}\right\Vert \leq\beta_{ij}^{\mathrm{max}}\left\Vert \mathbf{d}_{i}\right\Vert \left\Vert \mathbf{d}_{j}\right\Vert .
\end{equation}

Consequently, we obtain 
\begin{align}
z_{i}^{\prime}(\theta) & \geq\alpha_{i}^{\mathrm{min}}\left\Vert \mathbf{d}_{i}\right\Vert ^{2}-\sum_{j\neq i}\beta_{ij}^{\mathrm{max}}\left\Vert \mathbf{d}_{i}\right\Vert \left\Vert \mathbf{d}_{j}\right\Vert \nonumber \\
 & =\alpha_{i}^{\mathrm{min}}s_{i}^{2}-\sum_{j\neq i}\beta_{ij}^{\mathrm{max}}s_{j}s_{i}=s_{i}\left[\mathbf{\Psi}\mathbf{s}\right]_{i}\label{sm:lb}
\end{align}
where $s_{j}\triangleq\left\Vert \mathbf{d}_{j}\right\Vert $, $\mathbf{s}\triangleq\left(s_{j}\right)_{j=0}^{M}$,
and $\mathbf{\Psi}$ is defined as in (\ref{vr:ms}). Combining (\ref{sm:ft})
and (\ref{sm:lb}), we have
\begin{equation}
\left(\mathbf{p}_{i}^{1}-\mathbf{p}_{i}^{2}\right)^{T}\left(\mathbf{f}_{i}(\mathbf{p}^{1})-\mathbf{f}_{i}(\mathbf{p}^{2})\right)\geq s_{i}\left[\mathbf{\Psi}\mathbf{s}\right]_{i}>0\label{sm:ff}
\end{equation}
where the last inequality follows from Lemma \ref{lem:pmx} in Appendix
\ref{subsec:vi}, since $\mathbf{\Psi}$ is a P-matrix. Therefore,
there always exists a constant $c_{\mathrm{up}}$ such that $\max_{i}\left(\mathbf{p}_{i}^{1}-\mathbf{p}_{i}^{2}\right)^{T}\left(\mathbf{f}_{i}(\mathbf{p}^{1})-\mathbf{f}_{i}(\mathbf{p}^{2})\right)\geq c_{\mathrm{up}}\left\Vert \mathbf{p}^{1}-\mathbf{p}^{2}\right\Vert ^{2}$,
which proves $\mathbf{f}(\mathbf{p})$ is a uniformly P function.

\subsection{Proof of Proposition \ref{pro:bes}\label{subsec:bes}}

From Lemma \ref{lem:uvi}, $\mathbf{p}^{\star}$ is a NE of $\mathcal{G}_{\boldsymbol{\mu}}$
if and only if 
\begin{equation}
\left(\mathbf{p}_{i}-\mathbf{p}_{i}^{\star}\right)^{T}\mathbf{f}_{\boldsymbol{\mu},i}(\mathbf{p}_{i}^{\star},\mathbf{p}_{-i}^{\star})\geq0,\quad\forall\mathbf{p}_{i}\in\mathcal{S}_{i}^{\mathrm{pow}}.\label{da:nc}
\end{equation}
From the first-order optimality condition, the best response $\mathbf{p}_{i}^{t+1}$
is a solution to (\ref{vr:gu}) with $\mathbf{p}_{-i}=\mathbf{p}_{-i}^{t}$
if and only if 
\begin{equation}
\left(\mathbf{p}_{i}-\mathbf{p}_{i}^{t+1}\right)^{T}\mathbf{f}_{\boldsymbol{\mu},i}(\mathbf{p}_{i},\mathbf{p}_{-i}^{t})\geq0,\quad\forall\mathbf{p}_{i}\in\mathcal{S}_{i}^{\mathrm{pow}}.\label{da:oc}
\end{equation}
Adding (\ref{da:nc}) with $\mathbf{p}_{i}=\mathbf{p}_{i}^{t+1}$
and (\ref{da:oc}) with $\mathbf{p}_{i}=\mathbf{p}_{i}^{\star}$,
we have 
\begin{align}
\left(\mathbf{p}_{i}^{t+1}-\mathbf{p}_{i}^{\star}\right)^{T}\left(\mathbf{f}_{\boldsymbol{\mu},i}(\mathbf{p}_{i}^{\star},\mathbf{p}_{-i}^{\star})-\mathbf{f}_{\boldsymbol{\mu},i}(\mathbf{p}_{i}^{t+1},\mathbf{p}_{-i}^{t})\right) & =\nonumber \\
\left(\mathbf{p}_{i}^{t+1}-\mathbf{p}_{i}^{\star}\right)^{T}\left(\mathbf{f}_{i}(\mathbf{p}_{i}^{\star},\mathbf{p}_{-i}^{\star})-\mathbf{f}_{i}(\mathbf{p}_{i}^{t+1},\mathbf{p}_{-i}^{t})\right) & \geq0\label{da:of}
\end{align}
where the equality follows from the definition of $\mathbf{f}_{\boldsymbol{\mu},i}$
in (\ref{dm:fx}).

Let $\mathbf{p}_{\theta}\triangleq\theta\left(\mathbf{p}_{i}^{\star},\mathbf{p}_{-i}^{\star}\right)+(1-\theta)\left(\mathbf{p}_{i}^{t+1},\mathbf{p}_{-i}^{t}\right)$.
It then follows from (\ref{sm:fq})-(\ref{sm:la}) and (\ref{sm:lb})
that for some $\theta\in[0,1]$ 
\begin{align}
0\leq & \left(\mathbf{p}_{i}^{t+1}-\mathbf{p}_{i}^{\star}\right)^{T}\left(\nabla_{\mathbf{p}_{i}}\mathbf{f}_{i}(\mathbf{p}_{\theta})(\mathbf{p}_{i}^{\star}-\mathbf{p}_{i}^{t+1})\right.\nonumber \\
 & +\sum_{j\neq i}\nabla_{\mathbf{p}_{j}}\mathbf{f}_{i}(\mathbf{p}_{\theta})\left.(\mathbf{p}_{j}^{\star}-\mathbf{p}_{j}^{t})\right)\nonumber \\
= & -(\mathbf{d}_{i}^{t+1})^{T}\nabla_{\mathbf{p}_{i}}\mathbf{f}_{i}(\mathbf{p}_{\theta})\mathbf{d}_{i}^{t+1}-\sum_{j\neq i}(\mathbf{d}_{i}^{t+1})^{T}\nabla_{\mathbf{p}_{j}}\mathbf{f}_{i}(\mathbf{p}_{\theta})\mathbf{d}_{j}^{t}\nonumber \\
\leq & -\alpha_{i}^{\mathrm{min}}\left\Vert \mathbf{d}_{i}^{t+1}\right\Vert ^{2}+\sum_{j\neq i}\beta_{ij}^{\mathrm{max}}\left\Vert \mathbf{d}_{i}^{t+1}\right\Vert \left\Vert \mathbf{d}_{j}^{t}\right\Vert .\label{da:ab}
\end{align}
where $\mathbf{d}_{i}^{t}\triangleq\mathbf{p}_{i}^{t}-\mathbf{p}_{i}^{\star}$,
and $\alpha_{i}^{\mathrm{min}}$ and $\beta_{ij}^{\mathrm{max}}$
are defined after (\ref{sm:af}). Introducing $s_{i}^{t}\triangleq\left\Vert \mathbf{d}_{i}^{t}\right\Vert $
and $\mathbf{s}^{t}\triangleq(s_{i}^{t})_{i=0}^{M}$, we obtain $s_{i}^{t+1}\leq\frac{1}{\alpha_{i}^{\mathrm{min}}}\sum_{j\neq i}\beta_{ij}^{\mathrm{max}}s_{j}^{t}$,
which, from the definition of $\mathbf{\Phi}$ in (\ref{vr:mf}),
implies $\mathbf{s}^{t+1}\leq\mathbf{\Phi}\mathbf{s}^{t}$. Therefore,
the error sequence $\left\{ \mathbf{s}^{t}\right\} $ converges to
zero if $\rho\left(\mathbf{\Phi}\right)<1$, which, from Lemma \ref{lem:sfe},
is equivalent to $\mathbf{\Psi}$ being a P-matrix.

\subsection{Proof of Lemma \ref{lem:plm}\label{subsec:plm}}

The optimal $p_{i}^{\star}(n)$ is obtained by taking the derivative
of the objective in (\ref{da:pp}), leading to 
\[
\frac{h_{00}(n)}{h_{00}(n)p_{0}(n)+I_{0,n}(\mathbf{p}_{-0})}-cp_{0}(n)+cp_{0}^{k}(n)+\mu_{n}\tilde{h}_{00}(n)=\lambda_{0}
\]
\[
\frac{h_{ii}(n)}{h_{ii}(n)p_{i}(n)+I_{i,n}(\mathbf{p}_{-i})}-cp_{i}(n)+cp_{i}^{k}(n)-\mu_{n}h_{i0}(n)=\lambda_{i}
\]
for $i=1,\ldots,M$. It is easily seen that $\lambda_{i}$ is monotonically
nonincreasing in $p_{i}(n)$. Since the projection does not change
the monotonicity, Lemma \ref{lem:plm} is thus proved.

\subsection{Proof of Proposition \ref{pro:stp}\label{subsec:stp}}

Since $\mathbf{f}_{\mathbf{p}^{u}}(\mathbf{p})=\left(-\nabla_{\mathbf{p}_{i}}R_{i}(\mathbf{p})+\mathbf{b}_{i}(\mathbf{p}^{u})\right)_{i=0}^{M}$,
if $\mathbf{p}^{\star}=\mathrm{VE}(\mathbf{p}^{\star})$, it follows
that 
\begin{align*}
(\mathbf{p}-\mathbf{p}^{\star})^{T}\mathbf{f}_{\mathbf{p}^{\star}}(\mathbf{p}^{\star}) & =\sum_{i=0}^{M}(\mathbf{p}_{i}-\mathbf{p}_{i}^{\star})^{T}\left(-\nabla_{\mathbf{p}_{i}}R_{i}(\mathbf{p}^{\star})+\mathbf{b}_{i}(\mathbf{p}^{\star})\right)\\
 & =\sum_{i=0}^{M}(\mathbf{p}_{i}-\mathbf{p}_{i}^{\star})^{T}\left(-\sum_{j=0}^{M}\nabla_{\mathbf{p}_{i}}R_{j}(\mathbf{p}^{\star})\right)\\
 & =-\sum_{j=0}^{M}(\mathbf{p}-\mathbf{p}^{\star})^{T}\nabla_{\mathbf{p}}R_{j}(\mathbf{p}^{\star})\geq0
\end{align*}
which is equivalent to the first-order optimality condition of $\mathcal{P}$,
implying $\mathbf{p}^{\star}$ is a stationary point of $\mathcal{P}$.
Conversely, if $\mathbf{p}^{\star}$ is a stationary point of $\mathcal{P}$,
we also obtain $\mathbf{p}^{\star}=\mathrm{VE}(\mathbf{p}^{\star})$.

\subsection{Proof of Theorem \ref{thm:gcv}\label{subsec:gcv}}

Consider two distinct points $\mathbf{p}^{1},\mathbf{p}^{2}\in\mathcal{S}$,
and let $\mathbf{z}^{1}=\mathrm{VE}(\mathbf{p}^{1})$ and $\mathbf{z}^{2}=\mathrm{VE}(\mathbf{p}^{2})$.
Then, we have for $\forall\mathbf{p}\in\mathcal{S}$, $\forall i$
\begin{align}
(\mathbf{p}_{i}-\mathbf{z}_{i}^{1})^{T}\left(\mathbf{f}_{i}(\mathbf{z}^{1})+\mathbf{b}_{i}(\mathbf{p}^{1})\right) & \geq0\label{gv:vea}\\
(\mathbf{p}_{i}-\mathbf{z}_{i}^{2})^{T}\left(\mathbf{f}_{i}(\mathbf{z}^{2})+\mathbf{b}_{i}(\mathbf{p}^{2})\right) & \geq0.\label{gv:veb}
\end{align}
By adding (\ref{gv:vea}) with $\mathbf{p}_{i}=\mathbf{z}_{i}^{2}$
to (\ref{gv:veb}) with $\mathbf{p}_{i}=\mathbf{z}_{i}^{1}$, we obtain
$(\mathbf{z}_{i}^{1}-\mathbf{z}_{i}^{2})^{T}(\mathbf{f}_{i}(\mathbf{z}^{2})-\mathbf{f}_{i}(\mathbf{z}^{1})+\mathbf{b}_{i}(\mathbf{p}^{2})-\mathbf{b}_{i}(\mathbf{p}^{1}))\geq0$
or equivalently $(\mathbf{z}_{i}^{1}-\mathbf{z}_{i}^{2})^{T}\left(\mathbf{b}_{i}(\mathbf{p}^{2})-\mathbf{b}_{i}(\mathbf{p}^{1})\right)\geq(\mathbf{z}_{i}^{1}-\mathbf{z}_{i}^{2})^{T}\left(\mathbf{f}_{i}(\mathbf{z}^{1})-\mathbf{f}_{i}(\mathbf{z}^{2})\right)$.
It follows from (\ref{sm:ff}) that 
\begin{equation}
(\mathbf{z}_{i}^{1}-\mathbf{z}_{i}^{2})^{T}\left(\mathbf{f}_{i}(\mathbf{z}^{1})-\mathbf{f}_{i}(\mathbf{z}^{2})\right)\geq s_{i}\left[\mathbf{\Psi}\mathbf{s}\right]_{i}\label{gv:zf}
\end{equation}
where $s_{i}\triangleq\left\Vert \mathbf{z}_{i}^{1}-\mathbf{z}_{i}^{2}\right\Vert $
and $\mathbf{s}\triangleq(s_{i})_{i=0}^{M}$.

Using the mean value theorem as in (\ref{sm:la}), we have for some
$\theta\in[0,1]$ and $\mathbf{p}_{\theta}\triangleq\theta\mathbf{p}^{2}+(1-\theta)\mathbf{p}^{1}\in\mathcal{\mathcal{S}}$
\begin{align}
 & (\mathbf{z}_{i}^{1}-\mathbf{z}_{i}^{2})^{T}\left(\mathbf{b}_{i}(\mathbf{p}^{2})-\mathbf{b}_{i}(\mathbf{p}^{1})\right)\nonumber \\
 & =\left(\mathbf{z}_{i}^{1}-\mathbf{z}_{i}^{2}\right)^{T}\sum_{j=0}^{M}\nabla_{\mathbf{p}_{j}}\mathbf{b}_{i}(\mathbf{p}_{\theta})(\mathbf{p}_{j}^{2}-\mathbf{p}_{j}^{1})\nonumber \\
 & \leq\sum_{j=0}^{M}\left\Vert \mathbf{z}_{i}^{1}-\mathbf{z}_{i}^{2}\right\Vert \left\Vert \nabla_{\mathbf{p}_{j}}\mathbf{b}_{i}(\mathbf{p}_{\theta})(\mathbf{p}_{j}^{2}-\mathbf{p}_{j}^{1})\right\Vert \nonumber \\
 & \leq\left\Vert \mathbf{z}_{i}^{1}-\mathbf{z}_{i}^{2}\right\Vert \sum_{j=0}^{M}\left\Vert \mathbf{p}_{j}^{2}-\mathbf{p}_{j}^{1}\right\Vert \sup_{\mathbf{p}_{\theta}}\left\Vert \nabla_{\mathbf{p}_{j}}\mathbf{b}_{i}(\mathbf{p}_{\theta})\right\Vert _{2}.\label{gv:zb}
\end{align}
Although complicated, it can be verified that for $\forall i,j$ 
\begin{equation}
\left[\mathbf{\Upsilon}\right]_{ij}\geq\sup_{\mathbf{p}_{\theta}}\left\Vert \nabla_{\mathbf{p}_{j}}\mathbf{b}_{i}(\mathbf{p}_{\theta})\right\Vert _{2}.\label{gv:fb}
\end{equation}
Let $\mathbf{e}\triangleq(e_{j})_{j=0}^{M}$ with $e_{j}\triangleq\left\Vert \mathbf{p}_{j}^{1}-\mathbf{p}_{j}^{2}\right\Vert $.
It follows from (\ref{gv:zf})-(\ref{gv:fb}) that $\sum_{j=0}^{M}\left[\mathbf{\Upsilon}\right]_{ij}e_{j}=\left[\mathbf{\Upsilon}\mathbf{e}\right]_{i}\geq\left[\mathbf{\Psi}\mathbf{s}\right]_{i}$,
which leads to $\mathbf{\Psi}\mathbf{s}\leq\mathbf{\Upsilon}\mathbf{e}$.
Since $\left\Vert \mathbf{s}\right\Vert =\left\Vert \mathbf{z}^{1}-\mathbf{z}^{2}\right\Vert $
and $\left\Vert \mathbf{e}\right\Vert =\left\Vert \mathbf{p}^{1}-\mathbf{p}^{2}\right\Vert $,
$\mathbf{z}=\mathrm{VE}(\mathbf{p})$ is a contraction mapping if
$\rho(\mathbf{\Psi}^{-1}\mathbf{\Upsilon})<1$. To guarantee the convergence
of Algorithms 2 and 3 as well as the invertibility of $\mathbf{\Psi}$,
we also require $\mathbf{\Psi}\succ\mathbf{0}$.

\subsection{Proof of Proposition \ref{pro:cpu}\label{subsec:cpu}}

Similar to Proposition \ref{pro:stp}, given $\mathbf{q}^{v}$, if
$\mathbf{p}^{u+1}=\mathrm{VE}(\mathbf{p}^{u},\mathbf{q}^{v})$ converges
to $\mathbf{p}_{\mathbf{q}^{v}}^{\star}$, it must be a stationary
point of $\mathcal{P}_{\mathbf{q}^{v}}$. Next, we prove the convergence
of $\mathbf{p}^{u+1}=\mathrm{VE}(\mathbf{p}^{u},\mathbf{q}^{v})$.

$\mathrm{VE}(\mathbf{p}^{u},\mathbf{q}^{v})$ is the solution of $\mathrm{VI}(\mathcal{S},\mathbf{f}_{\mathbf{p}^{u},\mathbf{q}^{v}})$
with $\mathbf{f}_{\mathbf{p}^{u}}(\mathbf{p})\triangleq\mathbf{f}(\mathbf{p})+\mathbf{b}(\mathbf{p}^{u})+\tau(\mathbf{p}-\mathbf{q}^{v})$.
Thus, given $\mathbf{z}^{1}=\mathrm{VE}(\mathbf{p}^{1},\mathbf{q}^{v})$
and $\mathbf{z}^{2}=\mathrm{VE}(\mathbf{p}^{2},\mathbf{q}^{v})$,
we have for $\forall\mathbf{p}\in\mathcal{S}$ and $\forall i$ 
\begin{align*}
(\mathbf{p}_{i}-\mathbf{z}_{i}^{1})^{T}\left(\mathbf{f}_{i}(\mathbf{z}^{1})+\mathbf{b}_{i}(\mathbf{p}^{1})+\tau(\mathbf{z}_{i}^{1}-\mathbf{q}_{i}^{v})\right) & \geq0\\
(\mathbf{p}_{i}-\mathbf{z}_{i}^{2})^{T}\left(\mathbf{f}_{i}(\mathbf{z}^{2})+\mathbf{b}_{i}(\mathbf{p}^{2})+\tau(\mathbf{z}_{i}^{2}-\mathbf{q}_{i}^{v})\right) & \geq0
\end{align*}
which leads to 
\begin{align*}
 & (\mathbf{z}_{i}^{1}-\mathbf{z}_{i}^{2})^{T}\left(\mathbf{b}_{i}(\mathbf{p}^{2})-\mathbf{b}_{i}(\mathbf{p}^{1})\right)\\
 & \geq(\mathbf{z}_{i}^{1}-\mathbf{z}_{i}^{2})^{T}\left(\mathbf{f}_{i}(\mathbf{z}^{1})-\mathbf{f}_{i}(\mathbf{z}^{2})\right)+\tau\left\Vert \mathbf{z}_{i}^{1}-\mathbf{z}_{i}^{2}\right\Vert ^{2}\\
 & \geq s_{i}\left[\mathbf{\Psi}\mathbf{s}\right]_{i}+\tau s_{i}^{2}
\end{align*}
where the last equality follows from (\ref{sm:ff}) with $s_{i}\triangleq\left\Vert \mathbf{z}_{i}^{1}-\mathbf{z}_{i}^{2}\right\Vert $
and $\mathbf{s}\triangleq(s_{i})_{i=0}^{M}$. Then, using (\ref{gv:zb})
and (\ref{gv:fb}), we have $\left[\mathbf{\Upsilon}\mathbf{e}\right]_{i}\geq\left[\mathbf{\Psi}\mathbf{s}\right]_{i}+\tau s_{i}$,
where $e_{i}\triangleq\left\Vert \mathbf{p}_{i}^{1}-\mathbf{p}_{i}^{2}\right\Vert $
and $\mathbf{e}\triangleq(e_{i})_{i=0}^{M}$, which leads to $(\tau\mathbf{I}+\mathbf{\Psi})\mathbf{s}\leq\mathbf{\Upsilon}\mathbf{e}$.
Thus, $\mathbf{z}=\mathrm{VE}(\mathbf{p},\mathbf{q}^{v})$ is a contraction
mapping if $\rho((\tau\mathbf{I}+\mathbf{\Psi})^{-1}\mathbf{\Upsilon})<1$
or equivalently (from Lemma \ref{lem:kmx}) $\tau\mathbf{I}+\mathbf{\Psi}-\mathbf{\Upsilon}$
is a P-matrix, which is implied by $\tau\mathbf{I}+\mathbf{\Psi}-\mathbf{\Upsilon}\succ\mathbf{0}$
and achieved by setting $\tau>\left|\lambda_{\mathrm{min}}\left(\mathbf{\Psi}-\mathbf{\Upsilon}\right)\right|$.
Meanwhile, for the convergence of Algorithms 2 and 3, we shall have
$\tau\mathbf{I}+\mathbf{\Psi}\succ\mathbf{0}$, which is implied by
the strict diagonal dominance, i.e., $\tau+\left[\mathbf{\Psi}\right]_{ii}>\sum_{j\neq i}\left[\left|\mathbf{\Psi}\right|\right]_{ij}$,
$\forall i$ and $\tau+\left[\mathbf{\Psi}\right]_{jj}>\sum_{i}\left[\left|\mathbf{\Psi}\right|\right]_{ij}$,
$\forall j$. Therefore, we have $\tau>\max\{\max_{i}(\sum_{j\neq i}\left[\left|\mathbf{\Psi}\right|\right]_{ij}-\left[\mathbf{\Psi}\right]_{ii}),\max_{j}(\sum_{i}\left[\left|\mathbf{\Psi}\right|\right]_{ij}-\left[\mathbf{\Psi}\right]_{jj})\}$.

\subsection{Proof of Lemma \ref{lem:scv}\label{subsec:scv}}

Since $\nabla_{\mathbf{p}_{i}}F_{\tau}(\mathbf{p})=\mathbf{f}_{i}(\mathbf{p})+\mathbf{b}_{i}(\mathbf{p})+\tau(\mathbf{p}_{i}-\mathbf{q}_{i}^{v})$,
we have
\begin{multline*}
(\mathbf{p}^{1}-\mathbf{p}^{2})^{T}(\nabla_{\mathbf{p}}F_{\tau}(\mathbf{p}^{1})-\nabla_{\mathbf{p}}F_{\tau}(\mathbf{p}^{2}))=\sum_{i=0}^{M}\tau\left\Vert \mathbf{p}_{i}^{1}-\mathbf{p}_{i}^{2}\right\Vert ^{2}\\
+\sum_{i=0}^{M}(\mathbf{p}_{i}^{1}-\mathbf{p}_{i}^{2})^{T}\left(\mathbf{f}_{i}(\mathbf{p}^{1})-\mathbf{f}_{i}(\mathbf{p}^{2})+\mathbf{b}_{i}(\mathbf{p}^{1})-\mathbf{b}_{i}(\mathbf{p}^{2})\right).
\end{multline*}
With $\mathbf{d}_{i}\triangleq\mathbf{p}_{i}^{1}-\mathbf{p}_{i}^{2}$
and $s_{i}\triangleq\left\Vert \mathbf{d}_{i}\right\Vert $ and from
(\ref{sm:fq})-(\ref{sm:la}), we have for some $\theta\in[0,1]$
and $\mathbf{p}_{\theta}\triangleq\theta\mathbf{p}^{1}+(1-\theta)\mathbf{p}^{2}\in\mathcal{\mathcal{S}}$
\begin{alignat*}{1}
 & (\mathbf{p}_{i}^{1}-\mathbf{p}_{i}^{2})^{T}(\mathbf{b}_{i}(\mathbf{p}^{1})-\mathbf{b}_{i}(\mathbf{p}^{2})\\
 & \geq\mathbf{d}_{i}^{T}\nabla_{\mathbf{p}_{i}}\mathbf{b}_{i}(\mathbf{p}_{\theta})\mathbf{d}_{i}-\sum_{j\neq i}\left|\mathbf{d}_{i}^{T}\nabla_{\mathbf{p}_{j}}\mathbf{b}_{i}(\mathbf{p}_{\theta})\mathbf{d}_{j}\right|\\
 & \geq s_{i}^{2}\inf_{\mathbf{p}_{\theta}}\lambda_{\mathrm{min}}\left(\nabla_{\mathbf{p}_{i}}\mathbf{b}_{i}(\mathbf{p}_{\theta})\right)-\sum_{j\neq i}s_{i}s_{j}\sup_{\mathbf{p}_{\theta}}\left\Vert \nabla_{\mathbf{p}_{j}}\mathbf{b}_{i}(\mathbf{p}_{\theta})\right\Vert _{2}.
\end{alignat*}
It can be verified that $\inf_{\mathbf{p}_{\theta}}\lambda_{\mathrm{min}}\left(\nabla_{\mathbf{p}_{i}}\mathbf{b}_{i}(\mathbf{p}_{\theta})\right)\geq-\left[\mathbf{\Upsilon}\right]_{ii}$
and $\sup_{\mathbf{p}_{\theta}}\left\Vert \nabla_{\mathbf{p}_{j}}\mathbf{b}_{i}(\mathbf{p}_{\theta})\right\Vert \leq\left[\mathbf{\Upsilon}\right]_{ij}$
for $j\neq i$. So, we have $(\mathbf{p}_{i}^{1}-\mathbf{p}_{i}^{2})^{T}\left(\mathbf{b}_{i}(\mathbf{p}^{1})-\mathbf{b}_{i}(\mathbf{p}^{2})\right)\geq-s_{i}\left[\mathbf{\Upsilon}\mathbf{s}\right]_{i}$.
Meanwhile, from (\ref{sm:ff}), we also have $(\mathbf{p}_{i}^{1}-\mathbf{p}_{i}^{2})^{T}\left(\mathbf{f}_{i}(\mathbf{p}^{1})-\mathbf{f}_{i}(\mathbf{p}^{2})\right)\geq s_{i}\left[\mathbf{\Psi}\mathbf{s}\right]_{i}$.
Consequently, we obtain 
\begin{align*}
 & (\mathbf{p}^{1}-\mathbf{p}^{2})^{T}\left(\nabla_{\mathbf{p}}F_{\tau}(\mathbf{p}^{1})-\nabla_{\mathbf{p}}F_{\tau}(\mathbf{p}^{2})\right)\\
 & \geq\sum_{i=0}^{M}\left(\tau s_{i}^{2}+s_{i}\left[\mathbf{\Psi}\mathbf{s}\right]_{i}-s_{i}\left[\mathbf{\Upsilon}\mathbf{s}\right]_{i}\right)\\
 & =\mathbf{s}^{T}\left(\tau\mathbf{I}+\mathbf{\Psi}-\mathbf{\Upsilon}\right)\mathbf{s}\geq\left\Vert \mathbf{s}\right\Vert ^{2}\lambda_{\mathrm{min}}\left(\tau\mathbf{I}+\mathbf{\Psi}-\mathbf{\Upsilon}\right).
\end{align*}
Therefore, $F_{\tau}(\mathbf{p})$ is strongly convex if $\tau\mathbf{I}+\mathbf{\Psi}-\mathbf{\Upsilon}\succ\mathbf{0}$,
which is implied by $\tau>\left|\lambda_{\mathrm{min}}\left(\mathbf{\Psi}-\mathbf{\Upsilon}\right)\right|$.
The strong convexity constant $L_{\mathrm{sc}}$ in Lemma \ref{lem:scv}
is then given by $\tau+\lambda_{\mathrm{min}}\left(\mathbf{\Psi}-\mathbf{\Upsilon}\right)$.

\subsection{Proof of Lemma \ref{lem:lip}\label{subsec:lip}}

According to the mean value theory, we have for some $\theta\in[0,1]$
and $\mathbf{p}_{\theta}=\theta\mathbf{p}^{1}-(1-\theta)\mathbf{p}^{2}\in\mathcal{S}$
$\left\Vert \nabla_{\mathbf{p}}F(\mathbf{p}^{1})-\nabla_{\mathbf{p}}F(\mathbf{p}^{2})\right\Vert \leq\left\Vert \nabla_{\mathbf{p}}^{2}F(\mathbf{p}_{\theta})\right\Vert _{2}\left\Vert \mathbf{p}^{1}-\mathbf{p}^{2}\right\Vert $.
Since $\nabla_{\mathbf{p}}^{2}F(\mathbf{p})=\left[\nabla_{\mathbf{p}_{i}\mathbf{p}_{j}}F(\mathbf{p})\right]_{i,j=0}^{M}=\left[\nabla_{\mathbf{p}_{j}}\mathbf{f}_{i}(\mathbf{p})\right]_{i,j=0}^{M}+\left[\nabla_{\mathbf{p}_{j}}\mathbf{b}_{i}(\mathbf{p})\right]_{i,j=0}^{M}$,
it follows that $\left\Vert \nabla_{\mathbf{p}}^{2}F(\mathbf{p})\right\Vert _{2}\leq\sigma_{\max}\left(\left[\nabla_{\mathbf{p}_{j}}\mathbf{f}_{i}(\mathbf{p})\right]_{i,j=0}^{M}\right)+\sigma_{\max}\left(\left[\nabla_{\mathbf{p}_{j}}\mathbf{b}_{i}(\mathbf{p})\right]_{i,j=0}^{M}\right)$.
Furthermore, we have $\sigma_{\max}\left(\left[\nabla_{\mathbf{p}_{j}}\mathbf{f}_{i}(\mathbf{p})\right]_{i,j=0}^{M}\right)\leq\sigma_{\max}\left(\left[\sigma_{\max}\left(\nabla_{\mathbf{p}_{j}}\mathbf{f}_{i}(\mathbf{p})\right)\right]_{i,j=0}^{M}\right)$.
It can be verified that $\sigma_{\max}\left(\nabla_{\mathbf{p}_{j}}\mathbf{f}_{i}(\mathbf{p})\right)\leq\left[\mathbf{\Xi}\right]_{ij}$,
implying $\sigma_{\max}\left(\left[\nabla_{\mathbf{p}_{j}}\mathbf{f}_{i}(\mathbf{p})\right]_{i,j=0}^{M}\right)\leq\sigma_{\max}\left(\mathbf{\Xi}\right)$.
Similarly, we can also obtain $\sigma_{\max}\left(\left[\nabla_{\mathbf{p}_{j}}\mathbf{b}_{i}(\mathbf{p})\right]_{i,j=0}^{M}\right)\leq\sigma_{\max}\left(\mathbf{\Upsilon}\right)$.
Therefore, we have $\left\Vert \nabla_{\mathbf{p}}^{2}F(\mathbf{p}_{\theta})\right\Vert _{2}\leq\sigma_{\max}\left(\mathbf{\Xi}\right)+\sigma_{\max}\left(\mathbf{\Upsilon}\right)$,
which is the Lipschitz constant.

\subsection{Proof of Theorem \ref{thm:pgv}\label{subsec:pgv}}

Let $\mathbf{p}_{\mathbf{q}^{v}}^{\star}=\mathrm{VE}(\mathbf{p}^{\star},\mathbf{q}^{v})$.
The strong convexity in Lemma \ref{lem:scv} leads to the following
useful results.
\begin{lem}
\label{lem:pqd} Given $\tau>\left|\lambda_{\mathrm{min}}\left(\mathbf{\Psi}-\mathbf{\Upsilon}\right)\right|$,
it follows that $\left\Vert \mathbf{p}_{\mathbf{q}^{1}}^{\star}-\mathbf{p}_{\mathbf{q}^{2}}^{\star}\right\Vert \leq\frac{\tau}{L_{\mathrm{sc}}}\left\Vert \mathbf{q}^{1}-\mathbf{q}^{2}\right\Vert $
and $(\mathbf{p}_{\mathbf{q}}^{\star}-\mathbf{q})^{T}\nabla_{\mathbf{p}}F(\mathbf{q})\leq-L_{\mathrm{sc}}\left\Vert \mathbf{p}_{\mathbf{q}}^{\star}-\mathbf{q}\right\Vert ^{2}$.
\end{lem}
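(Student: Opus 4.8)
The plan is to derive both inequalities directly from the fact, established in Lemma~\ref{lem:scv}, that $\mathbf{p}_{\mathbf{q}}^{\star}$ is the \emph{unique} minimizer over the convex set $\mathcal{S}$ of the strongly convex proximal objective $F_{\tau}(\mathbf{p})\triangleq F(\mathbf{p})+\frac{\tau}{2}\|\mathbf{p}-\mathbf{q}\|^{2}$. First I would record the two ingredients I need. (i) As a constrained minimizer of $F_{\tau}$, $\mathbf{p}_{\mathbf{q}}^{\star}$ is characterized by the first-order optimality (variational) condition $(\mathbf{p}-\mathbf{p}_{\mathbf{q}}^{\star})^{T}(\nabla_{\mathbf{p}}F(\mathbf{p}_{\mathbf{q}}^{\star})+\tau(\mathbf{p}_{\mathbf{q}}^{\star}-\mathbf{q}))\geq0$ for all $\mathbf{p}\in\mathcal{S}$. (ii) Stripping the quadratic term out of the strong-convexity estimate in the proof of Lemma~\ref{lem:scv} (equivalently, combining the intermediate bounds $s_{i}[\mathbf{\Psi}\mathbf{s}]_{i}$ and $-s_{i}[\mathbf{\Upsilon}\mathbf{s}]_{i}$ with $\nabla_{\mathbf{p}_{i}}F=\mathbf{f}_{i}+\mathbf{b}_{i}$) yields the monotonicity estimate $(\mathbf{p}^{1}-\mathbf{p}^{2})^{T}(\nabla_{\mathbf{p}}F(\mathbf{p}^{1})-\nabla_{\mathbf{p}}F(\mathbf{p}^{2}))\geq\lambda_{\mathrm{min}}(\mathbf{\Psi}-\mathbf{\Upsilon})\|\mathbf{p}^{1}-\mathbf{p}^{2}\|^{2}$ for all $\mathbf{p}^{1},\mathbf{p}^{2}\in\mathcal{S}$, where the modulus $\lambda_{\mathrm{min}}(\mathbf{\Psi}-\mathbf{\Upsilon})=L_{\mathrm{sc}}-\tau$ is in general negative.

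For the Lipschitz bound I would write condition (i) at $\mathbf{q}^{1}$ and at $\mathbf{q}^{2}$, test the former with $\mathbf{p}=\mathbf{p}_{\mathbf{q}^{2}}^{\star}$ and the latter with $\mathbf{p}=\mathbf{p}_{\mathbf{q}^{1}}^{\star}$, and add. Writing $\boldsymbol{\Delta}\triangleq\mathbf{p}_{\mathbf{q}^{1}}^{\star}-\mathbf{p}_{\mathbf{q}^{2}}^{\star}$, the cross terms collapse to $\boldsymbol{\Delta}^{T}(\nabla_{\mathbf{p}}F(\mathbf{p}_{\mathbf{q}^{1}}^{\star})-\nabla_{\mathbf{p}}F(\mathbf{p}_{\mathbf{q}^{2}}^{\star}))+\tau\|\boldsymbol{\Delta}\|^{2}-\tau\boldsymbol{\Delta}^{T}(\mathbf{q}^{1}-\mathbf{q}^{2})\leq0$. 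Lower-bounding the first term via (ii) gives $(\tau+\lambda_{\mathrm{min}}(\mathbf{\Psi}-\mathbf{\Upsilon}))\|\boldsymbol{\Delta}\|^{2}\leq\tau\boldsymbol{\Delta}^{T}(\mathbf{q}^{1}-\mathbf{q}^{2})$; applying Cauchy--Schwarz on the right and dividing by $L_{\mathrm{sc}}\|\boldsymbol{\Delta}\|$ (the case $\boldsymbol{\Delta}=\mathbf{0}$ being trivial) delivers $\|\boldsymbol{\Delta}\|\leq(\tau/L_{\mathrm{sc}})\|\mathbf{q}^{1}-\mathbf{q}^{2}\|$, which is the first assertion.

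For the descent inequality I would first note that each iterate $\mathbf{q}$ is feasible, since $\mathbf{q}^{v}$ is a convex combination of points of $\mathcal{S}$ and $\mathcal{S}$ is convex, so $\mathbf{p}=\mathbf{q}$ is admissible in (i). Substituting $\mathbf{p}=\mathbf{q}$ into (i) yields $(\mathbf{p}_{\mathbf{q}}^{\star}-\mathbf{q})^{T}\nabla_{\mathbf{p}}F(\mathbf{p}_{\mathbf{q}}^{\star})\leq-\tau\|\mathbf{p}_{\mathbf{q}}^{\star}-\mathbf{q}\|^{2}$. I would then pass from $\nabla_{\mathbf{p}}F(\mathbf{p}_{\mathbf{q}}^{\star})$ to $\nabla_{\mathbf{p}}F(\mathbf{q})$ by adding and subtracting, invoking (ii) with $\mathbf{p}^{1}=\mathbf{p}_{\mathbf{q}}^{\star}$ and $\mathbf{p}^{2}=\mathbf{q}$ to bound $(\mathbf{p}_{\mathbf{q}}^{\star}-\mathbf{q})^{T}(\nabla_{\mathbf{p}}F(\mathbf{p}_{\mathbf{q}}^{\star})-\nabla_{\mathbf{p}}F(\mathbf{q}))\geq\lambda_{\mathrm{min}}(\mathbf{\Psi}-\mathbf{\Upsilon})\|\mathbf{p}_{\mathbf{q}}^{\star}-\mathbf{q}\|^{2}$. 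Combining the two estimates collapses the constant to $-(\tau+\lambda_{\mathrm{min}}(\mathbf{\Psi}-\mathbf{\Upsilon}))=-L_{\mathrm{sc}}$, giving the second assertion.

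The VI manipulations are routine; the one place that needs genuine care is the sign bookkeeping of the modulus $\lambda_{\mathrm{min}}(\mathbf{\Psi}-\mathbf{\Upsilon})$, which is in general negative (this is precisely why the proximal regularizer $\tau$ is added in $\mathcal{G}_{\mathbf{p}^{u},\mathbf{q}^{v}}$). The argument only closes because the standing hypothesis $\tau>|\lambda_{\mathrm{min}}(\mathbf{\Psi}-\mathbf{\Upsilon})|$ forces $L_{\mathrm{sc}}=\tau+\lambda_{\mathrm{min}}(\mathbf{\Psi}-\mathbf{\Upsilon})>0$, which legitimizes both the division by $L_{\mathrm{sc}}$ in the Lipschitz step and the negativity of the final constant in the descent step; I would flag this positivity as the crux on which the whole estimate rests.
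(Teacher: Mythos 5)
Your proof is correct, and it is essentially the argument the paper intends: the paper states Lemma \ref{lem:pqd} without proof, attributing it to the strong convexity established in Lemma \ref{lem:scv}, and your derivation---the first-order variational characterization of $\mathbf{p}_{\mathbf{q}}^{\star}$ as the unique minimizer of $F(\mathbf{p})+\frac{\tau}{2}\left\Vert \mathbf{p}-\mathbf{q}\right\Vert ^{2}$ over $\mathcal{S}$, combined with the gradient-monotonicity modulus $\lambda_{\mathrm{min}}\left(\mathbf{\Psi}-\mathbf{\Upsilon}\right)=L_{\mathrm{sc}}-\tau$ extracted from the proof of Lemma \ref{lem:scv}---supplies precisely the omitted details. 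Both steps are sound: the Lipschitz bound (adding the two VI conditions tested at each other's solutions, applying Cauchy--Schwarz, and dividing by $L_{\mathrm{sc}}>0$) and the descent inequality (testing the VI at $\mathbf{p}=\mathbf{q}\in\mathcal{S}$, which is legitimate since the iterates $\mathbf{q}^{v}$ stay in the convex set $\mathcal{S}$, and then shifting the gradient from $\mathbf{p}_{\mathbf{q}}^{\star}$ to $\mathbf{q}$), with the sign bookkeeping around the possibly negative modulus handled correctly.
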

Using the descent lemma \cite[Prop. A.24]{Bertsekas99} (with $\mathbf{x}=\mathbf{q}^{v}$
and $\mathbf{y}=\kappa_{v}(\mathbf{p}_{\mathbf{q}^{v}}^{\star}-\mathbf{q}^{v})$),
we have 
\begin{multline*}
F(\mathbf{q}^{v+1})\leq F(\mathbf{q}^{v})+\kappa_{v}(\mathbf{p}_{\mathbf{q}^{v}}^{\star}-\mathbf{q}^{v})^{T}\triangledown_{\mathbf{p}}F\left(\mathbf{q}^{v}\right)\\
+\frac{L_{\mathrm{lip}}\kappa_{v}^{2}}{2}\left\Vert \mathbf{p}_{\mathbf{q}^{v}}^{\star}-\mathbf{q}^{v}\right\Vert ^{2}.
\end{multline*}
It follows from Lemma \ref{lem:pqd} that 
\[
F(\mathbf{q}^{v+1})\leq F(\mathbf{q}^{v})+\frac{1}{2}\left(L_{\mathrm{lip}}\kappa_{v}^{2}-2\kappa_{v}L_{\mathrm{sc}}\right)\left\Vert \mathbf{p}_{\mathbf{q}^{v}}^{\star}-\mathbf{q}^{v}\right\Vert ^{2}.
\]
Given $L_{\mathrm{lip}}\kappa_{v}^{2}-2\kappa_{v}L_{\mathrm{sc}}<0$,
we have $F(\mathbf{q}^{v+1})\leq F(\mathbf{q}^{v})$, indicating that
$\{F(\mathbf{q}^{v})\}$ converges. This implies that 
\[
\sum_{v=1}^{\infty}\kappa_{v}\left\Vert \mathbf{p}_{\mathbf{q}^{v}}^{\star}-\mathbf{q}^{v}\right\Vert ^{2}<+\infty.
\]
Since $\sum_{v=1}^{\infty}\kappa_{v}=\infty$, we have $\liminf_{v\rightarrow\infty}\left\Vert \mathbf{p}_{\mathbf{q}^{v}}^{\star}-\mathbf{q}^{v}\right\Vert =0$.
Given $\kappa_{v}<\frac{1}{1+\frac{\tau}{L_{\mathrm{sc}}}}=\frac{\tau+\lambda_{\mathrm{min}}\left(\mathbf{\Psi}-\mathbf{\Upsilon}\right)}{2\tau+\lambda_{\mathrm{min}}\left(\mathbf{\Psi}-\mathbf{\Upsilon}\right)}$,
we can further prove that $\lim_{v\rightarrow\infty}\left\Vert \mathbf{p}_{\mathbf{q}^{v}}^{\star}-\mathbf{q}^{v}\right\Vert =0$,
for which we refer to \cite{Razaviyayn14a} for details. From Proposition
\ref{pro:cpu}, the fixed point is a stationary point of $\mathcal{P}$.

\bibliographystyle{IEEEtran}
\bibliography{IEEEabrv,jhwang}

\end{document}